\numberwithin{equation}{section}
\newcommand{\diff}{\mathrm{d}}
\newcommand{\Hom}{\mathrm{Hom}}
\newcommand{\Spin}{\mathrm{Spin}}
\newcommand{\Cl}{C\ell}
\newcommand{\stab}{\mathfrak{stab}}
\newcommand{\fso}{\mathfrak{so}}
\newcommand{\fm}{\mathfrak{m}}
\newcommand{\fg}{\mathfrak{g}}
\newcommand{\fp}{\mathfrak{p}}
\newcommand{\fh}{\mathfrak{h}}
\newcommand{\ZZ}{\mathbb{Z}}
    \newcolumntype{P}[1]{>{\centering\arraybackslash}p{#1}}
    \newcolumntype{M}[1]{>{\centering\arraybackslash}m{#1}}
\newtheorem{theorem}{Theorem}[section]
\newtheorem*{theorem*}{Theorem}
\newtheorem{proposition}{Proposition}[section]
\newtheorem*{proposition*}{Proposition}
\newtheorem{definition}{Definition}[section]
\newtheorem{remark}{Remark}[section]
\newtheorem{lemma}{Lemma}[section]
\newtheorem{corollary}{Corollary}[section]
\DeclareMathAlphabet{\mathpzc}{OT1}{pzc}{m}{it}
\definecolor{blue-violet}{rgb}{0.54, 0.17, 0.89}
\definecolor{PineGreen}{cmyk}{0.92, 0, 0.59, 0.25}
\definecolor{YellowOrange}{cmyk}{0, 0.42, 1, 0}
\definecolor{orange}{rgb}{0.95, 0.5, 0.1}
\begin{document}

\begin{titlepage}
\begin{flushright}
\par\end{flushright}
\vskip 0.5cm
\begin{center}
\textbf{\Huge \bf Fermionic Spencer Cohomologies}\\
\vskip .2cm
\textbf{\Huge \bf of $D=11$ Supergravity}

\vskip 1cm 

\large {\bf C.~A.~Cremonini}$^{~a,}$\footnote{carlo.alberto.cremonini@gmail.com}, 
\large {\bf P.~A.~Grassi}$^{~b, c, d,}$\footnote{pietro.grassi@uniupo.it},
\large {\bf R.~Noris}$^{~e,}$\footnote{noris@fzu.cz},
\large {\bf L.~Ravera}$^{~c,f, g,}$\footnote{lucrezia.ravera@polito.it}
\large {\bf A.~Santi}$^{~h,}$\footnote{asanti.math@gmail.com}

\vskip .5cm {
\small
{$^{(a)}$ \it Arnold-Sommerfeld-Center for Theoretical Physics, Ludwig-Maximilians-Universit\"at M\"unchen, Theresienstr. 37, D-80333 Munich, Germany}\\
{$^{(b)}$ \it DiSIT, Universit\`a del Piemonte Orientale, Viale T. Michel 11, 15121 Alessandria, Italy}\\
{$^{(c)}$ \it INFN, Sezione di Torino, Via P. Giuria 1, 10125 Torino, Italy} \\
 {$^{(d)}$ \it Theoretical Physics Department,}  {\it CERN,}  {\it CH-1211 Geneva 23, Switzerland}\\
 {$^{(e)}$ \it CEICO, Institute of Physics of the Czech Academy of Sciences, \\
Na Slovance 2, 182 21 Prague 8, Czech Republic}\\
{$^{(f)}$ \it Politecnico di Torino, Corso Duca degli Abruzzi, 24, 10129 Torino, Italy}\\
{$^{(g)}$ \it Grupo de Investigaci\'{o}n en Física Te\'{o}rica, Universidad Cat\'{o}lica De La Sant\'{i}sima Concepci\'{o}n, Chile}\\
{$^{(h)}$ \it Universit\`a degli Studi di Roma Tor Vergata,
Via della Ricerca Scientifica 1, 00133 Roma, Italy}\\
}
\end{center}

\begin{abstract} 
We combine the theory of Cartan-Tanaka prolongations with the Molien-Weyl integral formula and Hilbert-Poincaré series to compute 
the Spencer cohomology groups of the $D=11$ 
Poincaré superalgebra $\mathfrak p$, relevant for superspace formulations of $11$-dimensional supergravity in terms of nonholonomic superstructures. This includes novel fermionic Spencer groups, providing with new cohomology classes of $\mathbb Z$-grading $1$ and form number $2$.
Using the Hilbert-Poincaré series and the Euler characteristic, we also explore Spencer cohomology contributions in higher form numbers. We then propose a new general definition of filtered deformations of graded Lie superalgebras along  first-order fermionic directions and investigate such deformations of $\mathfrak p$ that are maximally supersymmetric. In particular, we establish a no-go type theorem for maximally supersymmetric filtered subdeformations of $\mathfrak p$
along timelike (i.e., generic) first-order fermionic directions.
\end{abstract}

\vfill{}
\vspace{1.5cm}
\end{titlepage}

\setcounter{footnote}{0}
\tableofcontents

\section{Introduction}\label{sec1}

In the last few years, an interesting approach to investigate the bosonic backgrounds of supergravity theories  emerged \cite{Santi:2010kb,Santi:2011mc,Figueroa-OFarrill:2015rfh,Figueroa-OFarrill:2015tan,deMedeiros:2016srz,Figueroa-OFarrill:2016khp,deMedeiros:2018ooy,Santi:2019kpx}: in particular, the Killing superalgebras of such backgrounds, which are Lie superalgebras generated by the respective Killing spinors and Killing vectors, were put into correspondence with \emph{filtered deformations} of graded subalgebras of the Poincar\'e superalgebras.

For instance, the following result in the context of $D=11$ supergravity was proved:
\begin{theorem*}\!\!\!\!\cite{Figueroa-OFarrill:2015rfh}
    The Killing superalgebra of an 11-dimensional bosonic supergravity background is a filtered subdeformation of the $D=11$ Poincar\'e superalgebra.
\end{theorem*}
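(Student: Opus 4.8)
The plan is to realize the Killing superalgebra $\fk=\fk_{\bar0}\oplus\fk_{\bar1}$ of a bosonic background $(M,g,F)$ — where $\fk_{\bar1}$ is the space of Killing spinors, i.e.\ the parallel sections of the supercovariant connection $\mathcal D=\nabla+\Omega(F)$ on the spinor bundle $S\to M$ (with $\Omega(F)$ built algebraically from the flux), $\fk_{\bar0}$ is the Lie algebra of Killing vector fields preserving $F$, and the brackets are the algebraic Dirac current $\fk_{\bar1}\otimes\fk_{\bar1}\to\fk_{\bar0}$, the spinorial (Kosmann) Lie derivative $\fk_{\bar0}\otimes\fk_{\bar1}\to\fk_{\bar1}$, and the Lie bracket of vector fields on $\fk_{\bar0}$ — as a filtered Lie superalgebra whose associated graded is a graded subalgebra of the $D=11$ Poincar\'e superalgebra $\fp=\fp_{-2}\oplus\fp_{-1}\oplus\fp_{0}$, with $\fp_{-2}\cong\RR^{1,10}$, $\fp_{-1}\cong S$ the spinor module, and $\fp_{0}\cong\fso(1,10)$. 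As a preliminary step I would recall — or reprove, invoking known structural results on Killing superalgebras — that $\fk$ is a finite-dimensional Lie superalgebra: $\fk_{\bar1}$ is finite-dimensional because $\mathcal D$ is a connection, so a Killing spinor is determined by its value at any single point, and the brackets are well defined and satisfy the super Jacobi identities.

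Fix a point $o\in M$ and filter $\fk$ by order of vanishing at $o$:
\begin{equation*}
\fk=F^{-2}\fk\ \supseteq\ F^{-1}\fk\ \supseteq\ F^{0}\fk\ \supseteq\ F^{1}\fk=0,
\end{equation*}
with $F^{-1}\fk=\fk_{\bar1}\oplus\{X\in\fk_{\bar0}: X_o=0\}$ and $F^{0}\fk=\{X\in\fk_{\bar0}: X_o=0\}$ the isotropy subalgebra of $o$. Here $F^{1}\fk=0$ because a Killing vector field vanishing to second order at $o$ vanishes identically, and there is no nonzero fermionic piece in positive degree because a Killing spinor vanishing at $o$ is zero. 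One then checks that the brackets are compatible with the filtration, $[F^{p}\fk,F^{q}\fk]\subseteq F^{p+q}\fk$; the only points requiring an argument are that $F^{0}\fk$ is a subalgebra and that the Dirac current is algebraic in the spinors, so that its value at $o$ depends only on their values at $o$ — placing $[\fk_{\bar1},\fk_{\bar1}]$ into $F^{-2}\fk$ as required by $(-1)+(-1)=-2$. This exhibits $\fk$ as a filtered Lie superalgebra.

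Next I would compute the associated graded $\fg:=\mathrm{gr}\,\fk=\fg_{-2}\oplus\fg_{-1}\oplus\fg_{0}$ and identify it inside $\fp$. Evaluation at $o$ yields injections $\fg_{-2}=F^{-2}\fk/F^{-1}\fk\hookrightarrow T_oM\cong\RR^{1,10}\cong\fp_{-2}$ and $\fg_{-1}=F^{-1}\fk/F^{0}\fk\hookrightarrow S_o\cong\fp_{-1}$, while $X\mapsto(\nabla X)_o$, which is skew with respect to $g_o$ because $X$ is Killing, gives an injection $\fg_{0}=F^{0}\fk\hookrightarrow\fso(T_oM)\cong\fp_{0}$. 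The decisive point is that the induced brackets on $\fg$ are the restrictions of those of $\fp$: the brackets of $\fg_{0}$ with $\fg_{-1}$, with $\fg_{-2}$, and with itself are the linear isotropy representation, which is exactly the restriction of the spinor, vector and adjoint representations of $\fso(1,10)$; the induced map $\fg_{-1}\otimes\fg_{-1}\to\fg_{-2}$ is the symbol of the Dirac-current bracket and, since that current is the pointwise $\Gamma$-bilinear on spinors, it coincides with the Poincar\'e pairing $S\odot S\to\RR^{1,10}$; and all remaining brackets vanish for reasons of $\ZZ$-degree or parity. Hence $\fg$ is a graded subalgebra of $\fp$, i.e.\ $\fk$ is a filtered subdeformation of $\fp$. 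One may add that the deformation is in general non-trivial: the only component of the filtered bracket that can lower the filtration degree by less than its own degree is $\fg_{-1}\odot\fg_{-1}\to\fg_{0}$, sending $\epsilon\otimes\epsilon'$ to $(\nabla[\epsilon,\epsilon'])_o$ when the Dirac current vanishes at $o$, and this is generically non-zero.

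The main obstacle lies not in this (essentially formal) passage to the associated graded, but in the differential-geometric input underlying the statement that the brackets close: one must prove that the algebraic Dirac current of two Killing spinors is again a Killing vector field that preserves $F$, and that the brackets satisfy the super Jacobi identities. These are genuine $D=11$-supergravity statements, relying on the explicit form of the supercovariant connection $\mathcal D=\nabla+\Omega(F)$, on the curvature identity obtained by commuting supercovariant derivatives, and on the supergravity field equations and Bianchi identity for $F$; it is here that the background geometry enters irreducibly. Once this core is granted, the inclusion $\mathrm{gr}\,\fk\subseteq\fp$ follows, since the Dirac current — being algebraic in the spinors — contributes in degree $-2$ the undeformed Poincar\'e pairing, while the linear isotropy automatically takes values in $\fso(1,10)$.
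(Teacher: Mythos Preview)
The paper does not actually prove this theorem; it is quoted in the Introduction as a result from \cite{Figueroa-OFarrill:2015rfh} to motivate the cohomological computations that follow. So there is no ``paper's own proof'' to compare against.

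That said, your outline is essentially the argument of \cite{Figueroa-OFarrill:2015rfh}: filter $\fk$ by order of vanishing at a chosen point, check compatibility of the brackets with the filtration, and identify the associated graded with a graded subalgebra of $\fp$ via evaluation at $o$ (for degrees $-2,-1$) and the linear isotropy $X\mapsto(\nabla X)_o$ (for degree $0$). You are also right that the substantive content is the differential-geometric closure of $\fk$ under the brackets, which uses the explicit form of the supercovariant connection and the field equations.

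One inaccuracy worth flagging: your remark that ``the only component of the filtered bracket that can lower the filtration degree by less than its own degree is $\fg_{-1}\odot\fg_{-1}\to\fg_0$'' is not correct. For a bosonic background the filtered brackets also acquire a degree-$2$ component $\beta:V\otimes S\to S$ (coming from the spinorial Lie derivative, whose symbol is the undeformed $\fso(V)$-action but whose subleading piece involves $\Omega(F)$), as well as possible $\alpha:\Lambda^2V\to V$ and $\rho:\Lambda^2V\to\fh_0$ components. Indeed, in the present paper's notation (see \eqref{eq:comp2}, \eqref{eq:comp4} and Proposition~\ref{prop:cohomology1h}) the degree-$2$ cocycle $\beta^\varphi+\gamma^\varphi$ depending on the $4$-form flux is precisely what encodes the non-triviality of the deformation, and $\beta^\varphi$ is just as essential as $\gamma^\varphi$. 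This does not affect the validity of your main argument, but the parenthetical should be corrected.
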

The converse was then established in \cite{Figueroa-OFarrill:2016khp} for highly supersymmetric backgrounds, thus giving rise to a bijective correspondence between the latter and a certain type of filtered deformations.
\begin{theorem*}\!\!\!\!\cite{Figueroa-OFarrill:2016khp}\hfill
\begin{itemize}
    \item[(i)]   Let $(M,g,F)$ be an $11$-dimensional Lorentzian spin manifold
    with a closed $4$-form $F\in\Omega^4(M)$.  If the real vector space of Killing spinors of $(M,g,F)$ has dimension $>16$ (i.e., strictly more than half the rank of the associated Majorana spinor bundle), then the bosonic field equations of $11$-dimensional supergravity are automatically satisfied.
    \item[(ii)] Any realizable (see \cite{Figueroa-OFarrill:2016khp} for details on this notion, which is of a cohomological nature) highly supersymmetric filtered subdeformation of the $D=11$ Poincaré superalgebra is a subalgebra of the Killing superalgebra of a highly supersymmetric $D=11$ bosonic supergravity background.
\end{itemize}
\end{theorem*}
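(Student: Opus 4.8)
The plan is to prove the two parts in sequence, with part~(i) feeding directly into part~(ii). For part~(i), I would start from the supercovariant connection $D$ on the Majorana spinor bundle $S\to M$ whose parallel sections are, by definition, the Killing spinors: $D_\mu=\nabla_\mu+\Omega_\mu(F)$ with $\Omega_\mu(F)=\tfrac{1}{288}\bigl(\Gamma_\mu{}^{\nu\rho\sigma\tau}-8\,\delta_\mu^\nu\Gamma^{\rho\sigma\tau}\bigr)F_{\nu\rho\sigma\tau}$. Its curvature $\mathcal R_{\mu\nu}=[D_\mu,D_\nu]\in\Omega^2(M;\End S)$ annihilates every Killing spinor, hence annihilates pointwise the whole $(>16)$-dimensional space of them. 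The crux is an algebraic identity obtained by taking Clifford contractions of $\mathcal R_{\mu\nu}$: modulo the Bianchi identity $\diff F=0$ (which holds by hypothesis), the operators $\Gamma^\nu\mathcal R_{\mu\nu}$ and $\Gamma^{\mu\nu}\mathcal R_{\mu\nu}$ reorganise into $\End(S)$-valued tensors that are homogeneous of low Clifford degree, with coefficients exactly the Einstein tensor $E_{\mu\nu}$ (sourced by $F$, Chern--Simons term included) and the Clifford image of the $4$-form equation of motion $\diff\!\star\!F+\tfrac12 F\wedge F=0$. Since a nonzero Clifford element of such degree acting on the $32$-dimensional module $S_p$ has kernel of dimension at most $16$ --- the bound being sharp, realised by $v\cdot{}$ for null $v$ --- while these operators kill a subspace of dimension $>16$, every field-equation tensor must vanish, so $(M,g,F)$ solves all bosonic equations of $11$-dimensional supergravity. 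Here I expect the book-keeping of the Chern--Simons contribution, and the choice of contractions isolating each field-equation tensor so that the $>16$ rigidity lemma applies cleanly term by term, to be the main technical obstacle in this part.

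For part~(ii), I would use the classification of infinitesimal filtered deformations by Spencer cohomology. Writing $\fp=\fp_{-2}\oplus\fp_{-1}\oplus\fp_0$ with $\fp_{-2}=\RR^{1,10}$, $\fp_{-1}=S$, $\fp_0=\fso(1,10)$, a filtered subdeformation has associated graded a graded subalgebra $\fm_{-2}\oplus\fm_{-1}\oplus\fg_0\subseteq\fp$ with $\fm_{-2}=\RR^{1,10}$, and it is highly supersymmetric precisely when $\dim\fm_{-1}>16$; its first-order part is a class in the relevant Spencer group, computed to be $\cong\Lambda^4(\RR^{11})^\ast$, so the deformation datum is a $4$-form $\varphi$. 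Realizability is, by definition, the cohomological condition guaranteeing that this infinitesimal datum integrates to an honest filtered Lie superalgebra and that $\varphi$ can be normalised to satisfy $\diff\varphi=0$.

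With this data in hand I would reconstruct a background: taking a Lie supergroup $G$ integrating the deformation and $G_0\subseteq G$ integrating $\fg_0$, the local model $M$ associated to the pair $(\fg,\fg_0)$ --- a homogeneous space $G_{\bar 0}/G_0$ in the presence of a full isotropy --- carries a $G$-invariant Lorentzian metric $g$ induced by the $\fso(1,10)$-invariant form on $\fm_{-2}$ and a $G$-invariant closed $4$-form $F$ induced by $\varphi$. The key point is that the mixed Jacobi identities of the deformed bracket translate, term by term, into the Killing-spinor equation $D_\mu\epsilon=0$ for the odd generators and into the identity realising $[\epsilon,\epsilon]$ as the associated Dirac-current Killing vector, so that $\fg$ embeds into the Killing superalgebra $\fk(M,g,F)$. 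Finally, since $\dim\fm_{-1}>16$ the triple $(M,g,F)$ is highly supersymmetric, so part~(i) forces $g$ and $F$ to obey the bosonic field equations; hence $(M,g,F)$ is a genuine $11$-dimensional supergravity background and $\fg\subseteq\fk(M,g,F)$, as claimed. I expect the reconstruction step to be the main obstacle here: extracting the geometric data from the abstract deformation, handling a possibly non-maximal isotropy $\fg_0$, and verifying that the deformed Jacobi identities are genuinely equivalent to the Killing-spinor equation together with its first integrability condition.
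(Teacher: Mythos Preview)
This theorem is not proved in the present paper at all: it is quoted in the introduction, with attribution to \cite{Figueroa-OFarrill:2016khp}, purely as background motivating the paper's own results on odd Spencer cohomology and odd filtered deformations. There is therefore no ``paper's own proof'' against which to compare your proposal; the authors simply cite the result and move on.

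That said, your sketch is broadly faithful to the strategy of the cited reference. For part~(i), the argument there is indeed via the curvature $\mathcal R_{\mu\nu}$ of the superconnection and the rank bound on Clifford elements of low degree, exactly as you outline. For part~(ii), the reconstruction of a local homogeneous background from a realizable highly supersymmetric filtered subdeformation, together with the identification $H^{2,2}(\fm,\fp)\cong\Lambda^4 V$, is also the route taken in \cite{Figueroa-OFarrill:2016khp}. One point where your sketch is slightly imprecise: realizability in \cite{Figueroa-OFarrill:2016khp} is not the condition that the infinitesimal datum integrates (that is automatic for a filtered subdeformation, which by definition already has a Lie bracket), nor that $\varphi$ is closed in any differential sense on the algebraic side; rather, it is a cohomological condition on the degree-$4$ component of the deformation ensuring that the curvature-like tensor arises geometrically. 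Also, in the highly supersymmetric case one has $\fm_{-2}=V$ automatically (since the Dirac current restricted to any subspace of $S$ of dimension $>16$ already spans $V$), so you need not assume it. But these are refinements, not gaps, and in any case are matters for \cite{Figueroa-OFarrill:2016khp} rather than for the paper under review.
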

For applications of filtered deformations to supergravity and rigid supersymmetric field theories in other dimensions, see \cite{deMedeiros:2018ooy, deMedeiros:2016srz, MR4316462,MR4506436}.\\

One of the objectives of the present paper is to investigate the subdeformations of Poincar\'e superalgebras from a more comprehensive perspective. The main algebraic tool to classify filtered deformations is the \emph{(generalized) Spencer cohomology}, which we briefly describe here for the $D=11$ Poincar\'e superalgebra (see \cite{Cheng-Kac,Figueroa-OFarrill:2015rfh,Figueroa-OFarrill:2016khp} for a general introduction). 
Let $V$ be a real vector space of dimension $D=11$ endowed with a Lorentzian inner product $\eta$ of signature ``mostly minus'' and associated Clifford algebra $\Cl(V)$.  We let $S$ be an irreducible $\Cl(V)$-module and denote by $\mathfrak g:=\fso(V)$
the Lorentz Lie algebra, corresponding to the connected spin group $G:=\Spin^\circ(V)$. The $D=11$ {\it Poincar\'e superalgebra} $\mathfrak{p}=\mathfrak p_{\bar 0}\oplus\mathfrak p_{\bar 1}$ is the $\mathbb Z$-graded Lie
superalgebra
\begin{equation}
\label{eq:Z-grading}
    \mathfrak{p} = \mathfrak{p}_0 \oplus \mathfrak{p}_{-1} \oplus \mathfrak{p}_{-2} = \mathfrak{so} \left( V \right) \oplus S \oplus V \,,
\end{equation}
allowing for the following non-trivial brackets:
\begin{itemize}
\item $[-,-]: \fp_0 \times \fp_i \to \fp_i$, which consists of the
  adjoint action of $\fso(V)$ on itself and its natural actions on $V$
  and $S$;
\item $[-,-]: \fp_{-1} \times \fp_{-1} \to \fp_{-2}$, which is the Dirac current 
$\kappa:S\otimes S\to V$ 
of a spinor $s\in S$. It is an $\fso(V)$-equivariant symmetric map (unique up to scalings, since $S$ is $\fso(V)$-irreducible). 
\end{itemize}
In particular, the even Lie subalgebra $\fp_{\bar 0}=\fso(V) \oplus V$ is
the Poincaré algebra. We note that \eqref{eq:Z-grading} is compatible with the $\mathbb{Z}_2$-grading, in that $\mathfrak p_{\bar 0}=\mathfrak{p}_{0}\oplus\mathfrak{p}_{-2}$, $\mathfrak{p}_{\bar 1}=\mathfrak{p}_{-1}$, and with the Lie superalgebra structure, since $[\mathfrak p_{i},\mathfrak p_j]\subset \mathfrak p_{i+j}$ for all $i,j\in\mathbb Z$, where we set $\mathfrak{p}_{k}=0$ if $k\neq -2,-1,0$ for convenience. We also recall that $\eta(\kappa(s,s),v) = \left<s, v\cdot s\right>$ for all $s\in S$, $v \in V$, where 
$\left<-,-\right>$ is the
$\fso(V)$-invariant symplectic structure on $S$ and $\cdot$ refers to the Clifford action. The space $\mathfrak p_{-1}=S$ collects the odd spinorial translations and $\mathfrak p_{-2}=V$ the even translations. 

Denoting by $\mathfrak{m} = \mathfrak{p}_{-1} \oplus \mathfrak{p}_{-2}$ the ($2$-step nilpotent) supertranslation ideal of $\mathfrak p$, we define the {\it Spencer cochains} as the cochains of $\mathfrak{m}$ with values in the whole Poincar\'e superalgebra:
\begin{equation}
\label{eq:Spencer-cohomology}
    C^{\bullet} \left( \mathfrak{m} , \mathfrak{p} \right) = \left(\bigwedge^\bullet \mathfrak{m}\right)^* \!\otimes \mathfrak{p} \,,
\end{equation}
where the symbol of exterior algebra has to be understood here in the supersense (cochains are skew-symmetric as usual, except that they are symmetric on any pair of entries from $\mathfrak m_{\bar 1}=S$).
In other words, we are considering the Chevalley-Eilenberg cochains with values in the module $\mathfrak{p}$, where the action is defined via the adjoint representation \cite{CE}. We also remark that the cohomology of relative cochains  
$C^{\bullet} \left( \mathfrak{p} , \mathfrak{p}_0 ; \mathfrak{p} \right)=\operatorname{Hom}_{\mathfrak p_0}(\Lambda^\bullet (\mathfrak p/\mathfrak p_0);\mathfrak p)\cong C^{\bullet} \left( \mathfrak{m} , \mathfrak{p} \right)^{\mathfrak{p}_0}$
selects the Lorentz-invariant subcomplex of  \eqref{eq:Spencer-cohomology} given by the basic cochains; however, we do not restrict our analysis  to the trivial 
$\mathfrak{so} \left( V \right)$-modules in this paper. 

The $\mathbb Z_2$-grading of $\mathfrak m$ is extended to $\mathfrak m^*$ by duality ($\mathfrak p_{k}^*$ has the same $\mathbb Z_2$-grading of $\mathfrak p_{k}$) and then additively to any tensor product. Similarly, the $\mathbb Z$-grading of $\mathfrak p$ is extended to the whole $C^{\bullet} \left( \mathfrak{m} , \mathfrak{p} \right)$ by declaring $\deg(\mathfrak{p}_{k}^*)=-\deg(\mathfrak{p}_{k})$ and then additively to tensor products. The space of Spencer cochains can thus be decomposed accordingly:
\begin{equation}
\label{eq:decompositionI}
    C^\bullet \left( \mathfrak{m} , \mathfrak{p} \right) = \bigoplus_{d\in \mathbb{Z}} C^{d,\bullet} \left( \mathfrak{m} , \mathfrak{p} \right) \,.
\end{equation}
Our first main result concerns the Spencer cohomology groups
$H^{d,q}(\mathfrak{m},\mathfrak{p}):=Z^{d,q}(\fm,\fp)/B^{d,q}(\fm,\fp)$
(as usual, $Z^{\bullet}(\fm,\fp)$ and $B^{\bullet}(\fm,\fp)$ are the space of Spencer cocycles and coboundaries, respectively) 
with {\it form number} $q=2$ and any {\it positive $\mathbb Z$-grading} $d$. The ``even'' groups $H^{d,2}(\mathfrak{m},\mathfrak{p})$, $d=2,4$, were already determined in \cite{Figueroa-OFarrill:2015rfh, Figueroa-OFarrill:2016khp}, and the content of $H^{2,2}(\mathfrak{m},\mathfrak{p})$ appropriately identified with the {\it bosonic} content (besides the metric) of Nahm's $D=11$ supergravity multiplet \cite{Nahm}. Even more remarkably, the explicit expressions of the normalized cocycles in $H^{2,2}(\mathfrak{m},\mathfrak{p})$ precisely match the Killing spinor equations of $D=11$ supergravity; in other words, they supply the supersymmetry variation of the gravitino at the first-order \cite{Cremmer:1978km} with a cohomological origin. For the relevance of the Spencer groups in the context of geometric formulations of supergravity theories as target spaces of curvatures of nonholonomic superstructures, we  refer the reader directly to the introduction of \cite{Figueroa-OFarrill:2015rfh} and \cite[\S 5.3]{KST-TG}. All these facts motivated the need of a similar understanding for the ``odd'' groups $H^{d,2}(\mathfrak{m},\mathfrak{p})$, where only the cases $d=1,3$ are potentially non-trivial by $\mathbb Z$-grading reasons.
\begin{theorem}
\label{thm:1}
Let $\mathfrak p=\mathfrak{p}_0 \oplus \mathfrak{p}_{-1} \oplus \mathfrak{p}_{-2} = \mathfrak{so} \left( V \right) \oplus S \oplus V$ be the $D=11$ Poincaré superalgebra and $C^\bullet (\mathfrak{m},\mathfrak{p})$ the associated Spencer complex. Then:
\begin{enumerate}
	\item $H^{1,2}(\mathfrak{m},\mathfrak{p})\cong S$ as an $\mathfrak{so} \left( V \right)$-representation, in particular it is odd, non-trivial, irreducible;
	\item $H^{2,2}(\mathfrak{m},\mathfrak{p})\cong \Lambda^4 V$ as an $\mathfrak{so} \left( V \right)$-representation, again non-trivial and irreducible but even;
	\item $H^{3,2}(\mathfrak{m},\mathfrak{p})=H^{4,2}(\mathfrak{m},\mathfrak{p})=0$.
\end{enumerate}	
All the remaining Spencer cohomology groups $H^{d,2}(\mathfrak{m},\mathfrak{p})$ with $d\geq 5$ vanish by $\mathbb Z$-grading reasons.
\end{theorem}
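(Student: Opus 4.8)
\medskip

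The plan is to work, for each positive $\mathbb Z$-grading $d$, with the $\mathbb Z$-homogeneous component $C^{d,\bullet}(\mathfrak m,\mathfrak p)$ of the Spencer complex, on which $\partial$ restricts to a finite complex of $\mathfrak{so}(V)$-modules; accordingly each $H^{d,q}(\mathfrak m,\mathfrak p)$ carries an $\mathfrak{so}(V)$-action. A bookkeeping of $\mathbb Z$-degrees fixes the shape of these complexes. Since $\mathfrak p_k=0$ for $k\notin\{0,-1,-2\}$, one has $C^{d,0}=0$ for every $d\geq 1$; $C^{d,1}=\mathfrak m^*\otimes\mathfrak p$ vanishes for $d\geq 3$; and, in form number $q=2$, the only summands of $\Lambda^2\mathfrak m$ are $\odot^2 S$, $V\otimes S$ and $\Lambda^2 V$, of $\mathbb Z$-degrees $-2,-3,-4$, so that $C^{d,2}=0$ for $d\geq 5$ --- which is the last sentence of the theorem. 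One also checks that every cochain in $C^{d,q}$ is $\mathbb Z_2$-homogeneous of parity $\equiv d\pmod 2$; hence $H^{d,q}(\mathfrak m,\mathfrak p)$ is odd for $d$ odd and even for $d$ even, in accordance with items (1) and (2). Writing $S^*\cong S$ via the invariant symplectic form, $V^*\cong V$ and $\mathfrak p_0\cong\Lambda^2 V$, the complexes to be analysed are $0\to C^{1,1}\to C^{1,2}\to C^{1,3}\to 0$, with $C^{1,1}\cong(S\otimes\Lambda^2 V)\oplus(V\otimes S)$, $C^{1,2}\cong(\odot^2 S\otimes S)\oplus(V\otimes S\otimes V)$, $C^{1,3}\cong\odot^3 S\otimes V$; the complex $0\to C^{3,2}\to C^{3,3}\to\cdots$, with $C^{3,2}\cong(V\otimes S\otimes\Lambda^2 V)\oplus(\Lambda^2 V\otimes S)$; and the analogous complexes for $d=2,4$.

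For the even gradings nothing new is required: $H^{2,2}(\mathfrak m,\mathfrak p)\cong\Lambda^4 V$ and $H^{4,2}(\mathfrak m,\mathfrak p)=0$ are precisely the computations of \cite{Figueroa-OFarrill:2015rfh,Figueroa-OFarrill:2016khp}, which I would simply cite. For the two odd gradings $d=1,3$ the method would be the following. First, decompose each $C^{d,q}$ above into $\mathfrak{so}(V)$-irreducibles --- this is exactly where the Molien--Weyl integral formula and the Hilbert--Poincar\'e series enter, since they produce, via residues, the multiplicity of every irreducible in $\odot^2 S\cong V\oplus\Lambda^2 V\oplus\Lambda^5 V$, $V\otimes S\cong S\oplus\Sigma$ (with $\Sigma$ the $320$-dimensional $\gamma$-traceless vector-spinor), $\odot^3 S$, $\odot^2 S\otimes S$, $V\otimes S\otimes\Lambda^2 V$, and so on, without ad hoc branching arguments. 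Second, write $\partial$ out explicitly: because $\mathfrak m$ is $2$-step nilpotent with bracket the Dirac current $\kappa\colon\odot^2 S\to V$, and the coefficient action on $\mathfrak p$ is assembled from $\kappa$, the Clifford multiplication $V\otimes S\to S$ and the $\mathfrak{so}(V)$-action, equivariance forces $\partial$, between each matching pair of irreducibles, to be a scalar multiple of the unique intertwiner; the resulting finite list of scalars is pinned down by evaluating on highest-weight vectors (or through a handful of $D=11$ Fierz identities).

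One then reads off the cohomology for each of the two gradings in turn. For $d=3$, since $C^{3,1}=0$ the statement $H^{3,2}=0$ is exactly the injectivity of $\partial\colon C^{3,2}\to C^{3,3}$, which --- once the multiplicities and structure scalars of the previous step are available --- is checked irreducible by irreducible. For $d=1$, one first verifies that $H^{1,1}=0$, i.e.\ that $\mathfrak p$ has no nonzero degree-one Tanaka--Weisfeiler prolongation: evaluating the cocycle equation on an $S\times V$ pair of entries gives $\phi_1(s)\cdot v=\pm\kappa(s,\phi_2(v))$ for the two components $\phi_1\colon S\to\mathfrak{so}(V)$ and $\phi_2\colon V\to S$, while skew-symmetry of $\phi_1(s)\in\mathfrak{so}(V)$ forces $v\cdot\phi_2(w)+w\cdot\phi_2(v)=0$ for all $v,w\in V$; setting $v=w$ and using that Clifford multiplication by a non-null vector is invertible gives $\phi_2=0$, and then $\phi_1=0$. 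With $C^{1,0}=0$ and $H^{1,1}=0$, the three-term complex yields $[H^{1,2}]=[H^{1,3}]-[C^{1,1}]+[C^{1,2}]-[C^{1,3}]$ in the representation ring, so that it suffices to determine $H^{1,3}=\operatorname{coker}(\partial\colon C^{1,2}\to C^{1,3})$; equivalently, and more concretely, one computes $H^{1,2}$ head-on as $\ker(\partial\colon C^{1,2}\to C^{1,3})/\operatorname{im}(\partial\colon C^{1,1}\to C^{1,2})$ and exhibits the unique surviving copy of $S$ via an explicit normalized cocycle --- the fermionic, first-order counterpart of the gravitino supersymmetry variation.

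The main obstacle is the second and third steps above. The cochain spaces are genuinely large --- $\dim C^{1,2}$ and $\dim C^{3,2}$ run into the tens of thousands --- and several $\mathfrak{so}(V)$-irreducibles occur with multiplicity greater than one both in the source and in the target of $\partial$; so establishing $H^{3,2}=0$, or isolating the single copy of $S$ in $H^{1,2}$, comes down to computing the finite list of structure scalars of $\partial$ correctly --- which demands care with the $D=11$ Fierz identities and with the sign conventions of the super Chevalley--Eilenberg differential --- and then taking ranks, kernels and cokernels of the resulting explicit equivariant maps. The Molien--Weyl/Hilbert--Poincar\'e machinery is what keeps the representation-theoretic bookkeeping tractable, so the residual difficulty is concentrated entirely in the explicit evaluation of $\partial$. (A Hochschild--Serre spectral sequence for the central ideal $\mathfrak p_{-2}=V\subset\mathfrak m$ reorganizes the computation, but as far as I can see does not circumvent it.)
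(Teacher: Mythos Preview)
Your plan is sound and broadly matches the paper, which also cites \cite{Figueroa-OFarrill:2015rfh,Figueroa-OFarrill:2016khp} for $d=2,4$ and attacks $d=1,3$ via representation theory plus explicit $\partial$-computations. The paper, however, introduces two structural shortcuts that bypass most of the brute-force rank analysis you anticipate. For $H^{1,2}$ it does not hunt for the surviving copy of $S$ among isotypic components but observes that the grading element $Z\in\mathfrak{co}(V)\setminus\mathfrak{so}(V)$ produces, for each $\phi\in S^*$, an explicit cocycle $\partial(Z\otimes\phi)$ that is not a $\mathfrak p$-coboundary; this exhibits $S\subset H^{1,2}(\mathfrak m,\mathfrak p)$ immediately, and what remains (showing nothing else survives, after imposing convenient normalizations that kill the $\Lambda^1V$-piece of $\varepsilon$ and the skew and trace parts of $\epsilon$) is still a large system, which the paper ultimately verifies by computer. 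For $H^{3,2}$, rather than checking injectivity of $\partial$ on $C^{3,2}$ irreducible-by-irreducible, the paper first proves that in any cocycle $\sigma+\tau$ the component $\sigma\colon V\otimes S\to\mathfrak{so}(V)$ is uniquely determined by $\tau\colon\Lambda^2 V\to S$ (because the first prolongation of $\mathfrak{so}(V)$ is trivial both acting on $V$ and on the purely odd $S$), thereby reducing the problem to the three-component module $\Lambda^2 V\otimes S$; a short Fierz argument on $\tau$ alone (the $\Lambda^5V$- and $\Lambda^2V$-parts of a single quadratic identity) then forces $\tau=0$ without any large rank computation. Your direct argument for $H^{1,1}=0$ is actually neater than the paper's, which instead quotes a classification result from \cite{MR3255456}; note, though, that the grading-element trick also requires the first prolongation of the conformal extension $\mathfrak p\oplus\mathbb RZ$ to vanish, for which the paper again invokes that reference.
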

The content of Theorem \ref{thm:1} will first be motivated through the Molien-Weyl integral formula, and then rigorously established via the approach of Cartan-Tanaka prolongations and combinatorial identities. In this latter approach, the realisation \eqref{eq:decompositionI} is convenient, as it makes the prolongation structure more transparent and allows for representation-theoretic techniques, whereas another complex, naturally isomorphic to the Spencer one, is more suited to the Molien-Weyl techniques.

We call {\it statistics} the parity of the sum of the form number $q$ and $\mathbb Z$-grading $d$, and refer to elements with an even (odd) statistics as {\it bosonic} ({\it fermionic}). In particular, elements of $V$ and $S^*$ are bosonic, while elements of $S$ and $V^*$ are fermionic. 
These are the conventions mostly used in the BRST community, to be compared with the supergravity conventions introduced earlier:
\vskip0.2cm
{\small
\begin{table}[H]
\begin{centering}
\makebox[\textwidth]{%
\begin{tabular}{|c|c|c|c|c|}
\hline
& $\mathbb Z$-grading & $\mathbb Z_2$-grading & Form Number & Statistics\\
\hline
\hline
$V$ & $-2$ & {\rm even}  & $+0$ & {\rm bosonic} \\
\hline
\;$V^*$ & $+2$ & {\rm even}  & $+1$ & {\rm fermionic}\\
\hline
$S$ & $-1$ & {\rm odd}  & $+0$ & {\rm fermionic}\\
\hline
\;$S^*$ & $+1$ & {\rm odd}  & $+1$   &{\rm bosonic}\\
\hline
$\mathfrak{so}(V)$ & $+0$ & {\rm even}  & $+0$  & {\rm bosonic}\\
\hline 
\end{tabular}}
\end{centering}
\caption[]{\label{Conventions} Grading conventions.} \vskip14pt
\end{table}}
\vskip-0.4cm\par\noindent
See also e.g. \cite{PVN-Z2} for a related discussion.

We let $e^a$ ($a=0,\ldots,10$) and $\psi^\alpha$ ($\alpha=1,\ldots,32$) be the differential forms of flat superspace $M=P/G$ with $P$ the (universal cover of the) Poincaré supergroup 
and $G = \mathrm{Spin}^\circ(V)$ 
as
before. They describe the usual vielbein and its supersymmetric partner, the gravitino, and can be conveniently collected into the supervielbein $\mathpzc{E}^I=\left\lbrace e^a , \psi^\alpha \right\rbrace$. The associated $\mathbb{Z}$-grading is enforced by assigning $e^a \to u$ and $\psi^\alpha \to t$, as described in \S\ref{sec2}. Similarly, we denote by $\mathpzc{X}_{\tilde{I}} = \left\lbrace L_{ab} , X_a , q_\alpha \right\rbrace$ the generators of $\mathfrak{p}$, so that a constant differential form $\omega$ with values in $\mathfrak p$ reads as 
$
\omega = \omega^{\tilde{I}}_{IJ} \mathpzc{X}_{\tilde{I}}\otimes \mathpzc{E}^I \wedge \mathpzc{E}^J
$,
where the coefficients $\omega^{\tilde{I}}_{IJ}$ are constant and Einstein summation convention has been tacitly used (this will be the case throughout the whole paper). The commutation relations of the elements of $\mathpzc{E}^I=\left\lbrace e^a , \psi^\alpha \right\rbrace$ follow from the supercommutativity rules and the statistics of $V^*$ and $S^*$. 
The space of such differential forms with its natural differential $d$ coincides with
the Spencer complex. 


We will use the Molien-Weyl formula \cite{DerksenKemper,procesi}  to compute the isotypic components\footnote{We recall that an isotypic component of a given module is the direct sum of all its irreducible submodules of fixed isomorphism class.} of the complex and its Euler characteristic, for any form number and for several isotypic components of irreducible $\mathfrak{so}(V)$-modules. The formula is implemented on machines using the residue formula for meromorphic forms. This a priori information on the cohomology can be elegantly encoded in the  Hilbert-Poincar\'e series: 
given an irreducible representation $U$ of $G$ (thought classically: bosonic and of zero $\mathbb Z$-grading), we define the {\it multiparameter Hilbert-Poincar\'e $U$-series} of $C^\bullet(\mathfrak m, \mathfrak p)$ by
\begin{equation}
\begin{aligned}
\label{eq:sdimHPUseriesSpencerIntroduction}
P_U(C^\bullet(\mathfrak m, \mathfrak p),u,t)&:=
\!\!\!\!\!
\sum_{m,n\geq -1}
\!\!\!\!\!\operatorname{sdim}(C^\bullet(\mathfrak m, \mathfrak p)_{m,n}\otimes U^*)^G\,u^{m}\,t^{n}\;,
\end{aligned}
\end{equation}
where ``$\operatorname{sdim}$'' is the \emph{superdimension} of a vector superspace and 
parameters $t$ and $u$ are as above.
Collapsing \eqref{eq:sdimHPUseriesSpencerIntroduction} with 
$u=t^{2}$ gives a series 
$P_U(C^\bullet(\mathfrak m, \mathfrak p),u=t^2,t)=
\sum_{d\in\mathbb Z}
\operatorname{sdim}(C^{d,\bullet}(\mathfrak m, \mathfrak p)\otimes U^*)^G\,t^{d}$
in $t$ 
that we 
call {\it collapsed Hilbert-Poincar\'e $U$-series}. 

{Theorem \ref{thm:3} of \S\ref{sec2} gives a fairly general version of the Molien-Weyl formula for graded Lie superalgebras.} Here we only summarize the results in the case of the $D=11$ Poincaré superalgebra.
\begin{theorem}
\label{thm:2}
The following Table \ref{SummarizingHPMW} comprises the collapsed Hilbert-Poincar\'e $U$-series for the space of Spencer cochains of the $D=11$ Poincaré superalgebra, for various choices of irreducible representations $U$ of $G:=\Spin^\circ(V)$: \vskip0.2cm
\renewcommand{\arraystretch}{1.5}
{\small
\begin{table}[H]
\begin{centering}
\makebox[\textwidth]{%
\begin{tabular}{|c|c|c|c|}
\hline
Representation $U$ & Dynkin label of $U$ & $\dim(U)$ & Collapsed Hilbert-Poincaré $U$-series \\
\hline
\hline
 $S$ & $[0,0,0,0,1]$ & $32$ & $-t+t^3-t^5-t^7+t^9$ \\
\hline
 $(V\otimes S)_o$ & [1,0,0,0,1] & $320$ & $t^{-1 } +t^3 +t^5 -t^7 + t^9
$ 
\\
\hline
$(\Lambda ^2 V\otimes S)_o$ & [0,1,0,0,1] & 1408  & $ - t^3  + t^5 - t^7 + t^{11}$ \\
\hline
$(\Lambda ^3 V\otimes S)_o$ & [0,0,1,0,1] & 3520 & $ t^3-t^7 + t^9$ \\
\hline 
$(\Lambda ^4 V\otimes S)_o$ & [0,0,0,1,1] & 5280 & $ t^3 + t^5 - t^7 + t^9$ \\
\hline
$(\odot^2 V\otimes S)_o$ & [2,0,0,0,1] & 1760 & $ - t^7  + t^9$ \\
\hline 
$(\odot^3 S)_o$ & [0,0,0,0,3] & 4224 & $ -t^3 + t^5 - t^7 - t^9 + t^{11} $ \\
\hline
$(V\otimes\Lambda^2 V\otimes S)_o$ & $[1,1,0,0,1]$ & 10240 & $ t + t^5 - t^9 + t^{11}$\\
\hline 
 \end{tabular}}
\end{centering}
\caption[]{\label{SummarizingHPMW}Collapsed Hilbert-Poincar\'e $U$-series for $D=11$ Poincaré superalgebra.} \vskip14pt
\end{table}}
\vskip-0.4cm\par\noindent
Here the lower index $_o$ of a representation selects its irreducible component of the highest weight.
The collapsed Hilbert-Poincaré $U$-series is equal to $\sum_{d\in\mathbb Z}(-1)^d\chi(H^{d,\bullet}(\mathfrak m, \mathfrak p)\otimes U^*)^G\,t^{d}$, 
where $\chi$ is the usual Euler characteristic w.r.t. the form number at fixed $\mathbb Z$-grading $d$. In other words, it counts the Euler characteristic of the $U$-isotypic component of the Spencer cohomology.
\end{theorem}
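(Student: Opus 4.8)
The plan is to derive Table \ref{SummarizingHPMW} as a direct specialization of the general Molien--Weyl formula of Theorem \ref{thm:3} to the $D=11$ Poincar\'e superalgebra $\fp$, and then to extract the cohomological content by a routine Euler-characteristic argument. Concretely, I would first fix a maximal torus $T\subset G=\Spin^\circ(V)$ of rank $5$, with coordinates $z=(z_1,\dots,z_5)$. Since $C^\bullet(\fm,\fp)=\bigl(\bigwedge^\bullet V^*\otimes\odot^\bullet S^*\bigr)\otimes\fp$, with $V^*$ fermionic and $S^*$ bosonic in the statistics of Table \ref{Conventions}, Theorem \ref{thm:3} expresses $P_U(C^\bullet(\fm,\fp),u,t)$ as an iterated residue over $T$ whose integrand combines: the Weyl integration kernel $\Delta(z)$; the conjugate character $\overline{\chi_U(z)}$; the $(u,t)$-graded character of the value module $\fp=\fso(V)\oplus S\oplus V$; and the generating function of $\bigwedge^\bullet V^*\otimes\odot^\bullet S^*$, which is a polynomial factor over the weights of $V$ (from the exterior algebra on the fermionic $V^*$) and a rational factor $\prod_w 1/(1-t\,z^w)$ over the weights $w$ of $S$ (from the symmetric algebra on the bosonic $S^*$). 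The superdimension in \eqref{eq:sdimHPUseriesSpencerIntroduction} is implemented by the signs with which the odd directions enter: a monomial in $\bigwedge^m V^*\otimes\odot^n S^*\otimes\fp_k$ has statistics $\equiv m+k\pmod 2$, which fixes every sign, so that collapsing $u=t^2$ leaves the collapsed $U$-series as the residue
\[
\frac{1}{|W|}\oint_{T}\overline{\chi_U(z)}\,\Delta(z)\;\bigl(\chi_{\fso(V)}(z)-t^{-1}\chi_S(z)+t^{-2}\chi_V(z)\bigr)\;\frac{\prod_{v}\bigl(1-t^{2}z^{v}\bigr)}{\prod_{w}\bigl(1-t\,z^{w}\bigr)}\;\prod_{i=1}^{5}\frac{\diff z_i}{2\pi\ii\,z_i}\,,
\]
with $v$ ranging over the weights of $V$.

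The computational core is then the evaluation of this iterated residue --- one torus variable at a time --- for each of the eight irreducible representations $U$ listed, a step carried out with machine assistance. For each $U$ the output is a Laurent polynomial in $t$, and I would explain why the a priori power series truncates: at fixed $\ZZ$-grading $d$ the bidegrees occurring satisfy $2m+n+k=d$ with $0\le m\le 11$ and $k\in\{0,-1,-2\}$, so $n$ is fixed by $(m,k)$, each $C^{d,\bullet}(\fm,\fp)$ is finite-dimensional, and the explicit residue shows its $U$-isotypic Euler characteristic vanishes outside the displayed range. As an independent check I would compare low-degree coefficients with Theorem \ref{thm:1}: e.g.\ the term $-t$ in the $S$-series records $H^{1,2}(\fm,\fp)\cong S$, and the absence of $t^3$ and $t^4$ contributions to the $S$- and $\Lambda^4V$-series is consistent with $H^{3,2}(\fm,\fp)=H^{4,2}(\fm,\fp)=0$.

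Finally, to establish the last assertion I would invoke the standard Euler-characteristic argument. The Spencer differential preserves the $\ZZ$-grading $d$, raises the form number $q$ by one, and is $G$-equivariant, so --- by complete reducibility, working over the complexification --- it restricts for each $d$ to a finite complex on the $U$-isotypic invariant subspace $(C^{d,\bullet}(\fm,\fp)\otimes U^*)^G$, whose cohomology is $(H^{d,\bullet}(\fm,\fp)\otimes U^*)^G$, and the alternating sum over $q$ of the dimensions is unchanged on passing to cohomology. Since the statistics of $C^{d,q}$ is the parity of $q+d$ and $U$ is bosonic of zero $\ZZ$-grading, one has $\operatorname{sdim}(C^{d,\bullet}(\fm,\fp)\otimes U^*)^G=(-1)^{d}\chi\bigl((C^{d,\bullet}(\fm,\fp)\otimes U^*)^G\bigr)=(-1)^{d}\chi\bigl((H^{d,\bullet}(\fm,\fp)\otimes U^*)^G\bigr)$, and summing over $d$ gives the stated identity. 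The step I expect to be the main obstacle is the residue evaluation itself: the integrand carries the $32$-dimensional spinor character of $\Spin(11)$, which is cumbersome, and for the larger representations (with $\dim U$ in the thousands) the pole structure in five torus variables is intricate, so one must be careful both in ordering the iterated residues --- equivalently, in the choice of contour --- and in the sign bookkeeping, so that the odd directions genuinely contribute with their superdimension weights rather than ordinary dimensions.
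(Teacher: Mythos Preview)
Your proposal is correct and follows essentially the same approach as the paper: specialize the general Molien--Weyl formula of Theorem~\ref{thm:3} to the $D=11$ data (the characters of $V$, $S$, $\fso(V)$, the plethystic exponentials, and the Haar measure on the rank-$5$ torus, collected in \S\ref{subsec:MWD=11ingredients} and Appendix~\ref{appchar}), evaluate the iterated residues by machine for each of the eight $U$, and then invoke part~2 of Theorem~\ref{thm:3} for the Euler-characteristic identity. One minor slip in your consistency check: the $S$-series in Table~\ref{SummarizingHPMW} \emph{does} carry a $+t^3$ term (so $\chi(H^{3,\bullet}(\fm,\fp)\otimes S^*)^G=-1$, cf.~\eqref{eq:ECisotypicS}), and $\Lambda^4V$ is not among the representations listed there; the vanishing of $H^{3,2}$ and $H^{4,2}$ is therefore not read off directly from these collapsed series but requires the finer form-number analysis of \S\ref{sec3.3} and \S\ref{subsubH32}.
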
	

The irreducible representations in Table \ref{SummarizingHPMW} precisely exhaust the irreducible representations that appear in the space of Spencer cochains for the form number $q=2$ and positive odd $\mathbb Z$-grading $d$, relevant in view of Theorem \ref{thm:1}. 
For completeness, we also considered the case of the trivial representation
$U=\mathbb C$ and determined its collapsed Hilbert-Poincaré $U$-series, which is $-1 - t^6 + t^8$. Theorem \ref{thm:2} above can thus be regarded as an extension of some of the recent results on {\it Lorentz-invariant} {\it scalar} cocycles obtained in \cite{CGNR} for ``\emph{Free Differential Algebras}'' \cite{sullivan,CDF,FDAdual1, gm3,gm13, DFd11} (FDAs, or, more precisely, \emph{super semifree differential graded-commutative algebras}). The analysis of the obtained collapsed Hilbert-Poincaré series is shown to agree with the content of Theorem \ref{thm:1}.
Moreover, representatives for the cohomology classes of $H^{1,2}(\mathfrak{m},\mathfrak{p})$ obtained through the different approaches are shown to be equivalent, see Propositions \ref{MWrepresentative} and \ref{prop:cohomology1p}, and the discussion in Remark \ref{cohomologyclassequivalence}. 
\\

The final objective of the present work concerns filtered subdeformations of $\fp=\fp_{-2}\oplus\fp_{-1}\oplus\fp_{0}$.
By parity consistency, deformations of Lie superalgebras are usually understood as {\it even}, cf. \cite{Cheng-Kac}, in particular the corresponding infinitesimal deformations are cocycles that represent an {\it even} cohomology class. Odd filtered deformations have been scarcely considered in the literature, often under the simplifying assumption that they integrate to a full deformation via a $1$-dimensional space of parameters -- indeed, using a single odd parameter $\mathpzc t$,  we have $\mathpzc t^2=0$ by nilpotency and any odd infinitesimal deformation is trivially unobstructed to a first-order full deformation. 
In \S\ref{sec:deformation-cohomology} we 
generalize Fialowski's notion of deformations over general commutative algebras \cite{Fia} to embrace {\it Lie superalgebras} and {\it filtrations}. More precisely, we consider filtered deformations parametrized by finite-dimensional exterior algebras $\Lambda^\bullet W$, see Definition \eqref{def:filtereddeformation} for more details. 
(The collection of all such deformations would give rise to the entire filtered deformation functor, which should not be confused with the functor of points of the Lie superalgebra.)


We then consider maximally supersymmetric filtered subdeformations of $\fp$ whose infinitesimal deformation is odd of $\mathbb Z$-grading $1$. We first study the Jacobi identities in complete generality and then restrict to the case of  filtered subdeformations with {\it timelike nilpotent infinitesimal deformation}. The former is a genericity type assumption (if the odd infinitesimal deformation is generic, then it is timelike), while the latter is a technical assumption of cohomological nature, see Definition \ref{def:6}, Remark \ref{rem:10}.
Building on the previous sections, we finally establish in \S\ref{sec:deformation-cohomology} the following main theorem, which is a result of no-go type.
\begin{theorem}
\label{thm:1.3}
Let $F$ be a maximally supersymmetric filtered subdeformation of the $D=11$ Poincar\'e superalgebra $\fp$. If the infinitesimal odd deformation of $F$ is generic and nilpotent, then $F$ is isomorphic (as a filtered Lie superalgebra) to a first-order odd filtered subdeformation of $\fp$. The result holds for any choice of the finite-dimensional auxiliary vector space $W$.
\end{theorem}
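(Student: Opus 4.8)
The plan is to run an obstruction-theoretic analysis of the Jacobi identities for the deformed bracket, order by order in the $\mathbb Z$-grading, and to remove all higher-order terms by filtered changes of coordinates; the vanishing of the relevant obstruction and ambiguity groups is supplied by Theorem \ref{thm:1}, and the timelike and nilpotency hypotheses are used exactly to dispose of the two $\mathbb Z$-gradings ($d=1,2$) that Theorem \ref{thm:1} leaves with non-trivial cohomology. First I would normalise the infinitesimal deformation: by Theorem \ref{thm:1}(1) it represents a class in $H^{1,2}(\fm,\fp)\cong S$, hence it is determined (up to coboundary and up to its $\Lambda^\bullet W$-coefficient) by a spinor $\varphi\in S$; genericity means $\varphi$ is timelike, i.e. $\kappa(\varphi,\varphi)$ is a timelike vector in the sense of Definition \ref{def:6}. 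Using the explicit normalised representative of Proposition \ref{prop:cohomology1p} and an automorphism of $\fp$, I would put $\varphi$ in standard timelike form and record the Clifford/Fierz identities that this normalisation, together with the nilpotency condition of Definition \ref{def:6} and Remark \ref{rem:10}, entails; the key point will be that the contractions with $\kappa(\varphi,\varphi)$ that occur below are invertible precisely when $\varphi$ is non-null.

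Next I would expand the deformed bracket as $[-,-]_F=\mu_0+\mu_1+\mu_2+\cdots$, with $\mu_0$ the undeformed bracket of $\fp$, each $\mu_k$ a Spencer $2$-cochain of $\mathbb Z$-grading $k$ valued in $\Lambda^k W$, and write out the generalised Jacobi identities order by order as in \S\ref{sec:deformation-cohomology} following Definition \ref{def:filtereddeformation}. The order-$1$ identity recovers the cocycle condition on $\mu_1$; the order-$k$ identity reads $\partial\mu_k=R_k(\mu_1,\dots,\mu_{k-1})$, where $R_k$ is the usual quadratic Nijenhuis--Richardson expression (automatically Spencer-closed), so that $\mu_k$ exists iff $[R_k]\in H^{k,3}(\fm,\fp)$ vanishes and is then determined modulo $Z^{k,2}(\fm,\fp)$. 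For $k\ge 3$ there is no ambiguity: $H^{3,2}(\fm,\fp)=H^{4,2}(\fm,\fp)=0$ by Theorem \ref{thm:1}(3) and $H^{k,2}(\fm,\fp)=0$ for $k\ge 5$ by $\mathbb Z$-grading, so each such $\mu_k$ can be taken to be a Spencer coboundary $\partial\phi_k$ and removed by the filtered automorphism $\exp(-\phi_k)$ of $\fp\otimes\Lambda^\bullet W$, which is the identity on the associated graded. Moreover, once $\mu_2,\dots,\mu_{k-1}$ have been gauged to zero one has $R_k=0$ for $k\ge 3$, so the higher-order steps are purely formal; the whole process terminates because $\dim W<\infty$ forces $\Lambda^kW=0$ for $k$ large.

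The crux is therefore order $2$. I would show: (i) the obstruction $[R_2]=-\tfrac12[[\mu_1,\mu_1]]\in H^{2,3}(\fm,\fp)$ vanishes for timelike $\varphi$ --- for a single first-order fermionic direction the cochain $[\mu_1,\mu_1]$ carries the factor $w\wedge w=0$ and is thus zero, while in general the genericity hypothesis reduces one to this case, so a primitive $\mu_2$ exists; and (ii) the residual freedom, the class of $\mu_2$ in $H^{2,2}(\fm,\fp)\cong\Lambda^4 V$ --- i.e. the ``$4$-form flux'' ambiguity --- is forced to be trivial by the nilpotency hypothesis of Definition \ref{def:6}, so that $\mu_2$ too may be chosen exact and gauged away. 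Iterating through all higher orders then yields a filtered isomorphism of $F$ onto a deformation with $\mu_k=0$ for all $k\ge 2$, that is, onto a first-order odd filtered subdeformation of $\fp$; maximal supersymmetry is preserved throughout, since none of these moves changes $\mathrm{gr}\,F\cong\fp$.

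I expect the main obstacle to be the two-fold order-$2$ analysis above: on the one hand, translating the nilpotency condition of Definition \ref{def:6} into the honest vanishing of the $\Lambda^4 V$-valued class of $\mu_2$ (and checking this is consistent with the general study of the Jacobi identities carried out before the specialisation); on the other hand, carrying out the Fierz computation that produces an explicit Spencer primitive for $[\mu_1,\mu_1]$ while verifying that the timelike normalisation of $\varphi$ is exactly what makes the relevant Clifford contractions invertible. By comparison, the order-by-order bookkeeping of the Jacobi identities, the identification of obstruction classes with $H^{k,3}(\fm,\fp)$, and the homological-perturbation argument that removes the $\mu_k$ with $k\ge 3$ are routine.
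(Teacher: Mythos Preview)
Your order-by-order scheme is the right skeleton, but at order~$2$ --- the crux --- you have the two hypotheses exactly reversed. In the paper, \emph{nilpotency} (Definition~\ref{def:6}) is precisely the assumption that the quadratic right-hand sides of the degree-$2$ Jacobi identities \eqref{eq:32}, \eqref{eq:33}, \eqref{eq:35} vanish \emph{as cochains}; in your language this is $R_2=0$, not merely $[R_2]=0$. Your claim that ``$w\wedge w=0$ for a single fermionic direction, and genericity reduces one to this case'' conflates two unrelated things: the single-direction/decomposable situation is one of the special cases in which nilpotency holds automatically (Remark~\ref{rem:10}(ii)), but genericity of the spinor has nothing to do with $\dim W$ and does not force $R_2=0$. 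Conversely, it is \emph{genericity} (timelike $\phi$, hence compact stabiliser $\fh_0\subset\stab_{\fso(V)}(\phi)\cap\stab_{\fso(V)}(\xi)$) that kills the $\Lambda^4V$ class of $\mu_2$: once $R_2=0$, one has $\mu_2=\beta^\varphi+\gamma^\varphi$ with $\fh_0\subset\stab_{\fso(V)}(\varphi)$ (Proposition~\ref{prop:eleven}), and the paper then exploits the orthogonal splitting $V=U\oplus\mathbb R\xi$ and the explicit formula for $\gamma^\varphi$ to force $\varphi=0$ (Proposition~\ref{prop:siamoallafine}). This is a concrete computation with the $4$-form and the Hodge star on the Euclidean $U$, not an ``invertibility of Clifford contractions'' argument, and it is not supplied by nilpotency. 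Your citation of Definition~\ref{def:6} for the notion of ``timelike'' is also a misread: that definition is the nilpotency condition.

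A secondary but non-negligible gap: a \emph{sub}deformation has $\mathrm{gr}\,F\cong\fh$ with $\fh_0$ in general a proper subalgebra of $\fso(V)$, so the relevant groups are $H^{d,2}(\fm,\fh)$, related to $H^{d,2}(\fm,\fp)$ by Proposition~\ref{prop:cohomology1h}. In particular $H^{1,2}(\fm,\fh)$ carries the extra summand $(\fso(V)/\fh_0)\otimes S^*$, i.e.\ the map $X_S:S\to\fso(V)$ of Corollary~\ref{cor:form-inf-def}, which survives into the final brackets (Theorem~\ref{thm:6}); and the components $\mu,\theta,\delta$ that involve $\fh_0$ in their domain are not Spencer cochains on $\fm$ at all and must be handled separately (Proposition~\ref{prop:stabilizingphi} and the subsequent proposition). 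Your line ``none of these moves changes $\mathrm{gr}\,F\cong\fp$'' and your exclusive use of $H^{d,2}(\fm,\fp)$ indicate that this distinction was missed.
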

See Theorem \ref{thm:6} for a more detailed statement. In particular, the locally homogeneous Lorentzian manifolds underlying such filtered subdeformations are flat (the subspace  $\fh_{\bar 0}\otimes\Lambda^\bullet_{\bar 0}W$ of the even part $(\fh^\Lambda)_{\bar 0}$ is closed under Lie brackets and graded, so the Riemann curvature vanishes). This result indicates that odd maximally supersymmetric filtered subdeformations of $\fp$ with non-flat underlying Lorentzian manifold, if they exist, are obtained by either relaxing the genericity assumption -- hence, by considering infinitesimal odd deformations that are in the lightlike orbit of  the projectivized action of  $G:=\Spin^\circ(V)$ on $\mathbb{P}(S)$ -- or the nilpotency type assumption (or both). This study and the associated interpretation of the resulting superalgebras $F$ in terms of supergravity backgrounds (with the additional spinor field in the spectrum of degrees of freedom) 
will be the context of a separate work. Lastly, let us notice that in the case of a dynamical gravitino, a finite-dimensional $W$ is no longer suitable. Indeed, the spacetime components of the gravitino field are supposed to be anticommuting for any position in spacetime, which forbids a finite-dimensional $\Lambda^\bullet W$ as underlying algebra. However, in the present paper we only consider  background solutions and not dynamical fields. 

\bigskip

The paper is organized as follows. In \S\ref{sec2} we streamline the necessary mathematical background on the Molien-Weyl formula including Theorem \ref{thm:3}, a fairly general version that holds for any finite-dimensional Lie superalgebra with a consistent $\mathbb Z$-grading. In \S\ref{sec6} we discuss the applications of \S\ref{sec2} to the case of the $D=11$ Poincaré superalgebra, determine the associated Hilbert-Poincaré series of Theorem \ref{thm:2} and anticipate the content of Theorem \ref{thm:1} in the component approach. We then devote \S\ref{sec:spencer-complex} to the complete proof (see Theorems \ref{thm:persoilconto}-\ref{thm:H32}) and to some preliminary results that will be useful in \S\ref{sec:deformation-cohomology}. Finally, we devote \S\ref{sec:deformation-cohomology} to the study of the maximally supersymmetric filtered subdeformations of the $D=11$ Poincaré superalgebra with non-trivial infinitesimal odd deformation and establish Theorem \ref{thm:1.3}. We give our conclusions and future developments in \S\ref{Conclusions}.

\section{Molien-Weyl formula}\label{sec2}
\subsection{Invariant supersymmetric polynomials}\label{sec2.1}
Let $G$ be a finite group and $W$ a finite-dimensional linear representation of $G$, over the field $\mathbb C$ of complex numbers. If $\mathbb C[W]:=\odot^\bullet W^*$ is the space of polynomials on $W$ and $\mathbb C[W]^G=(\odot^\bullet W^*)^G$ the space of $G$-invariant polynomials on $W$, then
the Hilbert-Poincar\'e series 
\begin{align}
P(\mathbb C[W]^G,t):=\sum_{n\geq 0} \dim(\mathbb C[W]^G_n)\,t^n
\end{align} 
is the generating function for $\mathbb C[W]^G$ endowed with its standard $\mathbb Z$-grading  $\mathbb C[W]^G=\oplus_{n \geq 0} \mathbb C[W]^G_n$ with components $\mathbb C[W]^G_n=(\odot^n W^*)^G$ (in other words, it provides with
the dimension of the space of invariant polynomials at order $n$). Note that usually there are no bounds on the powers of $t$. 
The Hilbert-Poincar\'e series can then be computed by means of the Molien-Weyl formula
\begin{align}\label{MW_A}
P(\mathbb C[W]^G,t) = \frac{1}{|G|} \sum_{g \in G} \frac{1}{\det_W(1 - t g)}\,,
\end{align}
see, e.g., \cite{DerksenKemper,procesi,Pouliot:1998yv,Weyl}. 
\vskip0.2cm\par
We now consider a representation of $G$ on a
vector superspace $W = W_{\bar 0} \oplus W_{\bar 1}$, where $W_{\bar 0}$ is the bosonic subspace and $W_{\bar 1}$ is the fermionic one, and the space $\mathbb C[W]:=\odot^\bullet W^*$ of supersymmetric polynomials on $W$, understood w.r.t. statistics. In other words, the symbol of symmetric algebra on $W^*=(W^*)_{\bar 0}\oplus (W^*)_{\bar 1}=W_{\bar 1}^*\oplus W_{\bar 0}^*$ is meant in the ``supersense'',  using the supercommutativity rules applied to the even statistics of $W_{\bar 1}^*$ and the odd statistics of $W_{\bar 0}^*$. It is $\mathbb Z$-bigraded
\begin{equation}
\label{eq:bidecomposition}
\mathbb C[W]=\bigoplus_{m,n\geq 0}\mathbb C[W]_{m,n}
\end{equation}
with graded components $\mathbb C[W]_{m,n}=\Lambda^m W_{\bar 0}^*\otimes \odot ^n W_{\bar 1}^*$, where the symmetric and exterior algebra symbols are meant in the classical sense. Note that $\mathbb C[W]_{m,n}$ is bosonic (fermionic) when $m$ is even (odd) and that $G$ preserves the decomposition \eqref{eq:bidecomposition} and thus also the statistics, since it is a classical group. We shall introduce two different parameters $u$ and $t$ to parametrize elements in the subspaces $W_{\bar 0}^*$ and $W_{\bar 1}^*$, respectively, and give the following:
\begin{definition}
\label{def:1}
The Hilbert-Poincar\'e series of the space $\mathbb C[W]^G$ of $G$-invariant supersymmetric polynomials on $W$ is defined as
\begin{equation}
\label{eq:sdimHPseriesdefinition}
\begin{aligned}
P(\mathbb C[W]^G,u,t):&=\sum_{m,n\geq 0}\operatorname{sdim}(\mathbb C[W]^G_{m,n})\,u^m\,t^n
\;,
\end{aligned}
\end{equation}
where $\operatorname{sdim}(U)=\dim(U_{\bar 0})-\dim(U_{\bar 1})$ is the superdimension of a vector superspace $U=U_{\bar 0}\oplus U_{\bar 1}$.
\end{definition}
It is immediate to see from the definition of superdimension that
\begin{equation}
\label{eq:sdimHPseries}
\begin{aligned}
P(\mathbb C[W]^G,u,t)
&=\sum_{m,n\geq 0}(-1)^m\dim(\mathbb C[W]^G_{m,n})\,u^m\,t^n
\;.
\end{aligned}
\end{equation}
Moreover, the Molien-Weyl formula extends to the identity
\begin{align}\label{MW_B}
P(\mathbb C[W]^G, u, t) = \frac{1}{|G|} \sum_{g \in G} \frac{\det_{W_{\bar 0}} (1 - u g)}{\det_{W_{\bar 1}} (1 - t g)} \,,
\end{align}
where the unusual minus sign in front of the parameter $u$ in $\det_{W_{\bar 0}} (1 - u g)$ is due to the fact that we are considering superdimensions and that $W_{\bar 0}^*$ is fermionic. Put it differently, $W_{\bar 0}^*$ is actually parametrized by $-u$ and the Molien-Weyl formula \eqref{MW_B} is obtained from the classical arguments involving dimensions followed by replacing $u$ with $-u$.\\

Interestingly, the computation of the sums in \eqref{MW_A} and \eqref{MW_B} can be restricted to the conjugacy classes of the finite group $G$. 
If we consider $G$ to be a complex linearly reductive connected group, instead of a  finite group, the sum in the Molien-Weyl formula is replaced by an integral:
\begin{align}\label{MW_C}
P(\mathbb C[W]^G, u, t) = \int_{K} \frac{\det_{W_{\bar 0}} (1 - ug)}{\det_{W_{\bar 1}} (1 - tg)} \diff \mu \,,
 \end{align}
where $K$ is the maximal compact subgroup of $G$ with corresponding normalized Haar measure $\diff \mu$ (that is, $\int _{K} \diff \mu=1$). The integral \eqref{MW_C} may be simplified further. We let $D\cong (\mathbb C^\times)^r$ be a maximal complex torus of $G$ and $T\cong (S^1)^r$ a Cartan subgroup, namely a maximal compact subgroup of $D$ with normalized Haar measure $\diff\nu$. Here and in the following $r={\rm rk}\, G$ denotes the rank of $G$. Since the integrand is invariant under conjugation, Weyl integration formula \cite{procesi,Weyl} allows to integrate only on the Cartan subgroup and rewrite the above formula as
\begin{align}\label{MW_D0}
P(\mathbb C[W]^G, u, t) = \int_{T} \frac{\det_{W_{\bar 0}} (1 - u g)}{\det_{W_{\bar 1}} (1 - tg)} \phi(g)  \diff \nu\,,
\end{align}
where the integration is performed over the Cartan subgroup $T$ and $\phi:T\rightarrow\mathbb R$ denotes the weight function of Weyl -- in other words, the integration measure $\diff \mu$ restricted to $T$ coincides with $\phi\diff\nu$. As we shall see, it is possible to express Weyl weight function in terms of the positive roots of $G$. 
The integration  can be performed by introducing a set of complex coordinates $z=(z_1,\ldots,z_r)$ over the Cartan subgroup $T\cong (S^1)^r$, with each coordinate $z_i$ defined on the $i^{th}$-copy of unit circle $S^1$. Notably, any $z\in T$ acts diagonally on $W=W_{\bar 0}\oplus W_{\bar 1}$ and $\det_{W_{\bar 0}} (1 - u g(z))=
\prod_{i=1}^{\dim W_{\bar 0}}(1-u\, m_i(z))$, $\det_{W_{\bar 1}} (1 - t g(z))=
\prod_{j=1}^{\dim W_{\bar 1}}(1-t\, n_j(z))$, for certain Laurent monomials $m_i(z)$ and $n_j(z)$ in $z_1,\ldots,z_r$, which we will soon make explicit.

From now on we assume for simplicity that $G$ is semisimple. By using Weyl character formula and the residue theorem as in for instance \cite[\S 4.6]{DerksenKemper}, one obtains 
\begin{align}\label{MW_D}
P(\mathbb C[W]^G, u, t) = \oint_{|z_1|=1}\!\!\!\!\!\!\cdots \oint_{|z_r|=1} \frac{\det_{W_{\bar 0}} (1 - u g(z))}{\det_{W_{\bar 1}} (1 - t g(z))} 
\phi(z)
\diff\nu\,,
\end{align}
\begin{equation}
\label{MW_E}
\begin{aligned}
\displaystyle\diff\nu=\prod_{i=1}^{r} 
\frac{\diff z_i}{2 \pi i z_i}\;,\quad
\phi(z)= \prod_{\alpha \in \Delta^+} \left( 1 - z^\alpha \right) =\prod_{\alpha \in \Delta^+} \left( 1 - \prod_{i=1}^{r} z_i^{n_i(\alpha)} \right) \,, 
\end{aligned}
\end{equation}
\vskip0.2cm\par\noindent
where $\Delta^+$ is the finite set of positive roots of the Lie algebra $\mathfrak g$ of $G$, and the $n_i(\alpha)\in\mathbb Z$ are the components of the root $\alpha\in\Delta^+$ expressed as linear combination of the fundamental weights of $\mathfrak g$. 

Concerning the integrand, we use the following notion: for a (finite-dimensional) classical representation $U$ of $G$, we consider its character $\chi_U:T\rightarrow \mathbb C$ as $\chi_U(z) = Tr_U\left[g(z)\right]$, $g=g(z)\in T$. It can be expressed as the sum of Laurent monomials 
\begin{align}\label{MW_FA}
\chi_U(z) = \sum_{\lambda \in \Delta_U} z^\lambda
\end{align}
where $\Delta_U$ is the set of weights of the representation $U$ counted with their multiplicity (of course, one has to take into account both the vanishing and nonvanishing weights), and $z^\lambda=\prod_{i=1}^{r} z_i^{n_i(\lambda)}$, where the $n_i(\lambda)\in\mathbb Z$ are the components of the weight $\lambda$ expressed as linear combination of the fundamental weights.
One then constructs the \emph{bosonic plethystic exponential} as follows (the subscript ``$B$'' indicates that we are dealing with the bosonic part): 
\begin{align}
\label{MW_G}
PE_B[\chi_U(z)t] :&= Exp \left[\sum_{\lambda \in \Delta_U}  \sum_{n=1}^\infty 
\frac 1n t^n z^{n\lambda} \right]= 
\frac{1}{\prod_{\lambda \in \Delta_U} (1 - t z^\lambda)} = \frac{1}{\det_{U} (1 - t g(z))}\,,
\end{align}
corresponding to the denominator of \eqref{MW_D} (except the measure $\displaystyle\diff\nu$ and Weyl weight function $\phi(z)$). This formula is suitable for a conventional representations. The fermionic version has to be modified to the following \emph{fermionic plethystic exponential}  (see for example \cite{Feng:2007ur, Hanany:2008sb}):
\begin{equation}
\begin{aligned}
\label{MW_J}
{PE}_F[\chi_U(z) u]:&= Exp \left[- \sum_{\lambda\in \Delta_U}  
\sum_{n=1}^\infty \frac 1n u^n z^{n\lambda} \right]= 
\prod_{\lambda \in \Delta_U} (1 - u z^{\lambda})= \det\!{}_{U} (1 - u g(z)) \,,
\end{aligned}
\end{equation}
where a minus sign in the definition of the bosonic plethystic exponential \eqref{MW_G} has been inserted. This corresponds to the term in the numerator of \eqref{MW_D}.
\vskip0.2cm\par
In a more general setting, one can consider a decomposable representation $W = \oplus_{K=1}^{M+N} W_K$ into sum
of  representations of definite statistics, 
and thus use different parameters $u=(u_1,\ldots,u_M)$ and $t=(t_1, \ldots, t_N)$ to parametrize them. Definition \ref{def:1} extends immediately as follows.
\begin{definition}
The Hilbert-Poincar\'e series of the space $\mathbb C[W]^G$ of $G$-invariant supersymmetric polynomials on $W$ is the series in the parameters $u=(u_1,\ldots,u_M)$ and $t=(t_1, \ldots, t_N)$ given by
\vskip0.2cm\par\noindent
\begin{align}
\label{MW_N}
P(\mathbb C[W]^G,u,t) = \!\sum_{{m=(m_I)_{I=1}^M\atop n=(n_J)_{J=1}^N}\atop m_I,n_J\geq 0}\operatorname{sdim}(\mathbb C[W]^G_{m,n})\,u_1^{m_1}\cdots u_M^{m_M}\,t_1^{n_1}\cdots t_N^{n_N}\;.
\end{align}
\end{definition}

{Of course a formula analogue to \eqref{eq:sdimHPseries} still holds in this context (but we will not write it down). The integrand of \eqref{MW_D} generalizes directly to}
\begin{align}\label{MW_K}
PE[u,t](z) = 
\prod_{I=1}^M PE\left[\chi_{W_I}(z) u_I\right]\prod_{J=1}^N PE\left[\chi_{W_{M+J}}(z) t_J\right] \,,
\end{align}
where we omitted
the $B$, $F$ subscripts. In the following, we will also omit such subscripts, since the statistics of the variables under consideration will always be understood from the context. 
Putting all together, we arrive at
\begin{proposition}
\label{prop1}
Let $G$ be a complex linearly reductive connected semisimple group of rank $r$ and $W = \oplus_{K=1}^{M+N} W_K$ a  representation of $G$ that decomposes into representations of definite statistics. Then 
$P(\mathbb C[W]^G, u,t) = 
\oint_{|z_1|=1}\!\!\!\!\!\!\cdots \oint_{|z_r|=1} 
PE[u,t](z)\phi(z)\diff\nu
$.
\end{proposition}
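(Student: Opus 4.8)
The plan is to derive Proposition \ref{prop1} by concatenating the identities \eqref{MW_B}--\eqref{MW_K} assembled in the discussion above: once the residue form \eqref{MW_D} of the Molien--Weyl integral is in hand and its integrand has been rewritten, for a decomposable $W$, as the product of plethystic exponentials \eqref{MW_K}, the statement is immediate. So the two things to pin down are the passage to \eqref{MW_D} and the factorization of the integrand.

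For the first, I would start from the super Molien--Weyl identity \eqref{MW_B} for a finite group: by \eqref{eq:sdimHPseries} this amounts to evaluating $\sum_{m,n\ge 0}\dim\bigl(\Lambda^m W_{\bar 0}^*\otimes\odot^n W_{\bar 1}^*\bigr)^G(-u)^m t^n$, which follows from the classical Molien formula \eqref{MW_A} for $W_{\bar 0}\oplus W_{\bar 1}$ with independent bookkeeping parameters on the two summands, via the elementary generating-function identities $\sum_m u^m\operatorname{tr}\bigl(\Lambda^m g\mid E^*\bigr)=\det_{E}(1-(-u)g)$ (after $g\mapsto g^{-1}$) and $\sum_n t^n\operatorname{tr}\bigl(\odot^n g\mid E^*\bigr)=1/\det_{E}(1-tg)$, followed by $u\mapsto -u$; multiplicativity of the supertrace over tensor products preserves the numerator/denominator shape. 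Passing from a finite group to a complex linearly reductive connected $G$ is the Weyl unitarian trick: $\mathbb C[W]^G=\mathbb C[W]^K$ since $K$ is Zariski-dense in $G$, the Reynolds operator becomes integration against the normalized Haar measure of $K$, and the same supercharacter-averaging argument yields \eqref{MW_C}. Because the integrand depends on $g$ only through determinants, it is a class function, so the Weyl integration formula turns \eqref{MW_C} into an integral over a Cartan subgroup $T\cong(S^1)^r$ against the Weyl weight $\phi$, i.e.\ \eqref{MW_D0}; assuming $G$ semisimple one expresses $\phi$ through the positive roots as in \eqref{MW_E}. Finally, parametrizing $T$ by $z=(z_1,\dots,z_r)$ and diagonalizing the action of $z$ on $W$, the Weyl character formula together with the residue theorem --- as in \cite[\S 4.6]{DerksenKemper} --- rewrites $\int_T(\cdots)\phi\,\diff\nu$ as the iterated contour integral \eqref{MW_D} with $\diff\nu$ and $\phi(z)$ as in \eqref{MW_E}.

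It then remains to factorize the integrand and to control the analytic subtleties of the last step. The geometric-series expansions \eqref{MW_G} and \eqref{MW_J} identify $1/\det_{W_{\bar 1}}(1-t\,g(z))$ with $PE_B[\chi_{W_{\bar 1}}(z)t]$ and $\det_{W_{\bar 0}}(1-u\,g(z))$ with $PE_F[\chi_{W_{\bar 0}}(z)u]$; for $W=\bigoplus_{K=1}^{M+N}W_K$, multiplicativity of determinants over direct sums --- with the parameter $u_I$ attached to each bosonic summand $W_I$, whose dual is of exterior type and hence contributes a $PE$-factor to the numerator, and $t_J$ attached to each fermionic summand $W_{M+J}$, contributing a $PE$-factor to the denominator --- rewrites the integrand as $\prod_{I=1}^{M}PE[\chi_{W_I}(z)u_I]\prod_{J=1}^{N}PE[\chi_{W_{M+J}}(z)t_J]=PE[u,t](z)$, which is \eqref{MW_K}; substituting into \eqref{MW_D} gives the claim. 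The step I expect to be the main obstacle is the rigorous justification of the residue computation: one must interchange the summation defining $P(\mathbb C[W]^G,u,t)$ with the $T$-integral and keep track of which poles of the plethystic exponentials lie inside the polydisc $|z_i|\le 1$. I would handle this by working first for $|u_I|,|t_J|$ small --- where the series for $P$ converges absolutely to a holomorphic function of $(u,t)$ --- performing the residue calculus there, and then observing that both sides are rational functions of $(u,t)$, the left-hand side by finite generation of $\mathbb C[W]^G$ (Hilbert, $G$ reductive) and the right-hand side by inspection, so the identity propagates to all $(u,t)$ by analytic continuation.
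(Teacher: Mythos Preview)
Your proposal is correct and follows exactly the paper's approach: the paper gives no explicit proof of Proposition~\ref{prop1} beyond the phrase ``Putting all together, we arrive at'', so the intended argument is precisely the concatenation of \eqref{MW_B}--\eqref{MW_K} that you spell out. In fact you supply more detail than the paper does, including the analytic justification of the residue step via small parameters and rationality, which the paper leaves implicit by appeal to \cite[\S 4.6]{DerksenKemper}.
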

By knowing the powers in $u$ and $t$ and with some additional work, the explicit form of invariant polynomials can often be inferred -- we will illustrate this strategy with examples in \S\ref{sec6}. In fact, the Molien-Weyl formula  allows to restrict the number of possible invariant polynomials to check, simplifying their quest.
\subsection{Spencer complex and Euler characteristic}
\label{sec:2.2}
Ultimately we want to consider supersymmetric polynomials on $\mathfrak m$ with values in the Poincaré superalgebra $\mathfrak p$, so certain modifications to \S\ref{sec2.1} are required:
our supersymmetric polynomials are not scalar-valued and, secondly, we are not restricting our analysis to the Lorentz-invariant ones. 

To this aim, it convenient to first recall \cite[Remark 3.4.3, pag. 85]{DerksenKemper}, which does not invoke any superstructure at all and is recasted here in a form suitable for our purposes:
\begin{proposition}
\label{prop2}
Let $G$ be a complex linearly reductive connected semisimple group of rank $r$ and $W$ and $U$ two  representations of $G$. Then the Hilbert-Poincar\'e series of the space $(\mathbb C[W]\otimes U)^G$ of $G$-invariant symmetric polynomials on $W$ with values in $U$ is given by
\vskip0.3cm\par\noindent
\begin{equation}
\label{MW_MvaluesinW}
\begin{aligned}
P( (\mathbb C[W]\otimes U)^G, t)& = \oint_{|z_1|=1}\!\!\!\!\!\!\cdots \oint_{|z_r|=1} 
Tr_{U}\left[g^{-1}(z)\right]\frac{1}{\det_{W} (1 - t g(z))}\phi(z)\diff\nu\\
&=\oint_{|z_1|=1}\!\!\!\!\!\!\cdots \oint_{|z_r|=1} 
\chi_U(z^{-1})PE[t](z)\phi(z)\diff\nu\;.
\end{aligned}
\end{equation}
\end{proposition}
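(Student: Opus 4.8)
\textbf{Proof proposal for Proposition \ref{prop2}.} The plan is to reduce the statement to the scalar-valued Molien--Weyl formula of Proposition \ref{prop1} by the standard device of decomposing $U$ (equivalently, $U^*$) into isotypic pieces and tracking highest weights. First I would recall that for a complex linearly reductive connected group $G$ and finite-dimensional representations, one has the canonical isomorphism of $\mathbb Z$-graded vector spaces
\begin{equation*}
(\mathbb C[W]\otimes U)^G\;\cong\;\operatorname{Hom}_G(U^*,\mathbb C[W])\,,
\end{equation*}
so that, in each polynomial degree $n$, the dimension of $(\mathbb C[W]_n\otimes U)^G$ equals the multiplicity of $U^*$ inside $\mathbb C[W]_n$. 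Summing against $t^n$, the generating function $P((\mathbb C[W]\otimes U)^G,t)$ is precisely the $U^*$-isotypic generating function of the graded algebra $\mathbb C[W]$.

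Next I would invoke the orthogonality of characters on the maximal compact subgroup $K\subset G$: for any finite-dimensional representation $\rho$ of $G$ and any irreducible $U$, the multiplicity of $U$ in $\rho$ is $\int_K \chi_\rho(g)\,\overline{\chi_U(g)}\,d\mu=\int_K\chi_\rho(g)\,\chi_{U^*}(g)\,d\mu$. Applying this degree by degree to $\rho=\mathbb C[W]$, whose graded character is the generating function $\sum_n \chi_{\mathbb C[W]_n}(g)\,t^n = 1/\det_W(1-tg) = PE[t](g)$ (the plethystic exponential of \eqref{MW_G}), one gets
\begin{equation*}
P((\mathbb C[W]\otimes U)^G,t)=\int_K \chi_{U^*}(g)\,\frac{1}{\det_W(1-tg)}\,d\mu
=\int_K \chi_U(g^{-1})\,PE[t](g)\,d\mu\,,
\end{equation*}
using $\chi_{U^*}(g)=\chi_U(g^{-1})$. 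Finally, the Weyl integration formula \cite{procesi,Weyl} replaces $\int_K(\cdots)d\mu$ by the integral over the Cartan subgroup $T$ weighted by $\phi$, and introducing coordinates $z=(z_1,\dots,z_r)$ on $T\cong(S^1)^r$ turns this into the iterated contour integral $\oint_{|z_1|=1}\cdots\oint_{|z_r|=1}\chi_U(z^{-1})\,PE[t](z)\,\phi(z)\,d\nu$, exactly as in \eqref{MW_MvaluesinW}; this is the same passage already carried out between \eqref{MW_C} and \eqref{MW_D}, now with the extra character factor $\chi_U(z^{-1})$ carried along inertly. The two displayed lines in \eqref{MW_MvaluesinW} are then just the two notations $Tr_U[g^{-1}(z)]=\chi_U(z^{-1})$ and $1/\det_W(1-tg(z))=PE[t](z)$ for the same integrand.

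The only genuine point requiring care — and the step I would treat as the main obstacle — is the interchange of the (infinite) sum over polynomial degrees with the (compact-group) integral: one must justify that $\sum_n \chi_{\mathbb C[W]_n}(g)\,t^n$ converges to $1/\det_W(1-tg)$ uniformly on $K$ for $|t|$ small and that termwise integration is legitimate. This is standard: for $|t|<1/\|g\|$ the geometric/plethystic expansion converges absolutely and uniformly on the compact group $K$ (the eigenvalues of $g\in K$ have modulus $1$), so dominated convergence applies and the identity holds as an identity of formal power series in $t$ (equivalently, of meromorphic functions near $t=0$), which is all that is needed. With that subtlety dispatched, the result is immediate from Proposition \ref{prop1} by carrying the character factor through the same Weyl-integration/residue reduction. $\square$
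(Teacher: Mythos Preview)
Your proof is correct and follows the standard character-orthogonality/Weyl-integration argument; the paper itself does not give a proof but simply recalls the statement from \cite[Remark 3.4.3]{DerksenKemper}, so there is nothing further to compare.
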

A number of observations are in order. First, if the representation $U$ of $G$ is self-dual (which is always the case if $G$ is a classical group of type $B_r$, $C_r$ or an exceptional group but type $E_6$), then the character contribution 
$\chi_U(z^{-1})$ equals $\chi_U(z)$. Secondly, the vector space $U$ in Prop. \ref{prop2} is trivially parametrized, namely it carries no dependence w.r.t. $t$: of course, this can be changed at will, but provided the range of the powers of the parameter in the definition of the Hilbert-Poincaré series is adjusted accordingly. Finally, we make a simple but relevant observation for $U$ irreducible: $P((\mathbb C[W]\otimes U)^G, t)$ measures the dimension of the space of $G$-invariant polynomials on $W$ with values in $U$ and this is, at the same time, the multiplicity of the representation $U^*$ in $\mathbb C[W]$.
\vskip0.2cm\par
The general setting of Theorem \ref{thm:3} discussed later on is as follows: $\mathfrak p=\mathfrak p_{\bar 0}\oplus\mathfrak p_{\bar 1}$ is a finite-dimensional Lie superalgebra endowed with a $\mathbb Z$-grading $\mathfrak p=\oplus_{d\in\mathbb Z}\mathfrak p_d$ that is consistent, namely $\mathfrak p_{\bar 0}=\oplus_{d\in\mathbb Z}\mathfrak p_{2d}$ and $\mathfrak p_{\bar 1}=\oplus_{d\in\mathbb Z}\mathfrak p_{2d+1}$, $\mathfrak m=\oplus_{d<0}\mathfrak p_d$ is the negatively-graded part of $\mathfrak p$, $W=W_{\bar 0}\oplus W_{\bar 1}$ is a finite-dimensional representation of $\mathfrak m$ endowed with a consistent $\mathbb Z$-grading, i.e., $W=\oplus_{d\in\mathbb Z}W_d$ with $W_{\bar 0}=\oplus_{d\in\mathbb Z}W_{2d}$, $W_{\bar 1}=\oplus_{d\in\mathbb Z}W_{2d+1}$, and $\mathfrak p_{d'}\cdot W_{d''}\subset W_{d'+d''}$ for all $d',d''\in\mathbb Z$, $d'<0$. 
For all $d>0$, we parametrize elements with $\mathbb Z$-grading $2d$ by the parameter $u_d$ 
and elements with $\mathbb Z$-grading $2d-1$ by the parameter $t_d$. We remark that this prescription covers all the
subspaces $\mathfrak p_{-2d}^*$, $W_{2d}$, and, respectively, the subspaces $\mathfrak p_{-2d+1}^*$, $W_{2d-1}$; in particular it fully parametrizes $\mathfrak m^*$. In order to fully parametrize also $W$, we parametrize elements in $W_0$ by $u_0:=1$ and, for all $d> 0$, elements in $W_{-2d}$ and $W_{-2d+1}$ by $\tfrac{1}{u_{d}}$ and, respectively, $\tfrac{1}{t_{d}}$. 
We let $M$ be the number of parameters in $u=(u_1,\ldots,u_M)$ and $N$ the number of parameters in $t=(t_1,\ldots,t_N)$. Finally, we denote by $G$ the connected and simply connected Lie group with Lie algebra $\mathfrak g:=\mathfrak p_0$.

Now, consider the space 
$\mathbb C[\mathfrak m]:=\odot^\bullet \mathfrak m^*=\Lambda^\bullet \mathfrak m_{\bar 0}^*\otimes \odot ^\bullet \mathfrak m_{\bar 1}^*$
of supersymmetric polynomials on $\mathfrak m$ with the natural $\mathbb Z$-multigrading
$
\mathbb C[\mathfrak m]=\bigoplus^{m=(m_I)_{I=1}^M}_{{n=(n_J)_{J=1}^N}\atop m_I,n_J\geq 0}\mathbb C[\mathfrak m]_{m,n}
$
and the space $C^\bullet(\mathfrak m, W):=\mathbb C[\mathfrak m]\otimes W$ of Spencer cochains 
with the $\mathbb Z$-multigrading
$C^\bullet(\mathfrak m, W)=\bigoplus^{m=(m_I)_{I=1}^M\atop n=(n_J)_{J=1}^N}_{m_I,n_J\geq -1} C^\bullet(\mathfrak m, W)_{m,n}$.
(Note that here the indices start from $-1$, because $W$ is allowed to have non-trivial negative components.)
We fix an auxiliary irreducible representation $U$ of $G$ thought classically, i.e., bosonic and of zero $\mathbb Z$-grading, and define
the multiparameter Hilbert-Poincar\'e $U$-series as follows.
\begin{definition}
\label{def:3}
The Hilbert-Poincar\'e $U$-series of the space $C^\bullet(\mathfrak m, W)$
is
\vskip0.3cm\par\noindent
\begin{equation}
\begin{aligned}
\label{eq:sdimHPUseriesSpencer}
P_U(C^\bullet(\mathfrak m, W),u,t):&=P((C^\bullet(\mathfrak m, W)\otimes U^*)^G,u,t)\\
&=\!\!\!\!\!
\sum_{{m=(m_I)_{I=1}^M\atop n=(n_J)_{J=1}^N}\atop m_I,n_J\geq -1}
\!\!\!\!\!\operatorname{sdim}(C^\bullet(\mathfrak m, W)_{m,n}\otimes U^*)^G\,u_1^{m_1}\cdots u_M^{m_M}\,t_1^{n_1}\cdots t_N^{n_N}\;,
\end{aligned}
\end{equation}
and collapsing it
with the relations $u_I=t^{2I}$ and $t_J=t^{2J-1}$, gives the collapsed Hilbert-Poincar\'e $U$-series $P_U(C^\bullet(\mathfrak m, W),u=t^2,t)=
\sum_{d\in\mathbb Z}
\operatorname{sdim}(C^{d,\bullet}(\mathfrak m, W)\otimes U^*)^G\,t^{d}$ in $t$.
\end{definition}
\begin{remark}
\label{rem:1}{\rm
Choosing $U=\mathbb C$ to be the trivial representation, one gets the Hilbert-Poincar\'e series $P_{\mathbb C}(C^\bullet(\mathfrak m, W),u,t)=P(C^\bullet(\mathfrak m, W)^G,u,t)$
of the space of $G$-invariant Spencer cochains.
}
\end{remark}
\begin{remark}
\label{rem:2}
{\rm
We note that $C^\bullet(\mathfrak m, W)_{m,n}$ and also $C^{d,\bullet}(\mathfrak m, W)$ are not of a definite statistics in general, i.e., each splits into the non-trivial direct sum of its bosonic and fermionic components. The series thus counts the ``supermultiplicity'' of the representation $U$ in the space of Spencer cochains: bosonic (fermionic) representations contribute positively (negatively) to the multiplicity and there might be cancellations. The straightforward analogue to formula \eqref{eq:sdimHPseries} does not hold in this context and it will be replaced by part $2$ of Theorem \ref{thm:3} below.
}
\end{remark}
We only need to introduce one last notion.
\begin{definition}
\label{def:weightedcharacter}
The weighted character of the representation $W=W_{\bar 0}\oplus W_{\bar 1}$ of $G$ with consistent $\mathbb Z$-grading $W=\oplus_{d\in\mathbb Z}W_d$ is the formal series $F[u,t]:T\rightarrow\mathbb C$ in $u=(u_1,\ldots,u_M)$ and $t=(t_1,\ldots,t_N)$ given by
\begin{equation}
\label{eq:weightedcharcter}
F[u,t]=\sum_{d>0}u_d\,\chi_{W_{2d}}-\sum_{d>0}t_d\,\chi_{W_{2d-1}}+\chi_{W_0}-\sum_{d>0}\tfrac{1}{t_d}\,\chi_{W_{-2d+1}}+\sum_{d>0}\tfrac{1}{u_d}\,\chi_{W_{-2d}}\,,
\end{equation}
where $\chi_{U}:T\rightarrow \mathbb C$ is the character of the (underlying classical) representation $U$.
\end{definition}
Note the additional minus sign for characters of fermionic representations $W_{2d-1}$ and $W_{-2d+1}$, which is due to the fact that we are dealing with superdimensions.
\begin{theorem}
\label{thm:3}
Let $\mathfrak p=\oplus_{d\in\mathbb Z}\mathfrak p_d$ be a Lie superalgebra with a consistent $\mathbb Z$-grading
and $G$ the connected and simply connected Lie group with Lie algebra $\mathfrak g=\mathfrak p_0$. Let
$W=\oplus_{d\in\mathbb Z}W_d$ be a representation of the negatively-graded part $\mathfrak m=\oplus_{d<0}\mathfrak p_d$  of $\mathfrak p$ endowed with a consistent $\mathbb Z$-grading and $C^\bullet(\mathfrak m, W)$ the space  of Spencer cochains. If $G$ is semisimple, then, for any irreducible classical representation $U$ of $G$ (i.e., bosonic and with zero $\mathbb Z$-grading), we have:
\begin{enumerate}
	\item The Hilbert-Poincar\'e $U$-series of $C^\bullet(\mathfrak m, W)$ can be computed via the Molien-Weyl formula
\begin{equation}
\label{eq:MWthm}
P_U(C^\bullet(\mathfrak m, W),u,t)=\oint_{|z_1|=1}\!\!\!\!\!\!\cdots \oint_{|z_r|=1} 
\chi_U(z)\, F[u,t](z^{-1})\, PE[u,t](z)\, \phi(z)\,\diff\nu\;,
\end{equation}
where the measure $\diff\nu$ and the Weyl weight function $\phi$ are as in \eqref{MW_E}, 
$\chi_U:T\rightarrow \mathbb C$ is the character of $U$,  $F[u,t]:T\rightarrow\mathbb C$ the weighted character of $W$ as in Definition \ref{def:weightedcharacter} and $PE(u,t)$ the plethystic exponential of $W$ as in \S \ref{sec2.1};
	\item The Hilbert-Poincar\'e $U$-series 
    collapsed at $u=t^2$ coincides with
\vskip0.2cm\par\noindent
\begin{align*}
P_U(C^\bullet(\mathfrak m, W),u=t^2,t)&=\sum_{d\in\mathbb Z}(-1)^d\chi(C^{d,\bullet}(\mathfrak m, W)\otimes U^*)^G\,t^{d}\\
&=\sum_{d\in\mathbb Z}(-1)^d\chi(H^{d,\bullet}(\mathfrak m, W)\otimes U^*)^G\,t^{d}
\;,
\end{align*}
where $\chi$ is the usual Euler characteristic w.r.t. the form number, and it can be computed using the Molien-Weyl formula \eqref{eq:MWthm} with $u=t^2$.
\end{enumerate}
\end{theorem}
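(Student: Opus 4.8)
The plan is to obtain Theorem~\ref{thm:3} by merging the two refinements of the classical Molien--Weyl formula already set up in \S\ref{sec2}: the passage to the super/statistics-graded polynomial algebra $\mathbb C[\mathfrak m]=\Lambda^\bullet\mathfrak m_{\bar 0}^*\otimes\odot^\bullet\mathfrak m_{\bar 1}^*$ (Proposition~\ref{prop1}, built on \eqref{MW_B}, \eqref{MW_G}, \eqref{MW_J}) and the passage to polynomials with values in a representation (Proposition~\ref{prop2}, built on \eqref{MW_MvaluesinW}), then upgrading both from a single grading parameter to the $\mathbb Z$-graded multiparameter $(u,t)$ by keeping track of degrees. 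For Part~1 I would start from the identity $P_U(C^\bullet(\mathfrak m,W),u,t)=P\bigl((\mathbb C[\mathfrak m]\otimes W\otimes U^*)^G,u,t\bigr)$ of Definition~\ref{def:3} and read off its meaning on the maximal torus $T$ via Weyl integration, exactly as in the derivation of \eqref{MW_D}: the series $\sum_{m,n}\operatorname{sdim}(\cdots)^G\,u^m t^n$ equals the constant $z$-coefficient, against $\phi(z)\,\diff\nu$, of the $(u,t)$-weighted virtual character of $\mathbb C[\mathfrak m]\otimes W\otimes U^*$. Everything then reduces to factoring that weighted character over the three tensor factors, since the $\mathbb Z$-grading is additive under tensor products and hence the parametrizing monomials multiply.

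The factor $U^*$ contributes $\chi_{U^*}(z^{-1})=\chi_U(z)$, as in Proposition~\ref{prop2}, with no $(u,t)$-dependence since it has zero $\mathbb Z$-grading. The factor $W=\oplus_d W_d$ also sits in the ``values'' slot, so it contributes the character of $W$ at $z^{-1}$, now refined: each graded summand $W_d$ is weighted by the monomial recording its $\mathbb Z$-grading under the parametrization fixed earlier and is decorated by a sign $(-1)$ on the $\mathbb Z_2$-odd summands (forced by $\operatorname{sdim}$); by Definition~\ref{def:weightedcharacter} this is exactly $F[u,t](z^{-1})$. Finally $\mathbb C[\mathfrak m]$ sits in the ``variables'' slot: by \eqref{MW_G} the bosonic block $\odot^\bullet\mathfrak m_{\bar 1}^*=\bigotimes_{d>0}\odot^\bullet\mathfrak p_{-2d+1}^*$ contributes $\prod_{d>0}1/\det_{\mathfrak p_{-2d+1}^*}(1-t_d g(z))$, while by \eqref{MW_J} together with the ``$u\mapsto-u$'' mechanism behind \eqref{MW_B} the fermionic block $\Lambda^\bullet\mathfrak m_{\bar 0}^*=\bigotimes_{d>0}\Lambda^\bullet\mathfrak p_{-2d}^*$ contributes $\prod_{d>0}\det_{\mathfrak p_{-2d}^*}(1-u_d g(z))$, the sign $(-1)^n$ of the $n$-th exterior power being precisely the $\operatorname{sdim}$ sign of the fermionic $\Lambda^n\mathfrak m_{\bar 0}^*$; their product is $PE[u,t](z)$ in the sense of \eqref{MW_K}. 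Assembling the three factors and integrating yields \eqref{eq:MWthm}. There is no convergence issue to address: $\mathfrak m$, $W$, $U$ being finite-dimensional and the $\mathbb Z$-gradings bounded, for each fixed total $(u,t)$-degree the integrand is a Laurent polynomial in $z$ and the integral merely extracts its constant term.

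For Part~2 I would argue formally. Under the collapse $u_I=t^{2I}$, $t_J=t^{2J-1}$, a monomial attached to $C^\bullet(\mathfrak m,W)_{m,n}$ becomes $t^{D}$ with $D$ the total $\mathbb Z$-grading of that component --- since $u_I$ and $t_J$ mark $\mathbb Z$-gradings $\pm2I$ and $\pm(2J-1)$, the exponent $\sum_I 2I\,m_I+\sum_J(2J-1)\,n_J$ is exactly the net $\mathbb Z$-grading --- so the collapsed series is organized by $\mathbb Z$-grading. At fixed $d$, a Spencer cochain of form number $q$ has statistics $q+d\pmod{2}$, because each of its $q$ factors from $\mathfrak m^*$ has form number $1$ and its own $\mathbb Z$-grading, the $W$-factor has form number $0$, statistics is additive, and $U^*$ is bosonic of zero grading; hence $\operatorname{sdim}\bigl(C^{d,\bullet}(\mathfrak m,W)\otimes U^*\bigr)^G=(-1)^d\sum_q(-1)^q\dim\bigl(C^{d,q}(\mathfrak m,W)\otimes U^*\bigr)^G=(-1)^d\chi\bigl(C^{d,\bullet}(\mathfrak m,W)\otimes U^*\bigr)^G$, which is the first claimed equality. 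The second is the standard fact that the Euler characteristic of a finite-dimensional complex equals that of its cohomology: the Spencer (Chevalley--Eilenberg) differential is $G$-equivariant and preserves $\mathbb Z$-grading while raising form number by one, so it makes each $\bigl(C^{d,\bullet}(\mathfrak m,W)\otimes U^*\bigr)^G$ into a finite complex with cohomology $\bigl(H^{d,\bullet}(\mathfrak m,W)\otimes U^*\bigr)^G$ (taking $U^*$-isotypic $G$-invariants is exact as $G$ is reductive), and one passes to cohomology in the form number.

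The main obstacle --- and it is bookkeeping rather than anything conceptual --- lies entirely in Part~1: keeping the three independent sources of signs straight, namely the $\operatorname{sdim}$ sign of exterior powers of the fermionic $\mathfrak m_{\bar 0}^*$ (absorbed into the fermionic plethystic exponential \eqref{MW_J}), the sign on $\mathbb Z_2$-odd summands of $W$ in the weighted character \eqref{eq:weightedcharcter}, and the fact that the statistics of a dual space is its $\mathbb Z_2$-grading shifted by the form number --- all to be threaded consistently through the $z\leftrightarrow z^{-1}$ split between ``variables'' ($PE[u,t](z)$) and ``values'' ($\chi_U(z)\,F[u,t](z^{-1})$). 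A clean derivation would anchor each factor of \eqref{eq:MWthm} to one of \eqref{MW_B}, \eqref{MW_G}, \eqref{MW_J}, \eqref{MW_MvaluesinW}, so that no sign is ever introduced by hand.
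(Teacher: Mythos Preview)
Your proposal is correct and follows essentially the same approach as the paper's own proof. The paper's argument for Part~1 is the one-line statement that it is ``the combination of the results of Proposition~\ref{prop1} and Proposition~\ref{prop2}, together with the simple observation on characters that $\chi_{U^*}(z^{-1})=\chi_U(z)$'', and for Part~2 it unpacks $\operatorname{sdim}$ via the identity ``statistics is the parity of $q+d$'' and then invokes $G$-equivariance, complete reducibility, and the standard telescopic argument --- exactly the three ingredients you spell out, only you give considerably more detail on the bookkeeping.
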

\begin{remark}
{\rm The Euler characteristics computed in part $2$ of Theorem \ref{thm:3} is well-defined, since it is at a fixed $\mathbb Z$-grading: the vector spaces involved are finite-dimensional and in finite number.}
\end{remark}
\begin{proof}
The first claim is  the combination of the results of Proposition \ref{prop1} and Proposition \ref{prop2}, together with the simple observation on characters that $\chi_{U^*}(z^{-1})=\chi_U(z)$.

For the second claim, 
the definition of collapsed Hilbert-Poincaré $U$-series reads as
\begin{equation*}
\label{eq:sdimHPseriesII} 
\begin{aligned}
P_U(C^\bullet(\mathfrak m, W),u=t^2,t)&=
\sum_{d\in\mathbb Z}
\operatorname{sdim}(C^{d,\bullet}(\mathfrak m, W)\otimes U^*)^G\,t^{d}\\
&=\sum_{d\in\mathbb Z}\Big(\operatorname{dim}(C^{d,\bullet}_{\bar 0}(\mathfrak m, W)\otimes U^*)^G-\operatorname{dim}(C^{d,\bullet}_{\bar 1}(\mathfrak m, W)\otimes U^*)^G\,\Big)\,t^{d}\\
&=\sum_{d\in\mathbb Z}\sum_{q\geq 0}(-1)^{q+d}\operatorname{dim}(C^{d,q}(\mathfrak m, W)\otimes U^*)^G\,t^{d}\\
&=\sum_{d\in\mathbb Z}(-1)^d\chi(C^{d,\bullet}(\mathfrak m, W)\otimes U^*)^G\,t^{d}
\;,
\end{aligned}
\end{equation*}
where we used that the statistics is the parity of the sum of the form number $q$ and $\mathbb Z$-grading $d$. 

Since $\chi(C^{d,\bullet}(\mathfrak m, W)\otimes U^*)^G$ is nothing but the Euler characteristic of the $U$-isotypic component of the Spencer complex
at a fixed $\mathbb Z$-grading, it coincides with the Euler characteristic of its cohomology, by $G$-equivariance, complete reducibility, and 
the standard telescopic argument. 
\end{proof}

\section{Hilbert-Poincaré series of the $D=11$ Poincaré superalgebra}\label{sec6}

In this section, we apply the Molien-Weyl formula to $D=11$ flat superspace $M=P/G$. Let us first introduce the necessary ingredients and then focus on the analysis of the emerging results.

\subsection{The $D=11$ ingredients}
\label{subsec:MWD=11ingredients}
The flat superspace $M=P/G$ can be dually described in terms of the supervielbein $\mathpzc{E}^I=\{e^a,\psi^\alpha\}$, satisfying the Maurer-Cartan equations\footnote{Here, $a,b,\ldots=0,1,\ldots,10$ are vector indices, $\alpha=1,\ldots,32$ spinorial indices, and $\psi^\alpha$ is a Majorana gravitino. For simplicity, in the following we will frequently omit writing the spinorial index.}
\begin{eqnarray}\label{MC11d}
{\diff} e^a =  {\frac{i}{2}} \bar{\psi} \Gamma^a \psi \,,
~~~~~
{\diff} \psi^\alpha =0 \,,
\end{eqnarray}
where ${\diff}$ is the usual exterior differential, the bar symbol is the symplectic duality (or Dirac conjugation) on the $D=11$ Majorana spinor representation $S$ and the capital Greek letter $\Gamma$ denotes the $D=11$ Dirac matrices\footnote{The convention is that $\{\Gamma^a,\Gamma^b\}=2\eta^{ab}$, where $\eta$ is the flat metric on $V\cong \mathbb R^{1,10}$ with mostly minus signature.}. (Recall that our analysis in \S\ref{sec2} works at the complexified level, so there is no loss in generality here.) The symplectic duality is explicitly given by $\bar\psi=\psi^T C$, with ${}^T$ the transpose and $C$ the $D=11$ charge conjugation matrix, which satisfies
$\Gamma_a^T=-C\Gamma_a C^{-1}$, $C^T=C^{-1}=-C$.
The Spencer differential also acts on the generators $\mathpzc X_{\tilde I}=\{ L_{ab},X_a, q_\alpha\}$ of $\fp$  via
\begin{eqnarray}
\label{elevA}
{\diff} X_a =0\,, ~~~~~~
\diff q_\alpha=\frac i2 (\bar\psi\Gamma^a)_\alpha X_a\,, ~~~~~~
{\diff} L_{ab} = X_{[a} e_{b]} - {\frac{1}{2}} \bar q\,\Gamma_{ab} \psi \,,
\end{eqnarray}
which complement the Maurer-Cartan equations \eqref{MC11d}.

The semisimple connected group $G$ is of type $B_5$, with Cartan subgroup $T\cong (S^1)^5$ and all representations completely reducible and self-dual. The key ingredients needed for the Molien-Weyl formula are the characters $\chi_{V}:T\rightarrow\mathbb C$, $\chi_{S}:T\rightarrow\mathbb C$ and $\chi_{\mathfrak{so}(V)}:T\rightarrow\mathbb C$ of the representations $V$, $S$ and the adjoint representation $\fso(V)$, respectively. Their explicit expressions \eqref{11dAa}-\eqref{11dAc}, together with the plethystic exponentials and Haar measure $\diff \mu|_{T}$, can be found in Appendix \ref{appchar}.

Since we will consider cochains with values in $\mathfrak p$, we have to consider the weighted character
\begin{eqnarray}
\label{elevC}
F[u,t](z) = \frac1u \chi_{V}(z) - \frac{1}{t} \chi_{S}(z) 
+ \chi_{\mathfrak{so}(V)}(z)\;,
\end{eqnarray}
where we assign parameters
$[e^a] = {u}$, $[\psi^\alpha]=t$, and $[X_a]={u^{-1}}$, $[q_\alpha]=t^{-1}$, $[L_{ab}]=1$ as in \S \ref{sec:2.2}. 
The multiparameter Hilbert-Poincar\'e series for the space of $G$-invariant Spencer cochains is defined as in Definition \ref{def:3} and Remark \ref{rem:1} and it can be computed via the Molien-Weyl formula 
\begin{eqnarray}
\label{elevB}
P(C^{\bullet}(\mathfrak{m},\mathfrak p)^G,u,t) = \oint_{|z_1|=1}\!\!\!\!\!\!\cdots \oint_{|z_5|=1}  F[u,t] 
PE[\chi_V u] PE[\chi_S t] \diff\mu|_T \,,
\end{eqnarray}
due to part $1$ of Theorem \ref{thm:3} and the fact that all the representations of the group $G$ are self-dual. 

\subsection{A warm up: the trivial representation}
Before focusing on Spencer cochains in non-trivial irreducible representations, let us set the stage by discussing the Hilbert-Poincar\'e series for Lorentz-invariant cochains.
\begin{proposition}
The explicit expression of the Hilbert-Poincar\'e series of the space of  $G$-invariant Spencer cochains of the $D=11$ Poincaré superalgebra is given by 
\begin{eqnarray}\label{elevD}
P(C^{\bullet}(\mathfrak{m},\mathfrak p)^G,u,t) &=&
\frac{1}{\left(1- t^4\right) u}
{(u-1) \left[-t^8 u^7+t^6 \left(-\left(u^9-u^8+u^6\right)\right) \right.}
\nonumber \\
&+&t^4 \left(u^{10}+u^9-u^8+u^7-u^6+2 u^5-2 u^4+u\right)
\nonumber \\
&+&t^2 \left(-2 u^{10}+u^9-u^8+u^7-3 u^6+u^5-u^4+u^3-2 u^2-1\right)
\nonumber \\
&+&{\left. u \left(u^{10}+u^8+u^6+u^4+u^2+u+2\right)\right]} \,.
\end{eqnarray}
The collapsed  Hilbert-Poincar\'e series is thus given by $P(C^{\bullet}(\mathfrak{m},\mathfrak p)^G,u=t^2,t)=-1 - t^6 + t^8 $, with the following non-trivial Euler characteristics of the Spencer cohomology:
\begin{align}
\label{eq:ECinvariant}
\chi(H^{0,\bullet}(\mathfrak m, \mathfrak p))^G&=-1\;,\qquad 
\chi(H^{6,\bullet}(\mathfrak m, \mathfrak p))^G=-1\;, \qquad 
\chi(H^{8,\bullet}(\mathfrak m, \mathfrak p))^G=+1\;.
\end{align}
\end{proposition}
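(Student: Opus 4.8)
The plan is to apply part~1 of Theorem~\ref{thm:3} (equivalently Proposition~\ref{prop1}) with $W=\mathfrak p$ and $U=\mathbb C$ the trivial representation, so that the Hilbert-Poincar\'e series is the iterated contour integral \eqref{elevB}, and then to evaluate it by iterated residues. First I would assemble the concrete ingredients of Appendix~\ref{appchar}: the characters $\chi_V(z)$, $\chi_S(z)$, $\chi_{\mathfrak{so}(V)}(z)$ for $G$ of type $B_5$ in the torus coordinates $z=(z_1,\ldots,z_5)$, the two plethystic exponentials $PE[\chi_V u]=\prod_{\text{weights }\lambda\text{ of }V}(1-u\,z^\lambda)^{-1}$ and $PE[\chi_S t]=\prod_{\text{weights }\mu\text{ of }S}(1-t\,z^\mu)^{-1}$, the weighted character $F[u,t]$ from \eqref{elevC}, and the Weyl measure $\phi(z)\,\diff\nu=\diff\mu|_T$. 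Note that $V$ contributes the $11$ weights $\{0,\pm z_i\}$ and $S$ the $32$ weights $\{\tfrac12(\pm z_1\pm\cdots\pm z_5)\}$ (exponentially), so the integrand is an explicit rational function in $z_1,\ldots,z_5,u,t$.

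Next I would perform the five contour integrations $\oint_{|z_i|=1}$ one variable at a time via the residue theorem, picking up the poles inside the unit disk. This is the computational heart of the argument and is exactly the step carried out on a machine (the paper says the Molien-Weyl formula is ``implemented on machines using the residue formula for meromorphic forms''); the output is the closed rational expression \eqref{elevD} for $P(C^{\bullet}(\mathfrak{m},\mathfrak p)^G,u,t)$. I would then verify \eqref{elevD} as a sanity check by expanding it as a power series in $u$ and $t$ and comparing low-order coefficients against a direct hand count of $\operatorname{sdim}\big(C^\bullet(\mathfrak m,\mathfrak p)_{m,n}\big)^G=(-1)^m\dim\big(\Lambda^m(V^*)\otimes\odot^n(S^*)\otimes\mathfrak p\big)^G$ for small $m,n$ (e.g.\ $m+n\le 2$), which is feasible by classical $\mathfrak{so}(11)$ invariant theory.

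Finally I would collapse the two-variable series by the substitution $u=t^2$ dictated by the $\mathbb Z$-grading (Definition~\ref{def:3}): substituting $u=t^2$ into \eqref{elevD} and simplifying the rational function, the factor $(1-t^4)u=(1-t^8)t^2$ in the denominator must cancel against the bracketed numerator, leaving the polynomial $P(C^{\bullet}(\mathfrak{m},\mathfrak p)^G,u=t^2,t)=-1-t^6+t^8$. By part~2 of Theorem~\ref{thm:3}, this collapsed series equals $\sum_{d}(-1)^d\chi\big(H^{d,\bullet}(\mathfrak m,\mathfrak p)\big)^G\,t^d$; reading off the coefficients of $t^0$, $t^6$, $t^8$ (and noting $0$, $6$, $8$ are even so $(-1)^d=1$) yields exactly \eqref{eq:ECinvariant}, with all other Euler characteristics vanishing.

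The main obstacle is the bookkeeping of the five-fold iterated residue: one must correctly identify which of the many factors $(1-u\,z^\lambda)$, $(1-t\,z^\mu)$ and which Weyl-denominator factors contribute poles inside each successive unit circle (the answer depends on the order of integration and on auxiliary choices $|u|,|t|<1$), and one must track the combinatorial explosion of residue terms before the dust settles into \eqref{elevD}. Everything downstream — the $u=t^2$ collapse and extracting \eqref{eq:ECinvariant} — is then a routine polynomial simplification, and the cross-check against small-$(m,n)$ invariant counts guards against sign or normalization slips in $F[u,t]$ and the plethystic exponentials.
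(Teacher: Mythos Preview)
Your approach is exactly the paper's: evaluate the Molien--Weyl integral \eqref{elevB} by iterated residues to obtain \eqref{elevD}, collapse at $u=t^2$, and invoke part~2 of Theorem~\ref{thm:3} to read off the Euler characteristics. Two slips would derail the computation if followed literally, though. First, you wrote $PE[\chi_V u]=\prod_\lambda(1-u\,z^\lambda)^{-1}$, but $V^*$ has \emph{fermionic} statistics (Table~\ref{Conventions}), so the correct plethystic exponential is the product $\prod_\lambda(1-u\,z^\lambda)$ without the inverse, as in \eqref{MW_J} and Appendix~\ref{appchar}; swapping this puts the eleven $V$-weight factors in the denominator rather than the numerator and gives an entirely different rational function. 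Second, at $u=t^2$ the denominator $(1-t^4)u$ becomes $(1-t^4)t^2$, not $(1-t^8)t^2$ as you wrote. With these corrected, your plan matches the paper's proof verbatim.
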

\begin{proof}
The first two claims follow from the computation of \eqref{elevB} using the residue formula and setting $u=t^2$. The last claim follows from the fact that the collapsed Hilbert-Poincaré series coincide with $\sum_{d\in\mathbb Z}(-1)^d\chi(H^{d,\bullet}(\mathfrak m, \mathfrak p))^G\,t^{d}$, which was
established in Theorem \ref{thm:3}.
\end{proof}
Equation \eqref{elevD} describes the possible Lorentz-invariant objects constructed in terms of the ingredients discussed above, i.e., the Lorentz-invariant Spencer cochains on $\mathfrak m$ with values in $\mathfrak p$. 
For instance, the factor $1/(1-t^4)$ stands for the powers of the invariant cochain $\bar\psi\Gamma_{abcde} \psi
\bar\psi\Gamma^{abcde} \psi$ (which is a scalar commuting cochain, and therefore it can appear with any power).

The non-trivial Euler characteristics \eqref{eq:ECinvariant} immediately enforce the existence of at least three non-trivial Lorentz-invariant cohomology classes -- since $G=\Spin^\circ(V)$ acts completely reducibly, one can always choose a Lorentz-invariant representative for each of the Lorentz-invariant classes. The interpretation is the following:
\begin{align}
\label{elevE}
-1 : \qquad & \omega^{(1)} = X_a e^a - \bar q \psi \,, \nonumber \\
- t^6 : \qquad & \omega^{(5)} = \left(X_a e^a - \bar q \psi\right) \wedge  
(\bar\psi \Gamma_{cd} \psi e^c e^d) \,, \nonumber \\
+ t^8 : \qquad & \omega^{(6)} = {i} \bar\psi \Gamma_{abcde} \psi e^a e^b e^c e^d X^e {+ \frac{3}{2}}
(\bar\psi \Gamma_{ab} \psi e^a e^b) \wedge  (\bar \psi \Gamma_{cd} \psi L^{cd} {- 10 i \, \bar{q}\Gamma_c \psi e^c} ) \,.
\end{align}
Then, we see that 
\begin{align}
\label{evelE}
 \omega^{(5)}  &= {2} \omega^{(1)} \wedge \omega^{(4)}_2 \,,\\
  \omega^{(6)} &= {i} \bar\psi \Gamma_{abcde} \psi e^a e^b e^c e^d X^e {+ 3}
\omega^{(4)}_2 \wedge  (\bar \psi \Gamma_{cd} \psi L^{cd} {- 10 i \, \bar{q}\Gamma_c \psi e^c}) \,,
\end{align}
where the Lorentz-invariant scalar 4-form $\omega^{(4)}_2= {\frac{1}{2}} \bar\psi \Gamma_{ab} \psi e^a e^b$ had been already discovered in \cite{CGNR} and  its closedness therein established.
It is immediate to check that also $\omega^{(1)}$ is closed and that 
\begin{align}
\label{evelF}
\diff \omega^{(6)} = & {-2}  \bar\psi \Gamma_{abcde} \psi \bar\psi \Gamma^a \psi e^b e^c e^d X^e \nonumber \\
& {+ 3}
\omega^{(4)}_2 \wedge \left[  \bar \psi \Gamma_{cd} \psi \left( X^c e^d - {\frac{1}{2}} \bar q \Gamma^{cd} \psi \right) {- 5  \left( \bar{\psi} \Gamma_{cd} \psi X^c e^d + \bar{q} \Gamma_c \psi \bar{\psi} \Gamma^c \psi \right) }\right] = 0 \,,
\end{align}
thanks to the Fierz identity
\begin{equation}
    A \, \Gamma_a\psi \bar\psi \Gamma^a\psi + B \, \Gamma_{ab}\psi \bar\psi \Gamma^{ab}\psi + C \, \Gamma_{abcde} \psi  \bar\psi \Gamma^{abcde} \psi =0 \,, \qquad A - 10 \, B - 6 \cdot 5!\, C = 0 \,.
\end{equation}
 \subsection{Analysis of the Hilbert-Poincaré series}\label{sec3.3}

We discuss the Hilbert-Poincar\'e series for different irreducible representations of the simple Lie algebra of type $B_5$. By defining a formal linear combination
$\Lambda=\sum_{[a_1,\ldots,a_5]\in\mathbb N^5} 
\lambda_{(a_1, \dots, a_5)} \chi_{[a_1,\dots,a_5]}$
of the characters $\chi_{[a_1,\ldots,a_5]}:T\rightarrow\mathbb C$ with coefficients $\lambda_{(a_1, \dots, a_5)}$ either $0$ or $1$, 
one can keep track of the different representations, and 
 the Molien-Weyl formula reads
\begin{eqnarray}
\label{tenB}
P_{\Lambda}(C^{\bullet}(\mathfrak{m},\mathfrak p),u,t) = \oint_{|z_1|=1}\!\!\!\!\!\!\cdots \oint_{|z_5|=1} \Lambda(z)\, F[u,t](z)\,
PE[\chi_V u](z)\, PE[\chi_S t](z)\, \diff\mu|_T \,.
\end{eqnarray}
Since we are only interested in cochains with form number $q=2$, we can restrict the number of possible $\lambda_{(a_1, \dots, a_5)}$:
\begin{lemma}
\label{lem:1}
The irreducible modules $U$ that appear in  $C^{d,2}(\mathfrak m,\mathfrak p)$ for positive odd $\mathbb Z$-gradings $d$ are exactly those considered in Table \ref{SummarizingHPMW} of Theorem \ref{thm:2}.
\end{lemma}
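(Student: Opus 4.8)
The plan is to reduce the assertion to a finite, entirely combinatorial check on $\mathfrak{so}(V)$-decompositions, carried out at $\mathbb Z$-grading $d\in\{1,3\}$ and form number $q=2$. First I would recall that a Spencer cochain of form number $2$ is either an element of $(\Lambda^2 S)^*\otimes\mathfrak p$, or of $(S^*\otimes V^*)\otimes\mathfrak p$, or of $(\Lambda^2 V)^*\otimes\mathfrak p$, subject to the $\mathbb Z$-grading bookkeeping of Table \ref{Conventions}: since $[S^*]=+1$, $[V^*]=+2$, and $[\mathfrak{p}_k]=k$ with $k\in\{0,-1,-2\}$, the total $\mathbb Z$-grading of a monomial in $C^{d,2}(\mathfrak m,\mathfrak p)$ is fixed. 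Writing down the possibilities for $d$ odd and positive: with $d=1$ one has $\odot^2 S^*\otimes S$ (grading $2-1=1$) and $S^*\otimes V^*\otimes V$ ($1+2-2=1$); with $d=3$ one has $S^*\otimes V^*\otimes S$ ($1+2-1=2$; wait, that is $d=2$, so discard) — more carefully, $\odot^2 S^*\otimes\mathfrak{so}(V)$ has grading $2$, $\odot^2 S^*\otimes V$ has grading $0$, $S^*\otimes V^*\otimes \mathfrak{so}(V)$ has grading $3$, $S^*\otimes V^*\otimes S$ has grading $2$, $\Lambda^2 V^*\otimes S$ has grading $3$, etc. So the only spaces contributing at $d=1$ are $\odot^2 S^*\otimes S$ and $S^*\otimes V^*\otimes V$, and at $d=3$ are $S^*\otimes V^*\otimes\mathfrak{so}(V)$ and $\Lambda^2 V^*\otimes S$.

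Next I would invoke self-duality of all $B_5$-representations to replace every starred factor by its unstarred counterpart, so that the relevant $\mathfrak{so}(V)$-modules to decompose into irreducibles are: $\odot^2 S\otimes S$, $S\otimes V\otimes V$ (equivalently $S\otimes(\odot^2 V\oplus\Lambda^2 V)$), $S\otimes V\otimes\mathfrak{so}(V)=S\otimes V\otimes\Lambda^2 V$, and $\Lambda^2 V\otimes S$. The claim then amounts to saying that the set of irreducible constituents of this collection is precisely $\{S,(V\otimes S)_o,(\Lambda^2 V\otimes S)_o,(\Lambda^3 V\otimes S)_o,(\Lambda^4 V\otimes S)_o,(\odot^2 V\otimes S)_o,(\odot^3 S)_o,(V\otimes\Lambda^2 V\otimes S)_o\}$, i.e., exactly the eight modules of Table \ref{SummarizingHPMW}. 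This is verified by a direct Littlewood–Richardson / character computation in type $B_5$: decompose $\Lambda^k V$ for $k\le 4$, decompose $\odot^2 S$ (which in $D=11$ is $\Lambda^2 V\oplus\Lambda^5 V$, the even being excluded by parity in $\odot^2 S$ — more precisely $\Lambda^2 S$ vs $\odot^2 S$ must be sorted out using the symmetry of the charge conjugation matrix, $C^T=-C$), then decompose the tensor products of these against $S$, collecting highest-weight constituents. Each such product of a "small" module with the spinor $S$ decomposes with multiplicity-free (or nearly so) leading terms, and the highest-weight component is exactly the $(\cdots\otimes S)_o$ appearing in the table; the lower constituents that appear are themselves among the eight (e.g. $S$ and $(V\otimes S)_o$ recur). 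I would present this as a short lemma-style tabulation rather than grinding each plethysm by hand, citing a LiE-type computation if needed.

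The main obstacle is not conceptual but bookkeeping: one must be scrupulous about (i) the $\odot$ versus $\Lambda$ distinction on the two spinorial entries — because cochains are symmetric on pairs from $\mathfrak m_{\bar 1}=S$, the correct space is $\odot^2 S^*$, and one must know the $\mathfrak{so}(V)$-decomposition of $\odot^2 S$ as opposed to $\Lambda^2 S$ in signature $(1,10)$, which depends on the symmetry type of $C$ and of $C\Gamma_{a_1\cdots a_k}$; (ii) correctly matching every irreducible constituent of the four tensor products above against the Dynkin labels in Table \ref{SummarizingHPMW}, ensuring no constituent is missing and none is spurious. A secondary subtlety is confirming that the modules with form number $2$ at $d=3$, namely those inside $S^*\otimes V^*\otimes\mathfrak{so}(V)$ and $\Lambda^2 V^*\otimes S$, do not introduce any new irreducible type beyond the eight — here one uses that $\mathfrak{so}(V)=\Lambda^2 V$, so $S\otimes V\otimes\Lambda^2 V$ already contains all of $(\Lambda^3 V\otimes S)_o$, $(\odot^2 V\otimes S)_o$, $(V\otimes\Lambda^2 V\otimes S)_o$, and lower terms that reduce to the earlier ones. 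Once the symmetry-of-$C$ issue is pinned down, the rest is a finite verification that I would record compactly in a table.
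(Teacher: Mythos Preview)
Your approach is correct and coincides with the paper's: the paper does not give a textual proof of this lemma but relegates it to the accompanying Mathematica notebook, which amounts to exactly the tensor-product/character decomposition in type $B_5$ that you outline. The enumeration of the relevant pieces at form number $q=2$ is right --- $\odot^2 S\to S$ and $S\otimes V\to V$ at $d=1$, and $S\otimes V\to\mathfrak{so}(V)$ and $\Lambda^2 V\to S$ at $d=3$, with $d\geq 5$ empty for grading reasons --- and the rest is a finite plethysm check.

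One correction to your bookkeeping: in $D=11$ with $C^T=-C$ one has $\odot^2 S\cong \Lambda^1 V\oplus\Lambda^2 V\oplus\Lambda^5 V$, not $\Lambda^2 V\oplus\Lambda^5 V$. The vector piece $\Lambda^1 V$ must be present in the \emph{symmetric} square --- that is precisely where the Dirac current $\kappa:\odot^2 S\to V$ lives (cf.\ the paper's equation~\eqref{eq:FierzIdentity} and the discussion around it). The antisymmetric square is $\Lambda^2 S\cong\mathbb R\oplus\Lambda^3 V\oplus\Lambda^4 V$. Once you use the correct $\odot^2 S$, the decomposition of $\Lambda^5 V\otimes S$ supplies the remaining constituents $(\Lambda^3 V\otimes S)_o$, $(\Lambda^4 V\otimes S)_o$, $(\odot^3 S)_o$ needed at $d=1$, and the eight modules of Table~\ref{SummarizingHPMW} are exactly exhausted.
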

We will not display the relevant characters since they are rather cumbersome, but include them into the Mathematica Notebook attached to the arXiv version of the paper. The complete computation of the Hilbert-Poincar\'e series \eqref{tenB} is then obtained by automatic evaluation on a computer, after a long analysis of the poles and residues, and the final result is Theorem \ref{thm:2}.

\begin{remark}{\rm
It is interesting to note that all coefficients in Table \ref{SummarizingHPMW} are either $0, \pm 1$ and that all the series truncate at the maximal power $t^{11}$. This truncation seems to suggest that the subcomplex of the Spencer complex that is formed by cochains that vanish when all entries are in $V$ is exact, so that the cohomology of the Spencer complex does not change by quotienting it by this subcomplex. If this were the case, then the Spencer cohomology would be isomorphic to the cohomology of the quotient complex 
$\Lambda^\bullet V^*\otimes\fp$ and vanish eventually. However, this subcomplex is not exact for some $\mathbb Z$-gradings in general and there is also a $\mathbb Z$-grading mismatch since $\deg(V)=-2$ (and not $\deg(V)=-1$). The reason behind this truncation phenomenon does not appear therefore obvious and it might deserve a more conceptual understanding.
}
\end{remark}

Let us consider the special case of $U=S$. In this case, we set all $\lambda_{(a_1, \dots, a_5)}=0$ except for $\lambda_{(0,0,0,0,1)}=1$ and get
$P_{S}(C^{\bullet}(\mathfrak{m},\mathfrak p),u=t^2,t) = 
-t+t^3-t^5-t^7+t^9$.
By part $2$ of Theorem \ref{thm:3}, we have non-trivial Euler characteristics of the $S$-isotypic component of the Spencer cohomology at various gradings:
\begin{align}
\label{eq:ECisotypicS}
\chi(H^{1,\bullet}(\mathfrak m, \mathfrak p)\otimes S^*)^G&=+1\;,\qquad
\chi(H^{3,\bullet}(\mathfrak m, \mathfrak p)\otimes S^*)^G=-1\;,\qquad
\chi(H^{5,\bullet}(\mathfrak m, \mathfrak p)\otimes S^*)^G=+1\;,\nonumber\\
\chi(H^{7,\bullet}(\mathfrak m, \mathfrak p)\otimes S^*)^G&=+1\;,\qquad
\chi(H^{9,\bullet}(\mathfrak m, \mathfrak p)\otimes S^*)^G=-1\;.
\end{align}
The first Euler characteristic in \eqref{eq:ECisotypicS} gives a contribution in $H^{1,2}(\mathfrak m,\mathfrak p)$. 
To describe this class, we introduce the following forms 
\begin{align}
\label{tenD}
\omega_1 &= \Gamma_a q \bar\psi \Gamma^a \psi\;, \qquad
\omega_2 = \Gamma_{ab} q \bar\psi \Gamma^{ab} \psi\;, \qquad
\omega_3 = \Gamma_{a_1\dots a_5} q \bar\psi \Gamma^{a_1\dots a_5} \psi\;, \nonumber\\
\omega_4 &= \psi X_a e^a\;, \, \quad \qquad
\omega_5 =  \Gamma_{ab}\psi X^a e^b\;.\nonumber
\end{align}
Among these five quantities, the closure condition will kill two of them, and out of the remaining three, two of them are exact as variations of the forms
$\chi_1 = \Gamma_a q e^a$ and
$\chi_2 = L_{ab} \Gamma^{ab} \psi$:
 \begin{align}
    \diff \chi_1&=\frac{i}{2}(\omega_1+\omega_4-\omega_5),\qquad \diff\chi_2=-\omega_5-\frac{35}{32}\omega_1+\frac{19}{64}\omega_2+\frac{1}{768}\omega_3\,.    
\end{align}
\begin{proposition}\label{MWrepresentative}
The combination
\begin{align}
\label{eq:representative-12}
    \tilde\omega= a\omega_1+b\omega_2+c\omega_3\;,\qquad b=-\frac{3a}{22},\, c=-\frac{a}{1320}\;,
\end{align}
is a non-trivial cohomology representative inside the group $H^{1,2}(\fm,\fp)$.
\end{proposition}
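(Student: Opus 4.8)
The plan is to establish separately that $\tilde\omega$ is a Spencer cocycle and that it is not a coboundary; the statement then follows at once since, by part~1 of Theorem~\ref{thm:1}, $H^{1,2}(\fm,\fp)\cong S$ is a single copy of $S$. Throughout one works in the $S$-isotypic sector $(C^{1,\bullet}(\fm,\fp)\otimes S^*)^G$ (all cochains below carry a free spinor index, and $S$ is self-dual), and a short grading count based on Table~\ref{Conventions} shows that $(C^{1,1}(\fm,\fp)\otimes S^*)^G=\operatorname{span}\{\chi_1,\chi_2\}$ is $2$-dimensional (one copy of $S$ from $q\otimes e^a$, one from $\fso(V)\otimes\psi$), that $(C^{1,2}(\fm,\fp)\otimes S^*)^G=\operatorname{span}\{\omega_1,\dots,\omega_5\}$ is $5$-dimensional (three copies of $S$ in the $q$-valued $\psi^{\wedge 2}$-form part, two in the $X$-valued $e\wedge\psi$-form part), and that $(C^{1,3}(\fm,\fp)\otimes S^*)^G$ is $2$-dimensional (concentrated in the $X$-valued $\psi^{\wedge 3}$-form part, since $\odot^3 S$ contains $S$ and the $\gamma$-traceless vector-spinor $[1,0,0,0,1]$ with multiplicity one each).

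\emph{Cocycle.} Using the Maurer--Cartan equations \eqref{MC11d}, the differential \eqref{elevA}, and $\diff\psi=0$, $\diff X_a=0$ (whence $\diff(\bar\psi\Gamma^{a_1\cdots a_k}\psi)=0$), the only surviving contribution comes from $\diff q=\tfrac i2(\bar\psi\Gamma^c)\,X_c$, yielding $\diff\omega_1=\tfrac i2(\Gamma_a\Gamma^c\psi)\,\bar\psi\Gamma^a\psi\,X_c$ and the analogues for $\omega_2,\omega_3$ with $\Gamma_a$ replaced by $\Gamma_{ab}$ and $\Gamma_{a_1\cdots a_5}$ (up to a reordering of the Clifford factors fixed by the conventions for $C$). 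Expanding the products $\Gamma_{[k]}\Gamma^c$ into antisymmetrised and contracted parts, and reducing the resulting cubic-spinor $3$-forms to a fixed basis of the $2$-dimensional target $(C^{1,3}(\fm,\fp)\otimes S^*)^G$ via the $D=11$ Fierz identities, the requirement $\diff\tilde\omega=0$ becomes two linear relations on $(a,b,c)$ with solution precisely the line $b=-\tfrac{3a}{22}$, $c=-\tfrac{a}{1320}$. The essential inputs are the cubic Fierz identity $A\,\Gamma_a\psi\,\bar\psi\Gamma^a\psi+B\,\Gamma_{ab}\psi\,\bar\psi\Gamma^{ab}\psi+C\,\Gamma_{a_1\cdots a_5}\psi\,\bar\psi\Gamma^{a_1\cdots a_5}\psi=0$ with $A=10B+720C$, together with the Fierz reductions of the mixed structures $\Gamma_{abc}\psi\,\bar\psi\Gamma^{ab}\psi$, $\Gamma_a\psi\,\bar\psi\Gamma^{ab}\psi$, and so on. This Fierz bookkeeping --- routine in principle but lengthy --- is the main obstacle, and it is exactly what pins down the coefficient ratios in \eqref{eq:representative-12}.

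\emph{Not a coboundary, and conclusion.} In the $S$-isotypic sector one has $B^{1,2}(\fm,\fp)=\operatorname{span}\{\diff\chi_1,\diff\chi_2\}$, so it suffices to show $\tilde\omega\notin\operatorname{span}\{\diff\chi_1,\diff\chi_2\}$. Reading off coordinates in the basis $\{\omega_1,\dots,\omega_5\}$ from $\diff\chi_1=\tfrac i2(\omega_1+\omega_4-\omega_5)$ and $\diff\chi_2=-\tfrac{35}{32}\omega_1+\tfrac{19}{64}\omega_2+\tfrac1{768}\omega_3-\omega_5$, one sees that $\diff\chi_1$ is the only generator with nonzero $\omega_4$-component and that, modulo $\diff\chi_1$, $\diff\chi_2$ has nonzero $\omega_5$-component; since $\tilde\omega=a\omega_1+b\omega_2+c\omega_3$ has vanishing $\omega_4$- and $\omega_5$-components, $\tilde\omega\in B^{1,2}(\fm,\fp)$ forces $a=0$, i.e.\ $\tilde\omega=0$. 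As $\omega_1,\omega_2,\omega_3$ are linearly independent, for $a\ne0$ the cochain $\tilde\omega$ is a nonzero cocycle that is not a coboundary, so $[\tilde\omega]\ne0$; by part~1 of Theorem~\ref{thm:1} it in fact generates $H^{1,2}(\fm,\fp)\cong S$. One can also avoid citing Theorem~\ref{thm:1}: the above, together with $\dim(C^{1,1}\otimes S^*)^G=2$, $\dim(C^{1,2}\otimes S^*)^G=5$, and the linear independence of $\diff\chi_1,\diff\chi_2$, already forces $\dim(Z^{1,2}\otimes S^*)^G=3$ and $\dim(B^{1,2}\otimes S^*)^G=2$, so $(H^{1,2}\otimes S^*)^G$ is $1$-dimensional and generated by $[\tilde\omega]$ --- thereby recovering part~1 of Theorem~\ref{thm:1} from the component side.
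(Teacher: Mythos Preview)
Your proof is correct and follows essentially the same strategy as the paper's: closedness is reduced to a cubic Fierz identity in $D=11$ (the paper writes it as $a(\Gamma_a\Gamma_c)^{\delta}{}_{(\gamma}C\Gamma^a_{\alpha\beta)}+b(\Gamma_{ab}\Gamma_c)^{\delta}{}_{(\gamma}C\Gamma^{ab}_{\alpha\beta)}+c(\Gamma_{a_1\cdots a_5}\Gamma_c)^{\delta}{}_{(\gamma}C\Gamma^{a_1\cdots a_5}_{\alpha\beta)}=0$), and non-exactness is obtained by observing that every nonzero combination of $\diff\chi_1,\diff\chi_2$ has a nontrivial $X_a$-valued component ($\omega_4$ or $\omega_5$), whereas $\tilde\omega$ does not. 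Your additional dimension bookkeeping for $(C^{1,q}\otimes S^*)^G$ at $q=1,2,3$ and the concluding remark recovering the $S$-isotypic part of Theorem~\ref{thm:1} go beyond what the Proposition requires, but they are consistent with the paper's Hilbert--Poincar\'e data in Proposition~\ref{prop:CHPSd13}; note only that invoking Theorem~\ref{thm:1} at the outset would be circular in the paper's logical order, and you rightly point out at the end that it is not needed.
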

\begin{proof}
    Closedness is a consequence of the Fierz Identity
\begin{align}
    a (\Gamma_a\Gamma_c)^\delta{}_{(\gamma}C\Gamma^a_{\alpha\beta)}+b(\Gamma_{ab}\Gamma_c)^\delta{}_{(\gamma}C\Gamma^{ab}_{\alpha\beta)}+c(\Gamma_{a_1\ldots a_5}\Gamma_c)^\delta{}_{(\gamma}C\Gamma^{a_1\ldots a_5}_{\alpha\beta)}=0\;,
\end{align}
while the fact that $\tilde\omega$ is non-exact follows since the differential of every combination of $\chi_1$,$\chi_2$ contains one $X^a$, whereas $\omega_{1},\omega_2,\omega_3$ do not.
\end{proof}

The above discussion shows that $H^{1,2}(\mathfrak m,\mathfrak p)$ contains {\it at least} the irreducible component $S$. If the differential were generic, the only other irreducible representation to consider would be $U=(V\otimes\Lambda^2 V\otimes S)_o$. However, a simple check reveals that such a module is not present in form number $q=2$. Thus, we may anticipate
$
    H^{1,2}(\mathfrak m,\mathfrak p) \cong S
$.
As emphasised above, this result holds if the differential were generic: this is a delicate argument that will be confirmed in Theorem \ref{thm:persoilconto} via representation-theoretic techniques.
\vskip0.1cm\par

A similar discussion can be performed for the group $H^{3,2}(\mathfrak m,\mathfrak p)$. However, in order to do so, we need an expression of the Hilbert-Poincar\'e series which is more transparent w.r.t. the form number. To 
retrieve information about the latter, we may add an auxiliary parameter $p$ (in physical terms, called a ``fugacity'') to the plethystic exponentials
$PE[\chi_V u]\rightarrow PE[\chi_V u p ]$, 
$PE[\chi_S t] \rightarrow PE[\chi_S t p]$,
so that the Molien-Weyl formula reads 
\begin{eqnarray}
\label{tenBfugacity}
P_{U}(C^{\bullet}(\mathfrak{m},\mathfrak p),u,t,p) = \oint_{|z_1|=1}\!\!\!\!\!\!\cdots \oint_{|z_5|=1} \chi_U(z)\, F[u,t](z)\,
PE[\chi_V up](z)\, PE[\chi_S tp](z)\, \diff\mu|_T \,.
\end{eqnarray}
We may then evaluate \eqref{tenBfugacity} at $u=t^2$ and select the contributions of the collapsed Hilbert-Poincaré $U$-series at a fixed $\mathbb Z$-grading. In our case of interest according to Lemma \ref{lem:1}, we set
\begin{equation}
\begin{aligned}
\Lambda&=\lambda_{(0,0,0,0,1)}\chi_{[0,0,0,0,1]}+\lambda_{(1,0,0,0,1)}\chi_{[1,0,0,0,1]}+\lambda_{(0,1,0,0,1)}\chi_{[0,1,0,0,1]}+\lambda_{(0,0,1,0,1)}\chi_{[0,0,1,0,1]}\\
&+\lambda_{(2,0,0,0,1)}\chi_{[2,0,0,0,1]}+\lambda_{(0,0,0,1,1)}\chi_{[0,0,0,1,1]}+\lambda_{(0,0,0,0,3)}\chi_{[0,0,0,0,3]}+\lambda_{(1,1,0,0,1)}\chi_{[1,1,0,0,1]}
\end{aligned}
\end{equation}
and select the contributions of the series  $P_{\Lambda}(C^{\bullet}(\mathfrak{m},\mathfrak p),u=t^2,t,p)$ that are linear and cubic in $t$. 
\begin{proposition}
\label{prop:CHPSd13}
We have

{\small{
\begin{align}
\label{tEA}
P_{\Lambda}(C^{1,\bullet}(\mathfrak{m},\mathfrak p),u=t^2,t,p)
&=\Big(2 p - 5 p^2 + 2  p^3\Big) \lambda_{(00001)}+
\Big(2p - 5 p^2 + 3  p^3\Big) \lambda_{(10001)}\nonumber\\
&+\Big(p - 3 p^2 + 2  p^3\Big)  \lambda_{(01001)}+
\Big(-p^2 + p^3\Big) \lambda_{(00101)}  
+\Big(-p^2 + p^3\Big)\lambda_{(20001)}\nonumber\\
& +
\Big(-p^2 + p^3\Big)  \lambda_{(00011)}+
\Big(-p^2 + p^3\Big) \lambda_{(00003)}+ p^3 \lambda_{(11001)}
\end{align}}}
and

{\small{
\begin{align}
\label{tEABB}
P_{\Lambda}(C^{3,\bullet}(\mathfrak{m},\mathfrak p),u=t^2,t,p)
&=\Big(-3 p^2 + 11 p^3 - 9 p^4 + 2 p^5\Big) \lambda_{(00001)} +
 \Big(-4 p^2 + 15 p^3 - 14 p^4 + 4  p^5\Big) \lambda_{(10001)} \nonumber \\
& +
 \Big(-3 p^2 + 12 p^3 - 14 p^4 + 4  p^5\Big)  \lambda_{(01001)} +
\Big(-p^2  + 8 p^3 - 9 p^4 + 3  p^5\Big) \lambda_{(00101)} \nonumber \\
& +\Big(-p^2 + 6 p^3 - 8 p^4 + 3  p^5\Big) \lambda_{(20001)} +
\Big(5 p^3 - 7 p^4 + 3  p^5\Big) \lambda_{(00011)} \nonumber \\
& +
\Big(3 p^3 - 7 p^4 + 3  p^5\Big)  \lambda_{(00003)} +
\Big(-p^2 + 5 p^3 - 8 p^4 + 4  p^5\Big) \lambda_{(11001)}\;,
\end{align}}}
respectively.
 \end{proposition}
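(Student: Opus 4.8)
The plan is to evaluate the Molien--Weyl integral \eqref{tenBfugacity} at $u=t^{2}$ for each of the eight irreducible representations $U$ listed in Lemma \ref{lem:1}, keeping the fugacity $p$ that tracks the form number $q$, and then extract the coefficient of $t^{1}$ and of $t^{3}$ (the two cases relevant for positive odd $\mathbb Z$-grading according to Lemma \ref{lem:1}). Concretely, I would first recall from Appendix \ref{appchar} the explicit characters $\chi_{V}(z)$, $\chi_{S}(z)$, $\chi_{\mathfrak{so}(V)}(z)$ on the rank-$5$ Cartan torus of $B_{5}$, assemble the weighted character $F[u,t](z)=\tfrac1u\chi_{V}(z)-\tfrac1t\chi_{S}(z)+\chi_{\mathfrak{so}(V)}(z)$ and the plethystic exponentials $PE[\chi_{V}up](z)$, $PE[\chi_{S}tp](z)$, and multiply by the weight function $\phi(z)$ together with the measure $\diff\mu|_{T}$. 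The characters $\chi_{U}(z)$ of the eight modules in Table \ref{SummarizingHPMW} are computed from their Dynkin labels via the Weyl character formula (these are the cumbersome expressions referred to after Lemma \ref{lem:1}, recorded in the attached Mathematica notebook).

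The second step is the contour integration: by Proposition \ref{prop1} and part $1$ of Theorem \ref{thm:3}, $P_{U}(C^{\bullet}(\mathfrak m,\mathfrak p),u,t,p)$ is the iterated residue over $|z_{1}|=\cdots=|z_{5}|=1$ of the integrand above. I would perform this by the standard multivariate residue procedure (iterated one-variable residues, picking up only the poles inside the unit circle), exactly as in \cite[\S4.6]{DerksenKemper}. This is the step that is delegated to automatic evaluation on a computer after a careful pole-and-residue analysis; conceptually it is routine, but it is where all the combinatorial weight sits, since the integrand is a rational function in five variables of high degree. After collapsing with $u=t^{2}$ one obtains a Laurent polynomial in $t$ with coefficients that are polynomials in $p$; isolating the $t^{1}$ and $t^{3}$ parts for each of the eight $\lambda$'s and summing against the formal combination $\Lambda$ yields precisely the two displayed formulas \eqref{tEA} and \eqref{tEABB}.

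A small but worthwhile consistency check to include: setting $p=1$ in \eqref{tEA} and \eqref{tEABB} must reproduce, coefficient by coefficient in the $\lambda$'s, the $t^{1}$ and $t^{3}$ entries of the collapsed Hilbert--Poincar\'e $U$-series already tabulated in Table \ref{SummarizingHPMW} (e.g.\ $2-5+2=-1$ matches the $-t$ coefficient for $U=S$; $2-5+3=0$ matches the vanishing $t^{1}$-coefficient for $(V\otimes S)_{o}$, and so on). This verifies that no residue has been missed and that the $p$-refinement is consistent with part $2$ of Theorem \ref{thm:3}.

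The main obstacle is neither conceptual nor a matter of a clever idea: it is the sheer size of the symbolic computation. The integrand's numerator, after multiplying an order-$320$ worth of character $\chi_{U}$ by $F[u,t]$, by two plethystic exponentials and by $\phi(z)$, is an enormous rational expression in $z_{1},\dots,z_{5},u,t,p$, and one must correctly classify which of its many factors contribute poles inside each unit circle at each stage of the iteration. Getting the pole bookkeeping right for all eight representations simultaneously is the delicate part; everything else is bounded-degree polynomial algebra that the computer handles, with the $p=1$ reduction to Table \ref{SummarizingHPMW} serving as the built-in correctness check.
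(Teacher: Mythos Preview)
Your proposal is correct and follows essentially the same approach as the paper: the proposition is stated without a separate proof, since it is obtained precisely by evaluating the Molien--Weyl integral \eqref{tenBfugacity} with the fugacity $p$, collapsing at $u=t^{2}$, and reading off the $t^{1}$ and $t^{3}$ coefficients via the computer-assisted residue analysis already invoked for Theorem \ref{thm:2}. Your $p=1$ consistency check against Table \ref{SummarizingHPMW} is exactly the sanity check the paper records in the Remark immediately following the proposition.
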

\begin{remark}{\rm
Of course, setting $p=1$, we obtain the coefficients of the Hilbert-Poincaré $U$-series at power $t$ and $t^3$:
\begin{align*}
\label{tEAB}
P_{\Lambda}(C^{1,\bullet}(\mathfrak{m},\mathfrak p),u=t^2,t)
&= - \lambda_{(00001)}+
\lambda_{(11001)}\;,\\
P_{\Lambda}(C^{3,\bullet}(\mathfrak{m},\mathfrak p),u=t^2,t )
&=
+\lambda_{(00001)}
+\lambda_{(10001)}
-\lambda_{(01001)}
+\lambda_{(00101)} 
+\lambda_{(00011)} 
-\lambda_{(00003)}\;,
\end{align*}
which is in agreement with Theorem \ref{thm:2}.
}
\end{remark}

The only Spencer $2$-cochains at degree $t^3$ schematically read $e eq$ and $e\psi L$ and are both odd and both separately not closed. For this reason, they can only belong to the irreducible representations $(\Lambda ^2 V\otimes S)_o$ and $(\odot^3 S)_o$, as shown in Table \ref{SummarizingHPMW}. From \eqref{tEABB}, we immediately see that the Spencer cochains in the latter representation only appear starting from form number three and can then be excluded.
Therefore, only $(\Lambda ^2 V\otimes S)_o$ remains. 
Since the differential of the second cochain is the only one containing the structure $\psi \psi \psi L$ and since the latter can be shown to be non-vanishing in this representation,
we anticipate that there does not exist a closed combination of two-form Spencer cochains in any irreducible representation. This result suggests that
$H^{3,2}(\mathfrak m,\mathfrak p) = 0
$
and it will be rigorously proven in Theorem \ref{thm:H32}. 

\section{Fermionic Spencer cohomology of maximally supersymmetric subalgebras of the $D=11$ Poincaré superalgebra}
\label{sec:spencer-complex}
The deformations of algebraic structures, such as Lie superalgebras, are
typically governed by some cohomology theory.  For Lie
superalgebras, it is the cohomology of the Chevalley--Eilenberg
complex of the Lie superalgebra with coefficients in the adjoint module
\cite{CE,Lei,Bin}.  In the case of a graded Lie
superalgebra (such as the Poincaré superalgebra
\eqref{eq:Z-grading}), the Chevalley--Eilenberg differential has zero degree
and the complex splits in the direct sum of sub-complexes labelled
by the degree.  In studying filtered deformations of graded Lie
superalgebras, we are interested in deforming the Lie bracket by terms
of positive degree. Furthermore, for graded Lie superalgebras, we may often pass to the Spencer complex corresponding to the negative part and a first step in this deformation process is the calculation of the cohomology of this complex.

In \S\ref{sec:deformation-cohomology}, we seek
for filtered deformations of  maximally supersymmetric graded Lie subalgebras $\mathfrak h$ of $\mathfrak p$.
For this, we pin down in \S\ref{sec:spencer-complex} the relevant cohomology groups:
in \S\ref{sec:poinc-super} we collect first
results on the Spencer cohomology of $\mathfrak h$, and then dedicate \S\ref{sec:proofthm1} to the proof of the main Theorem 
\ref{thm:1}. The proof splits into three subsections: a preliminary section, a section focusing on normalization conditions for  degree $1$ cocycles and the group $H^{1,2}(\fm,\fp)$ and, finally, a section on $H^{3,2}(\fm,\fp)$.

\subsection{The deformation complex}
\label{sec:poinc-super}
Our aim is to consider $\mathbb Z$-graded subalgebras
$\fh=\fh_{-2}\oplus\fh_{-1}\oplus\fh_0$ of the Poincaré superalgebra $\fp=\fp_{-2}\oplus\fp_{-1}\oplus\fp_0$
that are maximally supersymmetric, namely satisfying $\fh_{-1}=\fp_{-1}=S$. Because the supertranslation ideal $\mathfrak m$ of $\mathfrak p$ is generated by $S$ (the Dirac current is a surjective map,
since $\fp_{-2}=V$ is an irreducible $\mathfrak{so}(V)$-module), such subalgebras
in fact differ from $\mathfrak p$ only in zero degree, that is, $\fh \subset \fp$ with
$\fh_0 \subset \fp_0$ and $\fh_j = \fp_j$ for $j<0$.

The cochains of the Spencer complex of $\fh$ are linear maps
$\Lambda^p \mathfrak m \to \fh$, where $\Lambda^\bullet\fm$ is meant here
in the super sense using the $\mathbb Z_2$-grading. 
One extends the degree in $\fh$ to such cochains as usual and,
since the $\ZZ$- and
$\ZZ_2$ gradings are compatible, even (odd) cochains have even
(odd) degree. The $p$-cochains
of highest degree are the maps $\Lambda^p V \to \fh_0$, which have
degree $2p$, while the $p$-cochains of lowest degree are those in
$\odot^p S\to V$, which have degree
$p-2$, and then those in $\odot^p S\to S$ and
$\odot^{p-1} S \otimes V\to V$, which
have degree $p-1$. The Spencer differential
$\partial: C^{d,p}(\fm,\fh) \to
C^{d,p+1}(\fm,\fh)$ 
has zero
degree, so the complex breaks up in the direct of sum of
\emph{finite} complexes for each degree.
The spaces in the complexes for small degree are
in Table~\ref{tab:even-cochains-small}; 
we shall be interested in $p=2$ in the remaining of the paper, which corresponds to
infinitesimal  deformations.

\begin{table}[h!] 
  \centering
 \begin{tabular}{c*{7}{|>{$}c<{$}}}
    \multicolumn{1}{c}{} & \multicolumn{6}{|c}{$p$} \\\hline
    deg & 0 & 1 & 2 & 3 & 4 & 5 & 6\\\hline
    0 & \fh_0 & \begin{tabular}{@{}>{$}c<{$}@{}} S \to S\\ V \to
                    V\end{tabular} & \odot^2 S \to V & & & & \\\hline
										    1 & & \begin{tabular}{@{}>{$}c<{$}@{}} S \to \fh_0\\ V \to
                    S\end{tabular} & \begin{tabular}{@{}>{$}c<{$}@{}}
                         \odot^2 S \to S     \\
                          S \otimes V \to V  \end{tabular}&\odot^3S\to V & & &\\\hline
    2 & & V \to \fh_0 & \begin{tabular}{@{}>{$}c<{$}@{}}
                     \odot^2 S \to \fh_0       \\ S \otimes V \to S \\
                          \Lambda^2 V \to V   \end{tabular}
         & \begin{tabular}{@{}>{$}c<{$}@{}} \odot^3 S \to S\\
             \odot^2 S \otimes V \to V\end{tabular} & \odot^4 S \to V & &
        \\\hline
				    3 & &  &  \begin{tabular}{@{}>{$}c<{$}@{}}
                         S \otimes V \to \fh_0   \\ \Lambda^2 V \to S \end{tabular} & \begin{tabular}{@{}>{$}c<{$}@{}}\odot^3S\to \fh_0\\
   \odot^2 S \otimes V \to S \\
   \Lambda^2 V \otimes S \to V\end{tabular} & \begin{tabular}{@{}>{$}c<{$}@{}}
  \odot^4 S \to S \\
  \odot^3 S \otimes V \to V \end{tabular} & \odot^5 S\to V& \\\hline
				4 & & & \Lambda^2 V \to \fh_0 & \begin{tabular}{@{}>{$}c<{$}@{}}
   \odot^2 S \otimes V \to \fh_0 \\
   \Lambda^2 V \otimes S \to S \\ 
   \Lambda^3 V \to V \end{tabular} & \begin{tabular}{@{}>{$}c<{$}@{}}
  \odot^4 S \to \fh_0 \\
  \odot^3 S \otimes V \to S \\
  \Lambda^2 V\otimes\odot^2 S \to V \end{tabular} &   \begin{tabular}{@{}>{$}c<{$}@{}}
  \odot^5 S \to S \\
  \odot^4 S \otimes V \to V \end{tabular} & \odot^6 S\to V \\\hline
  \end{tabular}
		\caption[]{\label{tab:even-cochains-small} Even and odd $p$-cochains of small degree.}
\end{table}


We shall first relate the groups $H^{d,2}(\fm,\fh)$ to the groups $H^{d,2}(\fm,\fp)$ for small positive degrees, but, in order to do so, we have to remind certain deep results from \cite{Figueroa-OFarrill:2015rfh}. The group $H^{2,2}(\fm,\fp)$ is canonically identified with the collection of all cochains $\alpha+\beta+\gamma\in C^{2,2}(\fm,\fp)$, where $\alpha:\Lambda^2V\to V$, $\beta:V\otimes S\to S$, $\gamma:\odot^2 S\to \fso(V)$, such that $\alpha=0$ and $\partial(\beta+\gamma)=0$, cf. \cite{Figueroa-OFarrill:2015rfh}. Moreover $H^{2,2}(\fm,\fp)\cong\Lambda^4 V$ as an $\mathfrak{so}(V)$-module, with the closure condition expressing $\beta$ and $\gamma$ in terms of $\varphi\in\Lambda^4 V$: we have $\beta+\gamma=\beta^\varphi+\gamma^\varphi$, 
where
\begin{align}
\beta^{\varphi}(v,s)&=\tfrac1{24}(v\cdot\varphi-3\varphi\cdot v)\cdot s\;,\\
\gamma^{\varphi}(s,s)v&=-2\kappa(\beta^\varphi(v,s),s)\;,
\end{align}
for all $s\in S$, $v\in V$ \cite{Figueroa-OFarrill:2015rfh, Santi:2019kpx}. This yields a canonical identification $H^{2,2}(\fm,\fp)\cong\{\beta^\varphi+\gamma^\varphi\mid\varphi\in\Lambda^4V\}$.
\begin{proposition}
\label{prop:cohomology1h}
We have the following natural identifications of Spencer cohomology groups
\begin{equation*}
\label{eq:5}
\begin{aligned}
H^{1,2}(\fm,\fh)&\cong 
H^{1,2}(\fm,\fp)\oplus \frac{\partial\left\{X_S:S\to \fso(V)\right\}}{\partial\left\{X_S:S\to \fh_0\right\}}\\
\!\!\!\!\!\!\!\!\!\! H^{2,2}(\fm,\fh)&\cong \left\{\beta^\varphi + \gamma^\varphi\,|\,\varphi\in\Lambda^4 V\, \text{~with~}
\gamma^\varphi(s,s)\in\fh_0\;\text{for all}\; s\in S\right\}\\
H^{3,2}(\fm,\fh)&\cong H^{3,2}(\fm,\fp)\cap C^{3,2}(\fm,\fh)\\
H^{4,2}(\fm,\fh)&=0
\end{aligned}
\end{equation*}
and the Spencer differential $\partial$ is injective on the spaces of $1$-cochains $C^{1,1}(\fm,\fp)$ and $C^{2,1}(\fm,\fp)$.
\end{proposition}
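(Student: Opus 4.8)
The plan is to establish the four cohomology identifications together with the injectivity claim by systematically relating the Spencer complex of $\fh$ to that of $\fp$, exploiting that $\fh$ differs from $\fp$ only in zero degree ($\fh_0\subset\fp_0$, $\fh_j=\fp_j$ for $j<0$). First I would record the obvious inclusion of complexes $C^{d,\bullet}(\fm,\fh)\hookrightarrow C^{d,\bullet}(\fm,\fp)$ coming from $\fh_0\subset\fp_0$, which is compatible with $\partial$ since the differential only uses the $\fm$-module structure (brackets with $\fm$), and note that the quotient complex has cochains valued in $\fp_0/\fh_0$ concentrated in the top degree slot of each column of Table~\ref{tab:even-cochains-small}. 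Then the short exact sequence of complexes $0\to C^{d,\bullet}(\fm,\fh)\to C^{d,\bullet}(\fm,\fp)\to Q^{d,\bullet}\to 0$ yields a long exact sequence, and the identifications should drop out by tracking where the connecting maps land, using that $Q^{d,\bullet}$ is short (only one or two nonzero terms per degree).

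Concretely, for the injectivity claim I would argue directly: a $1$-cochain in $C^{1,1}(\fm,\fp)$ is a pair of maps $S\to S$, $V\to V$ (degree $0$), or $S\to\fh_0$, $V\to S$ (degree $1$), or $V\to\fh_0$ (degree $2$); in each relevant case $\partial$ composed with restriction to $\odot^2 S$ or $S\otimes V$ reproduces the cochain up to a nonzero scalar via the surjectivity of the Dirac current $\kappa$ and the fact that the Clifford action $S\otimes V\to S$ is injective in each argument. This injectivity is what kills $B^{2,2}$ coming from the relevant $1$-cochains and makes $H^{2,2}$, $H^{1,2}$ computations clean; it is essentially the statement that there are no nontrivial $1$-coboundaries landing in the degree-$2$ two-cochain spaces beyond the obvious ones. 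For $H^{4,2}(\fm,\fh)=0$ I would note that $C^{4,2}(\fm,\fh)$ consists only of $\Lambda^2 V\to\fh_0$, and injectivity of $\partial$ on $C^{4,1}(\fm,\fp)=\{V\to\fh_0\}$ (same argument as above, now with $\kappa:\odot^2 S\twoheadrightarrow V$ composed in) forces every cocycle to be a coboundary by a dimension/equivariance count, since $H^{4,2}(\fm,\fp)=0$ is already known from \cite{Figueroa-OFarrill:2015rfh, Figueroa-OFarrill:2016khp}.

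For $H^{1,2}(\fm,\fh)$ the extra summand $\partial\{X_S:S\to\fso(V)\}/\partial\{X_S:S\to\fh_0\}$ arises because a degree-$1$ two-cocycle of $\fh$ that fails to be an $\fh$-coboundary may still be a $\fp$-coboundary via some $X_S:S\to\fso(V)$ not valued in $\fh_0$; one checks $\partial X_S\in C^{1,2}(\fm,\fh)$ precisely when its $S\otimes V\to V$ component (which involves $\fh_0$ acting on $V$, automatically inside $\fp_0$) closes up, and the quotient measures exactly this discrepancy. The $H^{2,2}(\fm,\fh)$ identification follows from the quoted description $H^{2,2}(\fm,\fp)\cong\{\beta^\varphi+\gamma^\varphi\mid\varphi\in\Lambda^4 V\}$ by simply imposing that $\gamma^\varphi$ take values in $\fh_0$ (the $\beta^\varphi$ part is always valued in $S=\fh_{-1}$, hence unconstrained), and noting no new coboundaries appear. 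The $H^{3,2}$ identification is the most delicate: I would show the natural map $H^{3,2}(\fm,\fh)\to H^{3,2}(\fm,\fp)$ is injective with image $H^{3,2}(\fm,\fp)\cap C^{3,2}(\fm,\fh)$, for which the key input is that $B^{3,2}(\fm,\fh)=B^{3,2}(\fm,\fp)\cap C^{3,2}(\fm,\fh)$ — i.e.\ a $3$-cochain of $\fh$ that is a $\fp$-coboundary is already an $\fh$-coboundary. The main obstacle is precisely this last equality: it requires knowing that the relevant $2$-cochains $C^{2,2}(\fp/\fh)$ producing $\fp$-coboundaries in $C^{3,2}(\fm,\fh)$ are themselves already in $C^{2,2}(\fm,\fh)$ modulo cocycles, which is a finer statement than mere injectivity of $\partial$ on $1$-cochains and will likely need an explicit degree-$3$ analysis using Table~\ref{tab:even-cochains-small} together with the structure of $\fh_0$-complements in $\fp_0$.
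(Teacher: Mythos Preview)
Your overall framework (compare the $\fh$- and $\fp$-complexes via the inclusion and track the difference) is the right idea, but there are genuine gaps in three places.

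First, the injectivity of $\partial$ on $C^{1,1}(\fm,\fp)$ and $C^{2,1}(\fm,\fp)$ is \emph{not} the soft statement you describe. For $C^{1,1}$, a cochain is a pair $a:V\to S$, $b:S\to\fso(V)$, and $\partial(a+b)=0$ is the coupled system $2b(s)s=a(\kappa(s,s))$, $\kappa(s,a(v))=b(s)v$; neither equation alone ``reproduces the cochain up to a nonzero scalar'', and the vanishing of nontrivial solutions is precisely the vanishing of the first Cartan--Tanaka prolongation of $\fp$. The paper invokes the classification of maximal transitive prolongations of super-Poincar\'e algebras in \cite{MR3255456} for this, and \cite[Lemma 2]{Figueroa-OFarrill:2015rfh} for $C^{2,1}$. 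Your heuristic does not substitute for these.

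Second, your assessment of $H^{3,2}$ is exactly backwards: it is the \emph{easiest} case, not the most delicate. A glance at Table~\ref{tab:even-cochains-small} shows $C^{3,1}(\fm,\fp)=0$, so there are no coboundaries whatsoever and $H^{3,2}(\fm,\fh)=Z^{3,2}(\fm,\fh)=Z^{3,2}(\fm,\fp)\cap C^{3,2}(\fm,\fh)=H^{3,2}(\fm,\fp)\cap C^{3,2}(\fm,\fh)$ is immediate. Third, for $H^{2,2}$ you cannot simply ``impose $\gamma^\varphi$ valued in $\fh_0$ and note no new coboundaries appear'': the description inherited from \cite[Prop.~8]{Figueroa-OFarrill:2015rfh} is
\[
H^{2,2}(\fm,\fh)\cong \frac{\{\beta^\varphi+\gamma^\varphi+\partial X_V\mid \gamma^\varphi(s,s)-X_V(\kappa(s,s))\in\fh_0\}}{\partial\{X_V:V\to\fh_0\}},
\]
with a constraint genuinely mixing $\gamma^\varphi$ and $X_V$. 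The paper uses the explicit formula $\eta(\gamma^\varphi(\omega)v,w)=\tfrac13\eta(\imath_v\imath_w\varphi,\omega^{(2)})+\tfrac16\eta(\imath_v\imath_w\star\varphi,\omega^{(5)})$ together with the decomposition $\odot^2 S\cong\Lambda^1 V\oplus\Lambda^2 V\oplus\Lambda^5 V$ to show that $\gamma^\varphi$ only sees $\omega^{(2)},\omega^{(5)}$ while $X_V\circ\kappa$ only sees $\omega^{(1)}$, so the constraint decouples into $X_V:V\to\fh_0$ (absorbed in the quotient) and $\gamma^\varphi(s,s)\in\fh_0$. You omit this step entirely. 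Finally, for $H^{1,2}$ your reasoning is muddled: the point is simply that $C^{1,2}(\fm,\fh)=C^{1,2}(\fm,\fp)$ (neither $\odot^2 S\to S$ nor $S\otimes V\to V$ involves $\fp_0$), so cocycles coincide and only the coboundary spaces $\partial C^{1,1}(\fm,\fh)\subset\partial C^{1,1}(\fm,\fp)$ differ; $\partial X_S$ always lands in $C^{1,2}(\fm,\fh)$, contrary to what you suggest.
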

\begin{proof}
From Table \ref{tab:even-cochains-small} we immediately see that $C^{1,2}(\fm,\fh)=C^{1,2}(\fm,\fp)$, so $Z^{1,2}(\fm,\fh)=Z^{1,2}(\fm,\fp)$. We now use that the component $\partial:C^{1,1}(\fm,\fp)\to C^{1,2}(\fm,\fp)$ of the Spencer operator  is injective. (This is a non-trivial fact, which follows from the classification of maximal prolongations of Poincar\'e superalgebras in \cite{MR3255456}: the first Cartan-Tanaka prolongation is non-zero 
in only a few cases, but in such cases the zero-degree level necessarily has to include the grading element $Z$.
Hence the first prolongation $\fp_{(1)}\cong Z^{1,1}(\fm,\fp)$ of $\fp$ is trivial.) 
Then
\begin{align*}
H^{1,2}(\fm,\fp)&\cong Z^{1,2}(\fm,\fp)/C^{1,1}(\fm,\fp)\;,\\
H^{1,2}(\fm,\fh)&\cong Z^{1,2}(\fm,\fp)/C^{1,1}(\fm,\fh)\;,
\end{align*}
and
\begin{align*}
H^{1,2}(\fm,\fh)&\cong  
H^{1,2}(\fm,\fp)\oplus C^{1,1}(\fm,\fp)/C^{1,1}(\fm,\fh)\\
&\cong H^{1,2}(\fm,\fp)\oplus(\fso(V)/\fh_0)\otimes S^*\;,
\end{align*}
proving the first claim. The injectivity of $\partial$ on $C^{2,1}(\fm,\fp)$ is proved in \cite[Lemma 2]{Figueroa-OFarrill:2015rfh}, while the identification 
\begin{align}
\label{eq:formeridentification}
\!\!\!\!\!\!\!\!\!\! H^{2,2}(\fm,\fh)&\cong \frac{\left\{\beta^\varphi + \gamma^\varphi
+ \partial X_V\,|\,\varphi\in\Lambda^4 V,\,X_V: V \to \fso(V) \text{~with~}
\gamma^\varphi(s,s)-X_V(\kappa(s,s))\in\fh_0\right\}}{\partial\left\{X_V: V \to \fh_0\right\}}
\end{align}
and the claim on $H^{4,2}(\fm,\fh)$ are proved in \cite[Prop. 8]{Figueroa-OFarrill:2015rfh}, thus we omit the details. We now decompose any element $\omega\in\odot^2 S$ into
$\omega=-\tfrac{1}{32}\big(\omega^{(1)}+\omega^{(2)}+\omega^{(5)}\big)$ according to $\odot^2 S\cong\Lambda^1 V\oplus \Lambda^2 V\oplus\Lambda^5 V$,
where $\omega^{(q)}\in\Lambda^q V$ for $q=1,2,5$; the overall factor of $-\tfrac{1}{32}$
is introduced so that $\omega^{(1)}$
coincides exactly
with the Dirac current of $\omega$. 
We may then write 
$$\eta(\gamma^\varphi(\omega)v,w)
=\tfrac{1}{3}\eta(\imath_v\imath_w\varphi,\omega^{(2)})
+\tfrac{1}{6}\eta(\imath_v\imath_w\star\varphi,\omega^{(5)})
$$
for all $v,w\in V$, see \cite[Eq. (9)]{Santi:2019kpx}, where $\star$ is the Hodge star operator on $V$.
The condition 
``$\gamma^\varphi(s,s)-X_V(\kappa(s,s))\in\fh_0$ for all $s\in S$'' decouples then into 
\begin{align*}
X_V(\omega^{(1)})&\in\fh_0\;,\\
\gamma^\varphi(\omega^{(2)}+\omega^{(5)})&\in\fh_0\;,
\end{align*}
for all $\omega\in\odot^2 S$. This and \eqref{eq:formeridentification} give the desired identification on $H^{2,2}(\fm,\fh)$.

The claim on $H^{3,2}(\fm,\fh)$ is straightforward once we note that $H^{3,2}(\fm,\fh)=Z^{3,2}(\fm,\fh)$ for any maximally supersymmetric subalgebra $\fh$ of $\fp$, since there are no coboundaries. 
\end{proof}

\subsection{Proof of Theorem \ref{thm:1} (stated on page 3)}
\label{sec:proofthm1}
\subsubsection{Normalization conditions for $2$-cocycles in $\mathbb Z$-grading $1$}
The collapsed Hilbert-Poincaré series at degree $1$ as determined in \eqref{tEA} of Proposition \ref{prop:CHPSd13}
suggests that the group $H^{1,2}(\fm,\fp)$ is non-zero, including at least an $\fso(V)$-module that is isomorphic to $S$. See also the Table \ref{SummarizingHPMW} of Theorem \ref{thm:2}, with the additional observation that the module $[1,1,0,0,1]$ is not present in form number $q=2$, but only in form number $q=3$. We first show that this is in fact the case and give a simple description of cohomology representatives for this $\fso(V)$-module. 
\begin{proposition}
\label{prop:cohomology1p}
The group $H^{1,2}(\fm,\fp)\supset S^*\cong S$ as an $\fso(V)$-submodule. More precisely, we may choose representatives as follows: any element $\phi\in S^*$ determines uniquely the cocycle
$\varepsilon^\phi+\epsilon^\phi$, where $\varepsilon^\phi:\odot^2S\to S$ and $\epsilon^\phi:S\otimes V\to V$ are given by
\begin{equation}
\label{eq:first-cocycle}
\begin{aligned}
\varepsilon^\phi(s,s)&=-2\phi(s)s\;,\\ 
\epsilon^\phi(s,v)&=-2\phi(s)v\;,
\end{aligned}
\end{equation}
for all $s\in S$ and $v\in V$.
\end{proposition}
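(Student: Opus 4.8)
The plan is to verify directly that $\varepsilon^\phi+\epsilon^\phi$ is a Spencer $2$-cocycle of $\mathbb Z$-grading $1$, that the assignment $\phi\mapsto\varepsilon^\phi+\epsilon^\phi$ is $\mathfrak{so}(V)$-equivariant and injective, and that no nonzero element of its image is a coboundary; together with the a priori upper bound $H^{1,2}(\fm,\fp)\subseteq S$ coming from Lemma \ref{lem:1}, Table \ref{SummarizingHPMW}, and the Euler-characteristic computation (the module $[1,1,0,0,1]$ being absent in form number $q=2$), this will pin down $H^{1,2}(\fm,\fp)\cong S$, as claimed right before the statement.

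First I would spell out the Spencer differential on a degree-$1$ form-number-$2$ cochain. By Table \ref{tab:even-cochains-small} such a cochain has two pieces, $\varepsilon:\odot^2 S\to S$ and $\epsilon:S\otimes V\to V$, and $\partial(\varepsilon+\epsilon)$ lands in $C^{1,3}(\fm,\fp)$, which by the same table is $\odot^3 S\to V$. The Chevalley--Eilenberg coboundary, evaluated on three odd elements $s\in S$, produces schematically two kinds of terms: those of the form $\kappa\big(\varepsilon(s,s),s\big)$ coming from applying $\varepsilon$ to a pair and then contracting the output spinor with the third via the Dirac current $\kappa=[-,-]:\odot^2 S\to V$, and those of the form $\epsilon\big(s,\kappa(s,s)\big)$ coming from first taking the Dirac current of a pair and feeding the resulting vector, together with the third spinor, into $\epsilon$; there is no $\mathfrak{so}(V)$-action term because the target components here are $V$ and $S$, on which the differential's $\mathfrak p_0$-part does not contribute at this spot (only the bracket $\fp_{-1}\times\fp_{-1}\to\fp_{-2}$ and $\fp_0\times\fp_{-1}\to\fp_{-1}$, $\fp_0\times\fp_{-2}\to\fp_{-2}$ enter, and the relevant combinations reduce to the two written types). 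Substituting \eqref{eq:first-cocycle}, $\kappa(\varepsilon^\phi(s,s),s)=-2\phi(s)\,\kappa(s,s)$ and $\epsilon^\phi(s,\kappa(s,s))=-2\phi(s)\,\kappa(s,s)$, so the two contributions are proportional; with the correct relative sign and symmetrization over the three $S$-slots they cancel, giving $\partial(\varepsilon^\phi+\epsilon^\phi)=0$. I expect the only subtlety is bookkeeping the combinatorial factors and signs coming from the super-skew-symmetrization (symmetric on the $S$-entries) — this is the main obstacle, but it is a finite, mechanical check rather than a conceptual one, and it can be cross-validated against the Molien--Weyl/Fierz representative $\tilde\omega$ of Proposition \ref{MWrepresentative} via the comparison announced in Remark \ref{cohomologyclassequivalence}.

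Next, $\mathfrak{so}(V)$-equivariance of $\phi\mapsto\varepsilon^\phi+\epsilon^\phi$ is immediate: every operation used ($\phi(s)$, scalar multiplication, the identity on $S$ and on $V$) is $\mathfrak{so}(V)$-natural, so the map $S^*\to Z^{1,2}(\fm,\fp)$ is a morphism of $\mathfrak{so}(V)$-modules; injectivity is clear since $\varepsilon^\phi=0$ forces $\phi(s)s=0$ for all $s$, hence $\phi=0$. It remains to check non-exactness. A $1$-coboundary is $\partial(X)$ for $X\in C^{1,1}(\fm,\fp)$, and by Table \ref{tab:even-cochains-small} such $X$ has components $X_S:S\to\fso(V)$ and $X_V:V\to S$. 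By the injectivity of $\partial$ on $C^{1,1}(\fm,\fp)$ established in Proposition \ref{prop:cohomology1h} (equivalently $\fp_{(1)}\cong Z^{1,1}(\fm,\fp)=0$), the coboundary space $B^{1,2}(\fm,\fp)=\partial C^{1,1}(\fm,\fp)$ has the same dimension as $C^{1,1}(\fm,\fp)\cong \fso(V)\otimes S^*\oplus S\otimes V^*$; decomposing into $\mathfrak{so}(V)$-irreducibles one sees that the multiplicity of $S$ in $C^{1,1}(\fm,\fp)$ is accounted for by the piece $X_S:S\to\fso(V)$ restricted suitably plus $X_V:V\to S$, and a direct evaluation shows $\partial X$ cannot reproduce the shape \eqref{eq:first-cocycle} — e.g. $\partial X$ always contains a nonzero $\Lambda^2 V\to S$ or $S\otimes V\to \fso(V)$ fragment unless $X=0$, whereas $\varepsilon^\phi+\epsilon^\phi$ has no such fragment, exactly as in the non-exactness argument of Proposition \ref{MWrepresentative}. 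Hence $[\varepsilon^\phi+\epsilon^\phi]\neq 0$ for $\phi\neq 0$, the copy $S^*\cong S$ injects into $H^{1,2}(\fm,\fp)$, and combined with the upper bound this forces equality, completing the proof.
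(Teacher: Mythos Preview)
Your non-exactness argument contains a concrete error. You assert that for $X\in C^{1,1}(\fm,\fp)$ the coboundary $\partial X$ ``always contains a nonzero $\Lambda^2 V\to S$ or $S\otimes V\to\fso(V)$ fragment'' --- but those pieces simply do not exist in $C^{1,2}(\fm,\fp)$: by Table~\ref{tab:even-cochains-small} the only components of a degree-$1$ $2$-cochain are $\odot^2 S\to S$ and $S\otimes V\to V$. (You have confused these with the degree-$3$ components.) The analogy with Proposition~\ref{MWrepresentative} also fails: that argument worked because the representative $\tilde\omega$ had \emph{vanishing} $S\otimes V\to V$ component while every coboundary has a nonzero one, whereas your $\epsilon^\phi$ is nonzero. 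So you are still missing an argument that $\varepsilon^\phi+\epsilon^\phi\notin\partial C^{1,1}(\fm,\fp)$; since $C^{1,1}(\fm,\fp)\cong (\fso(V)\oplus V)\otimes S^*$ contains two copies of $S$, this is not a triviality that can be waved away.

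The paper's proof sidesteps both the cocycle check and the non-exactness issue with a single observation: $\varepsilon^\phi+\epsilon^\phi=\partial(Z\otimes\phi)$, where $Z$ is the grading element (acting with eigenvalue $k$ on $\fp_k$). Closedness is then automatic from $\partial^2=0$ in the complex extended by the dilation $\mathbb{R}Z$. Non-exactness follows because the Spencer operator on the \emph{extended} $C^{1,1}(\fm,\fp\oplus\mathbb{R}Z)$ is still injective (the first prolongation of $\fp\oplus\mathbb{R}Z$ is trivial by \cite{MR3255456}): if $\partial(Z\otimes\phi)=\partial X$ for some $X\in C^{1,1}(\fm,\fp)$, then $Z\otimes\phi-X$ lies in the kernel of the extended $\partial$, which forces $\phi=0$ since $Z\notin\fso(V)$. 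This gives the direct-sum decomposition $Z^{1,2}(\fm,\fp)\supset B^{1,2}(\fm,\fp)\oplus\partial(\mathbb{R}Z\otimes S^*)$ and hence $H^{1,2}(\fm,\fp)\supset S^*$.

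A minor point: the proposition as stated asserts only the inclusion $H^{1,2}(\fm,\fp)\supset S$. The upper bound you invoke is the content of the subsequent (and much harder) Theorem~\ref{thm:persoilconto}, and is not part of what is being proved here.
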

\begin{proof}
Let $Z$ be the grading element of $\mathfrak{p}=\mathfrak{p}_0\oplus\fp_{-1}\oplus\fp_{-2}$,
which acts with eigenvalue $k$ on $\fp_k$. It can be identified with the dilation element in $\mathfrak{co}(V)$, in particular it does {\it not} belong $\fp_0\cong\mathfrak{so}(V)$.

First of all, we have 
\begin{align*}
Z^{1,2}(\fm,\fp)&\supset B^{1,2}(\fm,\fp)+\partial(\mathbb R Z\otimes S^*)\\
&=\partial(\fso(V)\otimes S^*)+\partial(S\otimes V^*)+\partial(\mathbb RZ\otimes S^*)
\end{align*} 
and the sum is direct, since the Spencer operator 
$
\partial:C^{1,1}(\fm,\fp\oplus\mathbb R Z)\to C^{1,2}(\fm,\fp\oplus\mathbb R Z)
$
extended with dilations
is injective.  (This follows again from \cite{MR3255456}: the first prolongation of $\fp\oplus\mathbb RZ$ is trivial.) 
Therefore
$B^{1,2}(\fm,\fp)=\partial(\fso(V)\otimes S^*)\oplus\partial(S\otimes V^*)$ and $H^{1,2}(\fm,\fp)\supset \partial(\mathbb RZ\otimes S^*)\cong S^*\cong S$.
\end{proof}
\begin{remark}
\label{rem:conformalextension}
{\rm
As established in the proof of Proposition \ref{prop:cohomology1p}, the representative $\varepsilon^\phi+\epsilon^\phi=\partial(Z\otimes\phi)$, where $Z$ is the grading element of $\fp$. In particular, this cohomology contribution would disappear if we were to consider the conformal extension $\fp\oplus\mathbb R Z\cong \mathfrak{co}(V)\oplus S\oplus V$ of the Poincaré superalgebra $\fp$ instead of $\mathfrak p$ itself.
}
\end{remark}
\begin{remark}\label{cohomologyclassequivalence}
{\rm
In the notation of \S\ref{sec6} the representative \eqref{eq:first-cocycle} would read (up to an overall factor)  as
 $\mathring \omega=\psi(\bar\psi q+X_a e^a)$,
and it does not coincide with the expression \eqref{eq:representative-12} of \S\ref{sec6}, which lives in $\odot^2 S\to S$ and has vanishing $S\otimes V\to V$ component. The contradiction is only apparent:
the space $B^{1,2}(\fm,\fp)$ of coboundaries includes two modules isomorphic to $S$, which can be used to modify cocycles at will. Indeed,
    $\mathring\omega-\tilde\omega= \diff(-2i\chi_1-\chi_2)$, for the coefficients appearing in \eqref{eq:representative-12} given by $a=-\frac{33}{16}$, $b= \frac{9}{32}$, $c = \frac{1}{640}$.
}
\end{remark}
We depart with a technical but useful representation-theoretic observation. With a little abuse of notation, we let 
\begin{equation}
\label{eq:soVequivariantClifford}
\begin{aligned}
\operatorname{Cl}&:V\otimes S\to S\\
&\;\;\;v\otimes s\mapsto v\cdot s\\
\\
\operatorname{Cl}&:\Lambda^2 V\otimes S\to V\otimes S\\
&\;\;\;v\wedge w\otimes s\mapsto \frac{1}{2}\big(v\otimes w\cdot s- w\otimes v\cdot s\big)\;,
\end{aligned}
\end{equation}
be the natural $\fso(V)$-equivariant Clifford multiplications with kernel $(V\otimes S)_o$ and $(\Lambda
^2 V\otimes S)_o$ respectively. By composing them, one also gets the $\fso(V)$-equivariant full Clifford multiplication $\Lambda^2 V\otimes S\to S$ that sends any $v\wedge w\otimes s$ to $v\wedge w\cdot s:=\frac{1}{2}\big(v\cdot w- w\cdot v\big)\cdot s$. Its kernel is isomorphic to the direct sum of $(V\otimes S)_o$ and $(\Lambda^2 V\otimes S)_o$. We now explicitly detail the natural $\fso(V)$-equivariant embeddings that are sections of the projections \eqref{eq:soVequivariantClifford}. To this aim, we fix an orthonormal basis $\{e_i\}_{i=0,\ldots,10}$ of $V$ and let the $\Gamma_i$'s be the associated Gamma matrices acting on the spinor module\footnote{The convention in \S\ref{sec:spencer-complex} is that $\{\Gamma_i,\Gamma_j\}=-2\eta_{ij}$, where $\eta$ is the flat metric on $V\cong \mathbb R^{1,10}$ with mostly minus signature. It differs slightly from those of \S\ref{sec6} because here it is more convenient to work in a purely real framework.}. As usual, we will tacitly use Einstein's summation convention on indices.
\begin{lemma}
\label{lemma:usefulrtobservation}
The maps
\begin{equation}
\label{eq:soVequivariantembeddings}
\begin{aligned}
\imath&:S\to V\otimes S\\
&\;\;\;s\mapsto -\frac{1}{11} e_i\otimes\Gamma^i\cdot s
\\
\\
\imath&:V\otimes S\to \Lambda^2 V\otimes S\\
&\;\;\;v\otimes s\mapsto -\frac{2}{9} \big(v\wedge e_i\otimes\Gamma^i\cdot s\big)+\frac{1}{90}(e_i\wedge e_j\otimes\Gamma^{ij}\cdot v \cdot s)
\end{aligned}
\end{equation}
are $\fso(V)$-equivariant sections of the projections \eqref{eq:soVequivariantClifford}. In particular, they are injective maps and their images are the unique $\fso(V)$-submodules $S$ into $V\otimes S$ and $V\otimes S$ into $\Lambda^2 V\otimes S$, respectively. By composing them, one also gets the $\fso(V)$-equivariant section $\imath:S\to\Lambda^2 V\otimes S$ of the full Clifford multiplication, which sends any $s$ to $-\tfrac{1}{110} e_i\wedge e_j\otimes \Gamma^{ij}\cdot s$.
\end{lemma}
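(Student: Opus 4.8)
The plan is to establish the three assertions in turn — $\fso(V)$-equivariance, the section identity $\operatorname{Cl}\circ\imath=\operatorname{id}$ (which gives injectivity at once), and the uniqueness of the images as submodules of the stated type — of which only the second requires genuine computation. Equivariance is immediate: each $\imath$ is assembled from the Clifford action $v\otimes s\mapsto v\cdot s$, the exterior product, index raising/lowering by the invariant metric $\eta$, and contraction against the $\fso(V)$-invariant identity tensor $\sum_i e_i\otimes e^i\in V\otimes V^*$ written in an orthonormal frame. Consequently the three composites $\operatorname{Cl}\circ\imath$ are $\fso(V)$-equivariant endomorphisms of $S$, of $V\otimes S$, and of $S$ respectively.

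For the section property of $\imath:S\to V\otimes S$, irreducibility of $S$ and Schur's lemma give $\operatorname{Cl}\circ\imath=c\,\operatorname{id}_S$; evaluating yields $\operatorname{Cl}(\imath(s))=-\tfrac1{11}\Gamma_i\Gamma^i\cdot s$, and $\Gamma_i\Gamma^i=\tfrac12\eta^{ij}\{\Gamma_i,\Gamma_j\}=-\eta^{ij}\eta_{ij}=-11$, so $c=1$. For $\imath:V\otimes S\to\Lambda^2V\otimes S$ I would compute $\operatorname{Cl}(\imath(v\otimes s))$ for arbitrary $v,s$ and show it equals $v\otimes s$, using $\Gamma^i\Gamma^j=\Gamma^{ij}-\eta^{ij}$ and the contraction identity $\Gamma_j\Gamma^{ij}=10\,\Gamma^i$, together with careful tracking of the antisymmetrizations built into $\operatorname{Cl}:\Lambda^2V\otimes S\to V\otimes S$; the two terms of $\imath$ are arranged so that a spurious contribution proportional to $e_i\otimes v_k\Gamma^{ki}\cdot s$ cancels between them, while the coefficient of $v\otimes s$ adds up to $\tfrac{10}9-\tfrac19=1$. (Alternatively, using $V\otimes S\cong(V\otimes S)_o\oplus S$ with both multiplicities one, Schur reduces $\operatorname{Cl}\circ\imath$ to a pair of scalars, to be pinned down by one test vector in each summand.) The composite $\imath\circ\imath:S\to\Lambda^2V\otimes S$ is then automatically a section of the full Clifford map, since the latter factors as $\operatorname{Cl}\circ\operatorname{Cl}$ through $V\otimes S$; a short computation with $\Gamma_k\Gamma^k=-11$ identifies it with $-\tfrac1{110}e_i\wedge e_j\otimes\Gamma^{ij}\cdot s$, consistently with $\Gamma_{ij}\Gamma^{ij}=-110$.

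The uniqueness statements are formal. A right inverse of a surjection of completely reducible $\fso(V)$-modules splits off a submodule complementary to the kernel. Since $\operatorname{Cl}:V\otimes S\to S$ has kernel $(V\otimes S)_o$ and $V\otimes S\cong(V\otimes S)_o\oplus S$ with $S$ of multiplicity one, the image of the first $\imath$ is the unique submodule isomorphic to $S$. Since $\operatorname{Cl}:\Lambda^2V\otimes S\to V\otimes S$ has kernel $(\Lambda^2V\otimes S)_o$ and $\Lambda^2V\otimes S\cong(\Lambda^2V\otimes S)_o\oplus(V\otimes S)_o\oplus S$ with all multiplicities one, the image of the second $\imath$ is the unique submodule isomorphic to $V\otimes S$; and the composed section lands in the unique copy of $S$ inside $\Lambda^2V\otimes S$.

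The main obstacle is the section identity for $\imath:V\otimes S\to\Lambda^2V\otimes S$: this map is a sum of two terms with finely tuned coefficients $-\tfrac29$ and $\tfrac1{90}$, and verifying $\operatorname{Cl}\circ\imath=\operatorname{id}$ requires chaining several Gamma-matrix contractions while keeping strict account of the (anti)symmetrizations, the precise cancellations occurring only for these constants (and for $-\tfrac1{11}$, $-\tfrac1{110}$ in the other two maps). Everything else — equivariance, the first section, the factorization of the full Clifford map, and the multiplicity-one uniqueness — is routine.
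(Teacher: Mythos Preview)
Your proposal is correct and follows essentially the same approach as the paper: equivariance by construction, direct verification of $\operatorname{Cl}\circ\imath=\operatorname{id}$ using $\Gamma_i\Gamma^i=-11$ and $\Gamma_j\Gamma^{ij}=10\Gamma^i$, with the two terms of the second $\imath$ conspiring to cancel the spurious $e_i\otimes\Gamma^i\cdot v\cdot s$ piece. One small slip: in the paper's organization the coefficient of $v\otimes s$ comes out as $\tfrac{11}{9}-\tfrac{2}{9}=1$ entirely from the first term (the second term contributes only the cancelling $\tfrac{1}{9}e_i\otimes\Gamma^i\cdot v\cdot s$), not $\tfrac{10}{9}-\tfrac{1}{9}$, but this does not affect the argument.
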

\begin{proof}
Equivariance of the maps is clear by construction. To verify that $\imath:S\to V\otimes S$ is a section is sufficient to note that $\operatorname{Cl}(\imath(s))= -\frac{1}{11} \operatorname{Cl}(e_i\otimes\Gamma^i\cdot s)=-\frac{1}{11} \Gamma_i\cdot\Gamma^i\cdot s=s$, for all $s\in S$. 

Now $\operatorname{Cl}:\Lambda^2 V\otimes S\to V\otimes S$ sends $-\frac{2}{9} \big(v\wedge e_i\otimes\Gamma^i\cdot s\big)$ to
\begin{align*}
-\frac{1}{9} \big(v\otimes \Gamma_i\cdot\Gamma^i\cdot s-e_i\otimes v\cdot\Gamma^i\cdot s\big)&=
-\frac{1}{9} \big(-11 v\otimes s+e_i\otimes \Gamma^i\cdot v\cdot s+2\eta(v,e^i) e_i\otimes s\big)\\
&=-\frac{1}{9} \big(-9 v\otimes s+e_i\otimes \Gamma^i\cdot v\cdot s\big)\\
&=v\otimes s-\frac{1}{9}e_i\otimes \Gamma^i\cdot v\cdot s
\end{align*}
and $\frac{1}{90}e_i\wedge e_j\otimes \Gamma^{ij}\cdot v\cdot s$ to
\begin{align*}
\frac{1}{180} \big(e_i\otimes \Gamma_j\cdot\Gamma^{ij}\cdot v\cdot s-e_j\otimes \Gamma_i\cdot\Gamma^{ij}\cdot v\cdot s\big)&=\frac{1}{180} \big(10 e_i\otimes \Gamma^i\cdot v\cdot s+10e_j\otimes \Gamma^j\cdot v\cdot s\big)
\\
&=\frac{1}{9} e_i\otimes \Gamma^i\cdot v\cdot s \;,
\end{align*}
where we used that $\Gamma_j\Gamma^{ij}=\Gamma_j\Gamma^i\Gamma^j+\Gamma_j\eta^{ij}=9\Gamma^i+\Gamma^i=10\Gamma^i$ and $\Gamma_i\Gamma^{ij}=-\Gamma_i\Gamma^{ji}=-10\Gamma^j$. 
This shows that $\operatorname{Cl}(\imath(v\otimes s))=v\otimes s$. Finally, the last claim of the lemma is straightforward.
\end{proof}
 We remark again that the first prolongation 
$\fp_{(1)}\cong Z^{1,1}(\fm,\fp)$ of $\fp$ is trivial \cite{MR3255456}, so that $\partial:C^{1,1}(\fm,\fp)\to C^{1,2}(\fm,\fp)$ is injective. We now conclude this preliminary section by further strengthening this result, at the same time rephrasing the study of the cohomology group $H^{1,2}(\fm,\fp)$ as the study of the cocycle conditions on certain normalized cochains (in other words, we first use the freedom in coboundaries to normalize cochains).

We recall that $\odot^2 S\cong \Lambda^1V\oplus\Lambda^2 V\oplus\Lambda^5 V$ decomposes in a unique way as an $\fso(V)$-module, since each isotypic component is multiplicity free. This is also recasted in the well-known Fierz Identity in $D=11$ supergravity

\begin{equation}
s\overline s=-\frac{1}{32}\Big((\overline s\Gamma^\ell s)\Gamma_\ell+\tfrac{1}{2}(\overline s\Gamma^{\ell_1\ell_2} s)\Gamma_{\ell_1\ell_2}+\tfrac{1}{5!}(\overline s\Gamma^{\ell_1\cdots\ell_5}s)\Gamma_{\ell_1\cdots\ell_5}\Big)\;,
\label{eq:FierzIdentity}
\end{equation}
which expresses the rank $1$ endomorphism $s\overline s$ of $S$ in terms of Gamma matrices, for any $s\in S$. Here we abbreviated the symplectic dual $\langle s,-\rangle$ of a spinor $s\in S$ simply by $\overline s$. For more details, see, for instance, \cite[Appendix A]{Figueroa-OFarrill:2015rfh}.

\begin{proposition}
\label{prop:7}
The cohomology group $H^{1,2}(\fm,\fp)$ can be identified with the space of cocycles $\varepsilon+\epsilon\in C^{1,2}(\fm,\fp)$, where $\varepsilon:\odot^2S\to S$, $\epsilon:S\otimes V\to V$, that satisfy the normalization conditions
\begin{align}
\label{eq:claim1I}
\varepsilon|_{\Lambda^1 V}&=0\;,\\
\label{eq:claim1II}
\eta(\imath_s\epsilon (v),w)&=\eta(\imath_s\epsilon (w),v)
\end{align}
for all $s\in S$, $v,w\in V$.
\end{proposition}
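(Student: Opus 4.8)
The plan is to use the freedom in coboundaries to put every cocycle in a normal form, and then to read off the cohomology as the space of such normalized cocycles satisfying the cocycle conditions. Recall from Proposition \ref{prop:cohomology1p} and the triviality of the first prolongation that $B^{1,2}(\fm,\fp)=\partial(\fso(V)\otimes S^*)\oplus\partial(S\otimes V^*)$ and that $\partial$ is injective on $C^{1,1}(\fm,\fp)$. A general Spencer $2$-cochain of $\mathbb Z$-grading $1$ decomposes, according to Table \ref{tab:even-cochains-small}, as $\varepsilon+\epsilon$ with $\varepsilon:\odot^2 S\to S$ and $\epsilon:S\otimes V\to V$; and a general $1$-cochain of $\mathbb Z$-grading $1$ decomposes as $X_S+\phi_V$ with $X_S:S\to\fso(V)$ and $\phi_V:V\to S$.

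First I would compute the two families of coboundaries explicitly in components. Applying the Chevalley--Eilenberg differential to $X_S:S\to\fso(V)$ produces a coboundary whose $\odot^2 S\to S$ component is (up to sign and factors) $s\mapsto X_S(s)\cdot s$ acting by the $\fso(V)$-action on the spinor $s$, while its $S\otimes V\to V$ component is $X_S(s)$ acting on $v$. Applying $\partial$ to $\phi_V:V\to S$ produces a coboundary whose $\odot^2 S\to S$ component is $s\mapsto \kappa(s,\,?)$-type term and whose $S\otimes V\to V$ component involves the Dirac current paired with $\phi_V$. The key representation-theoretic input is the decomposition $\odot^2 S\cong\Lambda^1 V\oplus\Lambda^2 V\oplus\Lambda^5 V$ from the Fierz identity \eqref{eq:FierzIdentity}, together with Lemma \ref{lemma:usefulrtobservation}: this tells us precisely which $\fso(V)$-submodules the coboundary images hit. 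I would check that $\partial(\fso(V)\otimes S^*)$ surjects onto the $\Lambda^1 V$-isotypic part of the $\odot^2 S\to S$ slot (using that the full Clifford multiplication $\fso(V)\otimes S\to S$ composed with the Dirac current detects exactly that component), which lets us normalize any cocycle so that $\varepsilon|_{\Lambda^1 V}=0$, giving \eqref{eq:claim1I}. Then I would check that $\partial(S\otimes V^*)$ can be used to kill the antisymmetric (in $v,w$) part of $\eta(\imath_s\epsilon(v),w)$, yielding the symmetry condition \eqref{eq:claim1II}. Because $\partial$ is injective, the normalization exhausts all coboundary freedom without killing any genuine cohomology: a normalized cocycle is a coboundary only if it is zero.

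The main obstacle — and the step requiring genuine care rather than bookkeeping — is verifying that the two normalization conditions \eqref{eq:claim1I}–\eqref{eq:claim1II} really are attainable simultaneously by coboundary shifts, and that together they \emph{uniquely} fix a representative in each cohomology class. Concretely one must show that the linear map sending $(X_S,\phi_V)\in C^{1,1}(\fm,\fp)$ to the pair $\big(\operatorname{pr}_{\Lambda^1 V}\circ\varepsilon^{\text{cobdry}},\ \text{skew part of }\eta(\imath_\bullet\epsilon^{\text{cobdry}},\bullet)\big)$ is an isomorphism onto the corresponding ``normalizable'' target space; injectivity is exactly the triviality of $\fp_{(1)}$ already invoked, while surjectivity (that the dimensions match, so that no residual freedom and no obstruction remain) is the delicate count. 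This amounts to a dimension comparison between $\dim\fso(V)\cdot\dim S^* + \dim S\cdot\dim V^*$ and the dimension of the space of $(\Lambda^1 V\to S)$-valued maps plus the space of skew pairings, which one can verify against the Molien--Weyl data of Table \ref{SummarizingHPMW} and Proposition \ref{prop:cohomology1p}. Once this is in place, $H^{1,2}(\fm,\fp)$ is identified with the set of $\varepsilon+\epsilon$ obeying \eqref{eq:claim1I}–\eqref{eq:claim1II} that are in addition $\partial$-closed, and the remaining analysis of the cocycle condition on these normalized cochains (to be carried out in the subsequent part of the proof of Theorem \ref{thm:1}) pins down the answer $H^{1,2}(\fm,\fp)\cong S$.
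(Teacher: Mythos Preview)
Your overall strategy matches the paper's: define the projection
\[
\pi:C^{1,2}(\fm,\fp)\to\Hom(\Lambda^1V,S)\oplus\Hom(S,\fso(V)),\qquad \varepsilon+\epsilon\mapsto\varepsilon|_{\Lambda^1V}+\operatorname{Skew}_V(\epsilon),
\]
and show that $\pi\circ\partial:C^{1,1}(\fm,\fp)\to\Hom(\Lambda^1V,S)\oplus\Hom(S,\fso(V))$ is an isomorphism. The dimension count you outline is correct (both sides decompose as $\fso(V)$-modules into $(\Lambda^2V\otimes S)_o\oplus 2(V\otimes S)_o\oplus 2S$), so injectivity and surjectivity are equivalent.

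The gap is your assertion that ``injectivity is exactly the triviality of $\fp_{(1)}$ already invoked''. This is false: vanishing of the first prolongation gives injectivity of $\partial$ itself, but says nothing about whether $\partial C^{1,1}(\fm,\fp)$ meets $\ker\pi$ nontrivially. A coboundary $\partial c$ could perfectly well have $\varepsilon|_{\Lambda^1V}=0$ and symmetric $\epsilon$ while $c\neq 0$; ruling this out is the entire content of the proof. Relatedly, your plan to use $\partial(\fso(V)\otimes S^*)$ for condition \eqref{eq:claim1I} and $\partial(S\otimes V^*)$ for condition \eqref{eq:claim1II} separately does not work, because both coboundary families contribute to both target components (the $b$-term in $\partial c(s,s)=2b(s)s-a(\kappa(s,s))$ hits $\Lambda^1V$ via the Fierz identity, and the $a$-term in $\partial c(v,s)=\kappa(s,a(v))-b(s)v$ contributes to the skew part). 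The paper therefore checks injectivity of $\pi\circ\partial$ isotypic component by isotypic component: the $(\Lambda^2V\otimes S)_o$ case is immediate, but the two components $2(V\otimes S)_o$ and $2S$ with multiplicity each require writing out $\pi\circ\partial$ as a $2\times 2$ matrix on the isotypic block and checking its determinant is nonzero --- a genuine Gamma-matrix computation using Lemma \ref{lemma:usefulrtobservation} and the Fierz identity, not a formality.
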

\begin{proof}
We let
\begin{align*}
\pi&:C^{1,2}(\fm,\fp)\to \Hom(\Lambda^1 V,S)\oplus \Hom(S,\fso(V))\\
&\;\;\;\varepsilon+\epsilon\mapsto\varepsilon|_{\Lambda^1 V}+\operatorname{Skew}_V(\epsilon)
\end{align*}
be the natural projection of $C^{1,2}(\fm,\fp)$ given by restriction of elements $\varepsilon:\odot^2 S\to S$ to $\Lambda^1 V\subset \odot^2 S$ and skew-symmetrization in $V$ of elements $\epsilon:S\otimes V\to V$. Our normalizations \eqref{eq:claim1I}-\eqref{eq:claim1II} can be enforced if the composition $\pi\circ\partial:C^{1,1}(\fm,\fp)\to  \Hom(\Lambda^1 V,S)\oplus \Hom(S,\fso(V))$ is an isomorphism. Since the domain and codomain of $\pi\circ\partial$ are  both abstractly isomorphic to $(\Lambda^2 V\otimes S)_o\oplus 2(V\otimes S)_o\oplus 2S$ as $\fso(V)$-modules, it is enough to show injectivity, and, by $\fso(V)$-equivariance, this can be verified separately for each isotypic component.

The coboundary of an element $c=a+b\in C^{1,1}(\fm,\fp)$, with $a:V\to S$ and $b:S\to\fso(V)$, is given by the formulae
\begin{align*}
\partial c(s,s)&=2b(s)s-a(\kappa(s,s))\;,\\
\partial c(v,s)&=\kappa(s,a(v))-b(s)v\;,
\end{align*}
where $s,s_1,s_2\in S$ and $v\in V$. Here $\kappa(s,s)$ is the usual Dirac current. 
\vskip0.2cm\par\noindent
\underline{The isotypic component $(\Lambda^2 V\otimes S)_o$}
\vskip0.2cm\par\noindent
If $b:S\to \fso(V)$ is an element of $(\Lambda^2 V\otimes S)_o$ then
$$(\pi\circ\partial b)(s,s)=0$$  automatically for all $s\in S$, because $\Hom(\Lambda^1 V,S)$ has no submodule isomorphic to $(\Lambda^2 V\otimes S)_o$. On the other hand $(\pi\circ\partial b)(v,s)=-b(s)v=0$ for all $s\in S$, $v\in V$, directly implies $b=0$.
\vskip0.2cm\par\noindent
\underline{The isotypic component $2(V\otimes S)_o$}
\vskip0.2cm\par\noindent
Thanks to Lemma \ref{lemma:usefulrtobservation}, maps $a:V\to S$ and $b:S\to \fso(V)$ that belong to the component $(V\otimes S)_o$ can be written as
\begin{align*}
a&=s_k\otimes e^k:V\to S\;,\qquad
b=-\frac{2}{9}e^k\wedge e^i\otimes\langle\Gamma_i\cdot t_k,-\rangle:S\to\fso(V)\;,
\end{align*}
where $s_k, t_k\in S$ for all $k=0,\ldots,10$, and for which the Clifford multiplications are vanishing: 
\begin{equation}
\label{eq:traceconditions}
\operatorname{Cl}(e^k\otimes s_k)=\Gamma^k\cdot s_k=0\qquad \operatorname{Cl}(e^k\otimes t_k)=\Gamma^k\cdot t_k=0\;.
\end{equation}
Letting $c=a+b$, we compute
\begin{align*}
\partial c(e_\ell,s)&=\kappa(s,a(e_\ell))-b(s)e_\ell\\
&=\langle s,\Gamma_i\cdot s_\ell\rangle e^i+\frac{2}{9}\langle\Gamma_i\cdot t_\ell,s\rangle e^i-\frac{2}{9}\langle\Gamma_\ell\cdot t_i,s\rangle e^i\\
&=-\langle \Gamma_i\cdot s_\ell, s\rangle e^i+\frac{2}{9}\langle\Gamma_i\cdot t_\ell,s\rangle e^i-\frac{2}{9}\langle\Gamma_\ell\cdot t_i,s\rangle e^i
\end{align*}
and take the scalar product with $e_j$ to get $-\langle \Gamma_j\cdot s_\ell,s\rangle+\frac{2}{9}\langle\Gamma_j\cdot t_\ell,s\rangle-\frac{2}{9}\langle\Gamma_\ell\cdot t_j,s\rangle$.
Skew-symmetrizing in the indices $\ell$ and $j$ and eliminating $s$ finally yields $-\Gamma_j\cdot s_\ell+\Gamma_\ell\cdot s_j
+\frac{4}{9}\Gamma_j\cdot t_\ell
-\frac{4}{9}\Gamma_\ell\cdot t_j$, whose vanishing is 
\begin{align}
\Gamma_j\cdot\big(-s_\ell+\frac{4}{9}t_\ell\big)&=
\Gamma_\ell\cdot\big(-s_j
+\frac{4}{9}t_j\big)\;,
\end{align}
for all $\ell,j=0,\ldots,10$. It is not difficult to check that this is equivalent to the condition $s_\ell=\frac{4}{9}t_\ell$. 

On the other hand, we may also compute
\begin{align*}
\partial c(s,s)&=2b(s)s-a(\kappa(s,s))\\
&=-\frac{2}{9}\Gamma^{ki}\cdot s\otimes\langle\Gamma_i\cdot t_k,s\rangle-\langle s,\Gamma^\ell\cdot s\rangle s_\ell\\
&=\frac{2}{9}\Gamma^{ki}\cdot s\otimes\langle s,\Gamma_i\cdot t_k\rangle-\langle s,\Gamma^\ell\cdot s\rangle s_\ell\\
&=\frac{2}{9}\Gamma^{ki}\cdot \big((s\overline s)\Gamma_i\cdot t_k\big)-(\overline s\Gamma^\ell s) s_\ell\;,
\end{align*}
where we used that the spinorial action of $\fso(V)$ is half the Clifford multiplication of $\Lambda^2 V$, under the natural identification $\fso(V)\cong\Lambda^2 V$. Using the Fierz Identity \eqref{eq:FierzIdentity} on $s\overline s$ and retaining only the contribution in $\Lambda^1V\subset \odot^2 S$, we finally arrive at
\begin{align*}
\partial c|_{\Lambda^1 V}(s,s)&=-(\overline s\Gamma^\ell s)\big(s_\ell+\frac{1}{32}\frac{2}{9}\Gamma^{ki}
\Gamma_\ell
\Gamma_i\cdot t_k\big)\\
&=-(\overline s\Gamma^\ell s)\big(s_\ell+\frac{1}{16}\Gamma^k\Gamma_\ell\cdot t_k\big)\\
&=-(\overline s\Gamma^\ell s)\big(s_\ell-\frac{1}{8}t_\ell\big)\;,
\end{align*}
where we used the identities $\Gamma^{ki}\Gamma_\ell\Gamma_i=\Gamma^k\Gamma^i\Gamma_\ell\Gamma_i+\eta^{ki}\Gamma_\ell\Gamma_i=9\Gamma^k\Gamma_\ell+\Gamma_\ell\Gamma^k$ and $\Gamma_k\Gamma_\ell=-\Gamma_\ell\Gamma_k-2\eta_{\ell k}$, and the second trace condition in \eqref{eq:traceconditions}. The vanishing of this term for all $s\odot s\in\Lambda^1 V\subset \odot^2 S$ is equivalent to the condition $s_\ell=\frac{1}{8}t_\ell$.  

In summary, we arrived at the conditions $s_\ell=\frac{4}{9}t_\ell$ and $s_\ell=\frac{1}{8}t_\ell$, thus $s_\ell=t_\ell=0$ for all $\ell=0,\ldots,10$. In other words, $a=b=0$ and $\pi\circ\partial$ is injective on the isotypic component $2(V\otimes S)_o$.
\vskip0.2cm\par\noindent
\underline{The isotypic component $2S$}
\vskip0.2cm\par\noindent
The strategy is similar to the previous case. By Lemma \ref{lemma:usefulrtobservation}, any $s,\widetilde s\in S$ determine maps $a:V\to S$ and $b:S\to\fso(V)$ given by
\begin{align*}
a&=-\frac{1}{11}\Gamma_i\cdot s\otimes e^i\;,\qquad
b=-\frac{1}{110} e_i\wedge e_j\otimes \langle\Gamma^{ij}\cdot \widetilde s,-\rangle\;.
\end{align*}
Letting $c=a+b$, we compute
\begin{align*}
\partial c(e_\ell,t)&=\kappa(t,a(e_\ell))-b(t)e_\ell\\
&=\frac{1}{11}\langle \Gamma_i\Gamma_\ell\cdot s,t\rangle e^i
+\frac{1}{55} \langle\Gamma_{\ell i}\cdot \widetilde s,t\rangle e^i
\\
&=\frac{1}{11}\langle \Gamma_{i\ell}\cdot s,t\rangle e^i
-\frac{1}{11}\langle s,t\rangle e_\ell
+\frac{1}{55} \langle\Gamma_{\ell i}\cdot \widetilde s,t\rangle e^i
\end{align*}
and take the scalar product with $e_j$ to get $\frac{1}{11}\langle \Gamma_{j\ell}\cdot s,t\rangle 
-\frac{1}{11}\langle s,t\rangle \eta_{\ell j}
+\frac{1}{55} \langle\Gamma_{\ell j}\cdot \widetilde s,t\rangle$.
Skew-symmetrizing in the indices $\ell$ and $j$ and eliminating $t$ gives $\frac{1}{11}\Gamma_{j\ell}\cdot s 
+\frac{1}{55} \Gamma_{\ell j}\cdot \widetilde s$, whose vanishing is $\widetilde s=5s$.

On the other hand,
\begin{align*}
\partial c(t,t)&=2b(t)t-a(\kappa(t,t))\\
&=-\frac{1}{110}\langle\Gamma^{ij}\cdot \widetilde s,t\rangle\Gamma_{ij}\cdot t
+\frac{1}{11}\langle t,\Gamma^j\cdot t\rangle \Gamma_j\cdot s\\
&=\frac{1}{110}\Gamma_{ij}\cdot \big((t\overline t)\Gamma^{ij}\cdot \widetilde s\big)
+\frac{1}{11}(\overline t\Gamma^j t)
\Gamma_j\cdot s\;.
\end{align*}
 Using the Fierz Identity \eqref{eq:FierzIdentity} on $t\overline t$ and retaining only the contribution in $\Lambda^1V\subset \odot^2 S$, we finally arrive at
\begin{align*}
\partial c|_{\Lambda^1 V}(t,t)
&=-\frac{1}{110}\frac{1}{32}(\overline t\Gamma^\ell t)\Gamma_{ij}
\Gamma_\ell
\Gamma^{ij}\cdot \widetilde s
+\frac{1}{11}(\overline t\Gamma^\ell t)
\Gamma_\ell\cdot s\\
&=\frac{1}{110}\frac{70}{32}(\overline t\Gamma^\ell t)
\Gamma_\ell\cdot \widetilde s
+\frac{1}{11}(\overline t\Gamma^\ell t)
\Gamma_\ell\cdot s\;,
\end{align*}
where we used that $\Gamma_{ij}\Gamma_\ell\Gamma^{ij}=-70\Gamma_\ell$. The vanishing of this term for all $t\odot t\in\Lambda^1 V\subset \odot^2 S$ is equivalent to the condition 
$\frac{7}{32}\widetilde s+s=0$.

In summary, we have the conditions $\widetilde s=5s$ and $\frac{7}{32}\widetilde s+s=0$, thus $s=\widetilde s=0$. In other words, $a=b=0$ and $\pi\circ\partial$ is injective on the isotypic component $2S$.
\end{proof}
We shall study the maps $\varepsilon+\epsilon\in C^{1,2}(\fm,\fp)$, where $\varepsilon:\odot^2 S\to S$ and $\epsilon:S\otimes V\to V$, which in addition satisfy the cocycle condition 
\begin{equation}
\kappa(s,\varepsilon(s,s))+\beta(\epsilon(s,s),s)=0\;,
\label{eq:cocycleconditiondegree1}
\end{equation}
for all $s\in S$.  We may assume that $\imath_s\epsilon:V\to V$ is {\it symmetric} in $V$ for any fixed $s\in S$
 and that $\varepsilon|_{\Lambda^1 V}=0$, under the natural identification $\odot^2 S\cong \Lambda^1 V\oplus\Lambda^2 V\oplus\Lambda^5 V$.  In fact, these assumptions are precisely the normalization conditions \eqref{eq:claim1I}-\eqref{eq:claim1II} from Proposition \ref{prop:7}. Furthermore, removing the non-trivial contribution $\varepsilon^\phi+\epsilon^\phi$ to $H^{1,2}(\fm,\fp)$ that we already isolated in Proposition \ref{prop:cohomology1p}, we may also assume w.l.o.g. that $\imath_s\epsilon:V\to V$ is {\it traceless} in $V$ for any fixed $s\in S$. Referring to such cochains as ``normalized'', we have proved the following.
\begin{corollary}
\label{cor:final}
The group $H^{1,2}(\fm,\fp)$ is isomorphic as $\fso(V)$-module to the direct sum of $S$ and the space of normalized cocycles, namely the space of maps $\varepsilon+\epsilon\in C^{1,2}(\fm,\fp)$ that satisfy the system of equations
\begin{align}
\label{eq:claim1Ibis}
\varepsilon|_{\Lambda^1 V}&=0\;,\\
\label{eq:claim1IIbis}
\eta(\imath_s\epsilon (v),w)&=\eta(\imath_s\epsilon (w),v)\;,\\
\eta(\imath_s\epsilon (e_i),e^i)&=0\;,\\
\kappa(s,\varepsilon(s,s))+\beta(\epsilon(s,s),s)&=0\;,
\end{align}
for all $s\in S$, $v,w\in V$.
\end{corollary}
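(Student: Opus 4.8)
By Proposition \ref{prop:7}, the group $H^{1,2}(\fm,\fp)$ is $\fso(V)$-equivariantly isomorphic to the space $\mathcal Z$ of cochains $\varepsilon+\epsilon\in C^{1,2}(\fm,\fp)$ (with $\varepsilon:\odot^2S\to S$, $\epsilon:S\otimes V\to V$) that satisfy the cocycle condition \eqref{eq:cocycleconditiondegree1} together with the two normalisations \eqref{eq:claim1I} and \eqref{eq:claim1II}. The plan is simply to peel off from $\mathcal Z$ one $\fso(V)$-submodule isomorphic to $S$, so that what is left is exactly the space of ``normalised'' cocycles of the statement (namely those in $\mathcal Z$ additionally satisfying the trace condition $\eta(\imath_s\epsilon(e_i),e^i)=0$).

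I would first introduce the $\fso(V)$-equivariant linear \emph{trace} map $T\colon\mathcal Z\to S^*$, $\varepsilon+\epsilon\mapsto\bigl(s\mapsto\eta(\imath_s\epsilon(e_i),e^i)\bigr)$, whose kernel is, by construction, precisely the space of normalised cocycles. Since $\fso(V)$ is semisimple, the short exact sequence $0\to\ker T\to\mathcal Z\xrightarrow{\;T\;}\operatorname{im}T\to 0$ splits $\fso(V)$-equivariantly, so $\mathcal Z\cong\ker T\oplus\operatorname{im}T$; because $S^*$ is irreducible, $\operatorname{im}T$ is either $0$ or all of $S^*\cong S$ (recall $S$ is self-dual in $D=11$). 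Hence everything reduces to checking that $T$ does not vanish identically.

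For this I would invoke the explicit cocycle of Proposition \ref{prop:cohomology1p}: there, for each $\phi\in S^*$, one has $\epsilon^\phi(s,v)=-2\phi(s)v$, so $\imath_s\epsilon^\phi=-2\phi(s)\,\mathrm{id}_V$ and therefore $T(\varepsilon^\phi+\epsilon^\phi)=-22\,\phi\neq 0$ for $\phi\neq 0$. The only thing to verify is that $\varepsilon^\phi+\epsilon^\phi$ may be taken inside $\mathcal Z$: condition \eqref{eq:claim1II} is immediate since $\imath_s\epsilon^\phi$ is a scalar multiple of $\mathrm{id}_V$, hence symmetric, while \eqref{eq:claim1I} can be arranged by adding a suitable coboundary $\partial c$, $c=a+b\in C^{1,1}(\fm,\fp)$; as $b(s)\in\fso(V)$ is trace-free, such a modification alters $T$ only by $s\mapsto\operatorname{tr}_V\bigl(v\mapsto\kappa(s,a(v))\bigr)$, again an element of $S^*$, so $T$ stays nonzero on the resulting submodule. (Equivalently, and more invariantly: the distinguished copy of $S^*$ inside $H^{1,2}(\fm,\fp)\cong\mathcal Z$ produced by Proposition \ref{prop:cohomology1p} is irreducible, so by Schur's lemma its image under $T$ is either $0$ or all of $S^*$, and the above trace computation rules out the former.) Combining, $\mathcal Z\cong S\oplus\ker T$ with $\ker T$ the normalised cocycles, which is exactly the claimed isomorphism; the equivalent reformulation of the cocycle condition in the form \eqref{eq:cocycleconditiondegree1} is the one already recorded before the statement.

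The single point that is not pure bookkeeping is the compatibility just discussed: one must make sure that the copy of $S$ isolated in Proposition \ref{prop:cohomology1p} survives after its representatives are brought into the normalised form of Proposition \ref{prop:7}. Given the explicit formula $\epsilon^\phi(s,v)=-2\phi(s)v$ this is a one-line computation, so I do not expect a genuine obstacle here — the substantive work has already been carried out in Propositions \ref{prop:7} and \ref{prop:cohomology1p}, and the corollary is essentially their repackaging.
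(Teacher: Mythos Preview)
Your approach is essentially the paper's: the paragraph preceding the corollary simply combines Propositions~\ref{prop:7} and~\ref{prop:cohomology1p}, noting that the explicit representative $\varepsilon^\phi+\epsilon^\phi$ carries a nonvanishing $V$-trace, so that ``removing'' this contribution forces the extra traceless condition on $\imath_s\epsilon$. The compatibility issue you single out---that the trace survives after bringing the representative into the normalised gauge of Proposition~\ref{prop:7}---is glossed over in the paper as well; your computation $\tilde T(\varepsilon^\phi+\epsilon^\phi)=-22\phi$ on the \emph{unnormalised} representative does not literally settle it (the coboundary correction could in principle cancel it), but the fix is routine: extend the $2S$-isotypic check at the end of the proof of Proposition~\ref{prop:7} by adjoining the extra copy of $S$ from $\mathbb RZ\otimes S^*$ on the domain side and the trace component $T$ on the target side, and verify that the resulting $3\times 3$ map on the $S$-isotypic component of $(\pi,T)\circ\partial$ remains injective.
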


\subsubsection{The group $H^{1,2}(\fm,\fp)$}
\label{subsubsec:final12}
Our goal is to show that the space of the normalized cocycles as detailed in Corollary \ref{cor:final} is trivial. 
We depart by partly polarizing \eqref{eq:cocycleconditiondegree1} 
and taking the scalar product with a generic $v\in V$ to get
\begin{equation}
\label{eq:partlypolarized}
\begin{aligned}
0&=2\langle \varepsilon(s,t),v\cdot s\rangle+\langle\varepsilon(s,s),v\cdot t\rangle+2\eta\big(\epsilon(\kappa(t,s),s),v\big)+\eta\big(\epsilon(\kappa(s,s),t),v\big)\\
&=2\langle \varepsilon(s,t),v\cdot s\rangle+\langle\varepsilon(s,s),v\cdot t\rangle+2\eta\big(\epsilon(v,s),\kappa(t,s)\big)+\eta\big(\epsilon(v,t),\kappa(s,s)\big)\\
&=2\langle \varepsilon(s,t),v\cdot s\rangle+\langle\varepsilon(s,s),v\cdot t\rangle+2\langle t,\epsilon(v,s)\cdot s\rangle+\langle s,\epsilon(v,t)\cdot s\rangle \;,
\end{aligned}
\end{equation}
where we used that $\imath_s\epsilon:V\rightarrow V$ is symmetric in $V$. Fixing an orthonormal basis $\{e_i\}_{i=0,\ldots,10}$ of $V$ with associated Gamma matrices $\Gamma_i$, we may write
\begin{equation}
\label{eq:definitionalphabeta}
\begin{aligned}
\varepsilon(s,t)&=\varepsilon^{\ell_1\ell_2}\frac{1}{2}(\overline s\Gamma_{\ell_1\ell_2}t)+
\varepsilon^{\ell_1\cdots\ell_5}\frac{1}{5!}(\overline s\Gamma_{\ell_1\cdots\ell_5}t)\;,\\
\epsilon(v,s)&=e_i\otimes e^j(v)(\overline{\epsilon^i{}_{j}} s)\;,
\end{aligned}
\end{equation}
for all $s,t\in S$ and $v\in V$. Here each of the elements $\varepsilon^{\ell_1\ell_2}$, $\varepsilon^{\ell_1\cdots\ell_5}$, $\epsilon_{ij}$ is in $S$, for any fixed indices. As already explained, the spinors $\epsilon_{ij}$ are symmetric traceless in the indices $i$ and $j$.
\begin{lemma}
The cocyle condition \eqref{eq:partlypolarized} on normalized cochains is equivalent to the vanishing of
\begin{equation}
\label{eq:reformulationcocycle}
\begin{aligned}
&(\overline{\varepsilon^{m_1m_2}}\Gamma_j s)\overline s\Gamma_{m_1m_2}+
\frac{2}{5!}(\overline{\varepsilon^{m_1\cdots m_5}}\Gamma_j s)\overline s\Gamma_{m_1\cdots m_5}+
\frac12(\overline s\Gamma_{\ell_1\ell_2}s)\overline{\varepsilon^{\ell_1\ell_2}}\Gamma_j+\\
&\frac{1}{5!}(\overline s\Gamma_{\ell_1\cdots\ell_5}s)\overline{\varepsilon^{\ell_1\cdots\ell_5}}\Gamma_j+
2(\overline{\epsilon^i{}_j} s)\overline s\Gamma_i+(\overline s\Gamma_\ell s)\overline{\epsilon^\ell{}_j}
\end{aligned}
\end{equation}
for all $j=0,\ldots,10$, and $s\in S$.
\end{lemma}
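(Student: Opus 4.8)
The plan is to unwind \eqref{eq:partlypolarized} by direct substitution, exploiting that it must hold for \emph{all} $v\in V$ and \emph{all} test spinors $t\in S$, while $s\in S$ is the genuinely free variable. Since \eqref{eq:partlypolarized} is $\mathbb R$-linear in $v$, it is equivalent to its specializations at $v=e_j$, $j=0,\ldots,10$, for the fixed orthonormal basis with associated Gamma matrices $\Gamma_j$. For each such $j$, I would insert the parametrizations \eqref{eq:definitionalphabeta} of $\varepsilon$ and $\epsilon$ into the four terms of \eqref{eq:partlypolarized} and observe that the outcome is precisely the value on $t$ of the $S^*$-valued expression \eqref{eq:reformulationcocycle}. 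Because \eqref{eq:partlypolarized} vanishes identically in $t$ and the resulting expression is linear in $t$, this is equivalent to the vanishing of \eqref{eq:reformulationcocycle} for all $j$ and all $s$; conversely, given the latter, one recovers \eqref{eq:partlypolarized} for an arbitrary $v$ by writing $v=v^je_j$ and pairing with $t$. Thus the lemma reduces to a term-by-term identification.

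Carrying out the substitution: inserting \eqref{eq:definitionalphabeta} into the two $\varepsilon$-terms $2\langle\varepsilon(s,t),e_j\cdot s\rangle$ and $\langle\varepsilon(s,s),e_j\cdot t\rangle$ and rewriting $\langle\xi,\Gamma_j\sigma\rangle=\overline\xi\,\Gamma_j\sigma$ produces — with the prefactors $\tfrac12$, $\tfrac1{5!}$, $2$ passing through unchanged — exactly the first four summands of \eqref{eq:reformulationcocycle} paired with $t$. For the two $\epsilon$-terms $2\langle t,\epsilon(e_j,s)\cdot s\rangle$ and $\langle s,\epsilon(e_j,t)\cdot s\rangle$ one writes $\epsilon(e_j,\sigma)=e_i(\overline{\epsilon^i{}_j}\sigma)$ and applies Clifford multiplication: the second lands on $(\overline s\,\Gamma_i s)(\overline{\epsilon^i{}_j}t)$ immediately, while the first needs one flip $\overline t\,\Gamma_i s=\overline s\,\Gamma_i t$ (symmetry of $C\Gamma_i$) to reach $2(\overline{\epsilon^i{}_j}s)(\overline s\,\Gamma_i t)$; together these give the last two summands of \eqref{eq:reformulationcocycle}. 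Adding the four contributions yields \eqref{eq:reformulationcocycle} evaluated on $t$, which completes the reduction in both directions.

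I do not expect a genuine obstacle here: this is essentially a bookkeeping exercise. The only point requiring care is the symplectic symmetry of the Gamma matrices in the two slightly different conventions of \S\ref{subsec:MWD=11ingredients} and \S\ref{sec:spencer-complex}: from $C^T=-C$ and $\Gamma_a^T=-C\Gamma_aC^{-1}$ one gets that $C\Gamma_a$, $C\Gamma_{a_1a_2}$ and $C\Gamma_{a_1\cdots a_5}$ are symmetric (while $C$, $C\Gamma_{a_1a_2a_3}$, $C\Gamma_{a_1\cdots a_4}$ are antisymmetric). This is what makes the parametrization \eqref{eq:definitionalphabeta} legitimate in the first place — the rank-$1$, rank-$2$ and rank-$5$ bilinears span the symmetric bilinear forms on $S$, the rank-$1$ piece being killed by the normalization $\varepsilon|_{\Lambda^1V}=0$, in accordance with the Fierz identity \eqref{eq:FierzIdentity} — and it is also the single nontrivial manipulation (the flip $\overline t\,\Gamma_i s=\overline s\,\Gamma_i t$) entering the $\epsilon$-terms. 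Beyond this, one only needs to stay consistent about the sign with which $e_j$ acts by Clifford multiplication, which is fixed by the conventions already in force.
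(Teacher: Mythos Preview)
Your proposal is correct and follows exactly the paper's approach: substitute \eqref{eq:definitionalphabeta} into \eqref{eq:partlypolarized} with $v=e_j$ and abstract the test spinor $t$. In fact you supply more detail than the paper does (the paper's proof is a single sentence), and your observation about the symmetry of $C\Gamma_i$ justifying the flip $\overline t\,\Gamma_i s=\overline s\,\Gamma_i t$ is the one nontrivial step, correctly identified.
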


\begin{proof}
This is obtained by substituting \eqref{eq:definitionalphabeta} into
\eqref{eq:partlypolarized} with $v=e_j$ and abstracting $t\in S$.
\end{proof}
Using the Fierz Identity 
\eqref{eq:FierzIdentity} in \eqref{eq:reformulationcocycle} and abstracting the independent contributions in 
$\odot^2 S\cong \Lambda^1 V\oplus\Lambda^2 V\oplus\Lambda^5 V$, we arrive at three separate equations, which we now detail.
\vskip0.4cm\par\noindent
\underline{The contribution coming from $s\odot s\in\Lambda^1 V\subset\odot^2 S$}
\vskip0.2cm\par\noindent
This identity reads as
\begin{equation*}
\begin{aligned}
&-\frac{1}{32}\overline{\varepsilon^{m_1m_2}}\Gamma_j 
\Gamma_\ell
\Gamma_{m_1m_2}
-\frac{1}{32}\frac{2}{5!}\overline{\varepsilon^{m_1\cdots m_5}}\Gamma_j 
\Gamma_\ell
\Gamma_{m_1\cdots m_5}
-\frac{1}{32}2\overline{\epsilon^i{}_j}
\Gamma_\ell
\Gamma_i
+\overline{\epsilon_\ell{}_j}=0
\end{aligned}
\end{equation*}
and, upon dualization, it becomes
\begin{equation}
\label{eq:reformulationcocycle1}
\begin{aligned}
\frac{1}{32}\Gamma_{m_1m_2}\Gamma_\ell\Gamma_j\varepsilon^{m_1m_2}
+\frac{1}{32}\frac{2}{5!}\Gamma_{m_1\cdots m_5}\Gamma_\ell\Gamma_j \varepsilon^{m_1\cdots m_5}
-\frac{1}{32}2
\Gamma_i\Gamma_\ell\epsilon^i{}_j
+\epsilon_\ell{}_j&=0\;.
\end{aligned}
\end{equation}
The equation holds for all indices $j,\ell=0,\ldots,10$.
\vskip0.2cm\par\noindent
\underline{The contribution coming from $s\odot s\in\Lambda^2 V\subset\odot^2 S$}
\vskip0.2cm\par\noindent
This identity reads as
\begin{equation*}
\begin{aligned}
&-\frac{1}{32}\overline{\varepsilon^{m_1m_2}}\Gamma_j 
\Gamma_{\ell_1\ell_2}
\Gamma_{m_1m_2}
-\frac{1}{32}\frac{2}{5!}\overline{\varepsilon^{m_1\cdots m_5}}\Gamma_j 
\Gamma_{\ell_1\ell_2}
\Gamma_{m_1\cdots m_5}+
\overline{\varepsilon_{\ell_1\ell_2}}\Gamma_j
-\frac{1}{32}2\overline{\epsilon^i{}_j} 
\Gamma_{\ell_1\ell_2}
\Gamma_i=0
\end{aligned}
\end{equation*}
and, upon dualization, it becomes
\begin{equation}
\label{eq:reformulationcocycle2}
\begin{aligned}
\!\!\!\!\!\!\frac{1}{32}\Gamma_{m_1m_2}\Gamma_{\ell_1\ell_2}\Gamma_j\varepsilon^{m_1m_2}
+\frac{1}{32}\frac{2}{5!}\Gamma_{m_1\cdots m_5}\Gamma_{\ell_1\ell_2}\Gamma_j\varepsilon^{m_1\cdots m_5}
-\Gamma_j\varepsilon_{\ell_1\ell_2}
-\frac{1}{32}2
\Gamma_i\Gamma_{\ell_1\ell_2}\epsilon^i{}_j
=0\;.
\end{aligned}
\end{equation}
The equation holds for all indices $j,\ell_1,\ell_2=0,\ldots,10$.

\vskip0.2cm\par\noindent
\underline{The contribution coming from $s\odot s\in\Lambda^5 V\subset\odot^2 S$}
\vskip0.2cm\par\noindent
This identity reads as
\begin{equation*}
\begin{aligned}
&-\frac{1}{32}\overline{\varepsilon^{m_1m_2}}\Gamma_j 
\Gamma_{\ell_1\cdots\ell_5}
\Gamma_{m_1m_2}-\frac{1}{32}\frac{2}{5!}\overline{\varepsilon^{m_1\cdots m_5}}\Gamma_j 
\Gamma_{\ell_1\cdots\ell_5}
\Gamma_{m_1\cdots m_5}
+\overline{\varepsilon_{\ell_1\cdots\ell_5}}\Gamma_j
-\frac{1}{32}2\overline{\epsilon^i{}_j} 
\Gamma_{\ell_1\cdots\ell_5}
\Gamma_i=0
\end{aligned}
\end{equation*}
and, upon dualization, it becomes
\begin{equation}
\label{eq:reformulationcocycle3}
\begin{aligned}
\frac{1}{32}\Gamma_{m_1m_2}\Gamma_{\ell_1\cdots\ell_5}\Gamma_j\varepsilon^{m_1m_2} 
+\frac{1}{32}\frac{2}{5!}\Gamma_{m_1\cdots m_5}\Gamma_{\ell_1\cdots\ell_5}\Gamma_j\varepsilon^{m_1\cdots m_5} 
-\Gamma_j\varepsilon_{\ell_1\cdots\ell_5}
-\frac{1}{32}2\Gamma_i\Gamma_{\ell_1\cdots\ell_5}\epsilon^i{}_j 
=0\;.
\end{aligned}
\end{equation}
The equation holds for all indices $j,\ell_1,\ldots,\ell_5=0,\ldots,10$.
\vskip0.3cm\par\noindent
The system of equations \eqref{eq:reformulationcocycle1}-\eqref{eq:reformulationcocycle3} looks rather beautiful and challenging. Although there are several ways to simplify the system from the representation-theoretic point of view, we haven't been able to find a sufficiently clear and complete proof that avoids discussing too many subcases. It is therefore a matter of calculating the resulting expressions using our favorite explicit realization of the Clifford algebra and see that $\varepsilon=\epsilon=0$ is the only solution to the system; the explicit verification can be found in the Mathematica supplement accompanying the arXiv posting of this article. 
This proves:

\begin{theorem}
\label{thm:persoilconto}
The group $H^{1,2}(\fm,\fp)\cong S$ as an $\fso(V)$-module.
\end{theorem}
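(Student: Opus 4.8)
The plan is to prove that the copy of $S$ inside $H^{1,2}(\fm,\fp)$ exhibited in Proposition~\ref{prop:cohomology1p} accounts for the whole group. By Corollary~\ref{cor:final} this is equivalent to showing that the space of \emph{normalized} cocycles vanishes: every $\varepsilon+\epsilon\in C^{1,2}(\fm,\fp)$, with $\varepsilon\colon\odot^2S\to S$ and $\epsilon\colon S\otimes V\to V$, satisfying $\varepsilon|_{\Lambda^1 V}=0$, $\imath_s\epsilon$ symmetric and traceless in $V$, and the cocycle identity \eqref{eq:cocycleconditiondegree1}, must be identically zero. As a preliminary I would record the a priori constraint furnished by the Molien--Weyl analysis: by Lemma~\ref{lem:1} and Table~\ref{SummarizingHPMW} the only $\fso(V)$-irreducible besides $S$ that could contribute to $H^{d,2}(\fm,\fp)$ with $d=1$ is $(V\otimes\Lambda^2V\otimes S)_o$, and that module is absent in form number $2$ (it occurs only at $q=3$); hence $H^{1,2}(\fm,\fp)$ is automatically a sum of copies of $S$, and the theorem reduces to the sharper claim that the multiplicity equals exactly one.

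For the core argument I would use the reformulation already in place: on normalized cochains the cocycle identity \eqref{eq:cocycleconditiondegree1} is equivalent, after feeding the Fierz identity \eqref{eq:FierzIdentity} into the partially polarized form \eqref{eq:partlypolarized} and separating the three isotypic pieces $\Lambda^1V,\Lambda^2V,\Lambda^5V$ of $\odot^2S$, to the system \eqref{eq:reformulationcocycle1}--\eqref{eq:reformulationcocycle3}. Its unknowns are the spinor-valued tensors $\varepsilon^{\ell_1\ell_2}$ (skew), $\varepsilon^{\ell_1\cdots\ell_5}$ (skew) and $\epsilon_{ij}$ (symmetric, traceless in $V$), spanning respectively $\Lambda^2V\otimes S\cong(\Lambda^2V\otimes S)_o\oplus(V\otimes S)_o\oplus S$, the module $\Lambda^5V\otimes S$ (which decomposes into its Cartan summand and lower-rank Clifford-contraction pieces), and $\odot^2_0V\otimes S\cong(\odot^2V\otimes S)_o\oplus(V\otimes S)_o$ -- with the explicit equivariant sections of Lemma~\ref{lemma:usefulrtobservation} realizing the embeddings of the low-rank summands. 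Because \eqref{eq:reformulationcocycle1}--\eqref{eq:reformulationcocycle3} are $\fso(V)$-equivariant, the system decouples into an independent linear system on the multiplicity space of each irreducible type that occurs; for a type of total multiplicity $\mu$ one extracts, by contracting the tensorial equations with suitable strings of $\Gamma$-matrices and of $\eta$'s and separating symmetric from skew parts in the free indices, a finite square system of size governed by $\mu$ whose coefficients are rational numbers coming from Clifford traces such as $\Gamma_{ij}\Gamma_\ell\Gamma^{ij}=-70\,\Gamma_\ell$ and $\Gamma_j\Gamma^{ij}=10\,\Gamma^i$; showing each such matrix is nonsingular forces the component to vanish. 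Sweeping over the multiplicity-one types and then the recurring types $(V\otimes S)_o$ and $S$ exhausts the unknowns and yields $\varepsilon=\epsilon=0$.

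The main obstacle is precisely the combinatorial bulk of that last step: the types $(V\otimes S)_o$ and $S$ recur across the three unknown spaces, so their multiplicity spaces are several-dimensional; the decomposition of $\Lambda^5V\otimes S$ adds further summands; and the necessary contraction identities, routine individually, multiply into an uncomfortably large number of subcases, which is why a fully self-contained representation-theoretic proof seems out of reach. I would therefore close the argument by direct computation: fix an explicit real matrix realization of the $D=11$ Clifford algebra of signature $(1,10)$ (built from tensor products of Pauli matrices, compatible with the $\{\Gamma_i,\Gamma_j\}=-2\eta_{ij}$ convention of \S\ref{sec:spencer-complex}), recast \eqref{eq:reformulationcocycle1}--\eqref{eq:reformulationcocycle3} together with the normalization conditions of Corollary~\ref{cor:final} as a single homogeneous linear system in the finitely many components of $\varepsilon$ and $\epsilon$, and check that its only solution is trivial. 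This gives $H^{1,2}(\fm,\fp)\cong S$ as an $\fso(V)$-module.

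As an alternative that could trim the computation, I would observe that by the fugacity-refined series of Proposition~\ref{prop:CHPSd13} the $S$-isotypic parts of $C^{1,1},C^{1,2},C^{1,3}$ have dimensions $2,5,2$ while $C^{1,q}=0$ for $q\ge 4$ and $\partial\colon C^{1,1}\to C^{1,2}$ is injective (Proposition~\ref{prop:cohomology1h}); an Euler-characteristic count then shows it would suffice to prove that $\partial\colon C^{1,2}\to C^{1,3}$ is surjective in the $S$-isotypic component, i.e. a rank computation for one map from a $5$-dimensional to a $2$-dimensional space, which combined with Lemma~\ref{lem:1} again delivers $H^{1,2}(\fm,\fp)\cong S$.
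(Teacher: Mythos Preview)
Your main line—reduce via Corollary~\ref{cor:final} to showing that the space of normalized cocycles vanishes, expand the cocycle identity into the system \eqref{eq:reformulationcocycle1}--\eqref{eq:reformulationcocycle3} via the Fierz decomposition, and then verify triviality by direct computation in an explicit matrix realisation of the Clifford algebra—is precisely what the paper does; the paper likewise concedes that a purely representation-theoretic argument fragments into too many subcases and defers the final step to a Mathematica check.

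One caveat is worth flagging: your preliminary Molien--Weyl ``constraint'' is overstated as a rigorous reduction. A vanishing Euler characteristic for an irreducible $U$ at degree~$1$ does \emph{not} exclude $U$ from $H^{1,2}(\fm,\fp)$; it only forces the alternating sum of multiplicities across form numbers to cancel. The paper is explicit about this—the Molien--Weyl output is merely suggestive (``this result holds if the differential were generic''), and genericity of the differential is exactly what the direct computation is there to certify. This does not damage your main argument, which nowhere relies on that remark, but it does leave a gap in your proposed alternative: surjectivity of $\partial\colon C^{1,2}\to C^{1,3}$ on the $S$-isotypic component pins down only $\operatorname{mult}(S,H^{1,2})$, and Lemma~\ref{lem:1} by itself does not dispose of the remaining isotypic types. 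To make the alternative complete you would need the analogous surjectivity check (or an equivalent rank computation) for each of the other irreducibles in Table~\ref{SummarizingHPMW}, at which point the saving over the direct computation is less clear.
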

As already advertised at the beginning of \S\ref{sec:proofthm1}, this is coherent with the result on the Euler characteristic $\chi(H^{1,\bullet}(\mathfrak m, \mathfrak p)\otimes S^*)^G=+1$ obtained in \S\ref{sec6} using the Molien-Weyl formula, rigorously setting the result suggested in \S\ref{sec3.3}.
\subsubsection{The group $H^{3,2}(\fm,\fp)$}
\label{subsubH32}
We here deal with Spencer $2$-cochains $\sigma+\tau$ of degree $3$, where $\sigma:V \otimes S \to \fso(V)$ and $\tau:\Lambda^2 V\to\ S$. Since there are no Spencer $1$-cochains of degree $3$ (see \eqref{tEABB} and Table \ref{tab:even-cochains-small}), the cohomology group $H^{3,2}(\fm,\fp)$ coincides with the space of Spencer $2$-cochains satisfying the cocycle conditions
\begin{equation}
\label{eq:cocycleeqsdegree3}
\begin{aligned}
\sigma(\kappa(s,s),s)&=0\;,\\
\kappa(s,\tau(v,w))&=\sigma(v,s)w-\sigma(w,s)v\;,\\
\tau(\kappa(s,s),v)&=-2\sigma(v,s)s\;,
\end{aligned}
\end{equation} 
for all $s\in S$, $v,w\in V$. We note that the space where the component $\tau$ lives is isomorphic as an $\fso(V)$-module to 
$\Lambda^2 V\otimes S\cong (\Lambda^2 V\otimes S)_o\oplus (V\otimes S)_o\oplus S$, thus it consists of $3$ irreducible components, while the space where $\sigma$ lives is much bigger and it consists of $10$ irreducible components. However the following result cuts down a lot of the freedom.
\begin{proposition}
\label{prop:tausigma}
If $\sigma+\tau$ is a Spencer $2$-cocycle of degree $3$, then $\tau$ uniquely determines $\sigma$ via any of the last two equations in \eqref{eq:cocycleeqsdegree3}. Explicitly, if $f:=\imath_s\sigma:V\to\fso(V)$ and $g:=\imath_v\sigma:S\to\fso(V)$, then
\begin{align}
\label{eq:explicitexpressionsI}
2\eta\bigg(f(v)w,u\bigg)
&=-\biggl\langle \tau(u,v),w\cdot s\biggr\rangle-\biggl\langle\tau(u,w),v\cdot s\biggr\rangle-\biggl\langle \tau(v,w),u\cdot s \biggr\rangle\;,\\
\label{eq:explicitexpressionsII}
2\biggl\langle g(r)s,t\biggr\rangle&=-\biggl\langle\tau(\kappa(t,r),v),s\biggr\rangle+\biggl\langle\tau(\kappa(s,t),v),r\biggr\rangle-\biggl\langle\tau(\kappa(r,s),v),t\biggr\rangle\;,
\end{align}
for all $r,s,t\in S$ and $u,v,w\in V$.
In particular the group $H^{3,2}(\fm,\fp)$ is isomorphic to a submodule of $\Lambda^2 V\otimes S\cong (\Lambda^2 V\otimes S)_o\oplus (V\otimes S)_o\oplus S$.
\end{proposition}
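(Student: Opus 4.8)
The plan is to extract $\sigma$ from $\tau$ by polarizing the last two cocycle equations in \eqref{eq:cocycleeqsdegree3} and then inverting the resulting linear relations by a Koszul-type cyclic sum. First I would work with the middle equation $\kappa(s,\tau(v,w))=\sigma(v,s)w-\sigma(w,s)v$: pairing both sides with a vector $u\in V$ and using the defining identity $\eta(\kappa(s,t),z)=\langle s,z\cdot t\rangle$ for the Dirac current, the left-hand side becomes (up to sign) $\langle \tau(v,w),u\cdot s\rangle$, while the right-hand side is $\eta(f(v)w,u)-\eta(f(w)v,u)$ with $f:=\imath_s\sigma:V\to\fso(V)$. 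The structural input is that $f(v)\in\fso(V)$, so $T(v,w,u):=\eta(f(v)w,u)$ is skew in its last two slots; combining this with the relation just obtained — which gives the skew-symmetrization of $T$ in its first two slots — and forming the cyclic combination $B(v,w,u)-B(w,u,v)+B(u,v,w)$ of the known skew-symmetrization $B$ isolates $2T(v,w,u)$ and yields exactly \eqref{eq:explicitexpressionsI}. Since $\fso(V)$ acts faithfully on $V$, this shows that $\tau$ determines $f=\imath_s\sigma$ for every $s$, hence determines $\sigma$.

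For the third equation $\tau(\kappa(s,s),v)=-2\sigma(v,s)s$, I would polarize in $s$ to obtain $\tau(\kappa(s,t),v)=-g(s)t-g(t)s$ with $g:=\imath_v\sigma:S\to\fso(V)$, and then pair with $r\in S$ via the symplectic form. Here the relevant fact is that $g(s)\in\fso(V)$ is skew-adjoint for the invariant symplectic form $\langle-,-\rangle$ on $S$, which together with the antisymmetry of $\langle-,-\rangle$ makes $U(s,t,r):=\langle g(s)t,r\rangle$ symmetric in its last two slots. One then knows the symmetrization of $U$ in slots $1,2$ (namely $\langle\tau(\kappa(s,t),v),r\rangle$, up to sign) and runs the analogous cyclic-sum trick over the three spinorial slots to recover $U$, giving \eqref{eq:explicitexpressionsII}. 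Since the spin representation of $\fso(V)$ is faithful, this shows $\tau$ determines $g=\imath_v\sigma$ for every $v$, hence determines $\sigma$.

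To conclude the last sentence of the proposition: there are no Spencer $1$-cochains of degree $3$, so $B^{3,2}(\fm,\fp)=0$ and $H^{3,2}(\fm,\fp)=Z^{3,2}(\fm,\fp)$. The $\fso(V)$-equivariant projection $Z^{3,2}(\fm,\fp)\to\Hom(\Lambda^2 V,S)$, $\sigma+\tau\mapsto\tau$, is injective by the above, and $\Hom(\Lambda^2 V,S)\cong\Lambda^2 V\otimes S\cong(\Lambda^2 V\otimes S)_o\oplus(V\otimes S)_o\oplus S$ by self-duality of the $B_5$-representations (this is the Clifford decomposition already used around \eqref{eq:soVequivariantClifford}). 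Hence $H^{3,2}(\fm,\fp)$ embeds as an $\fso(V)$-submodule of $\Lambda^2 V\otimes S$.

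I expect the only delicate points to be the sign/convention bookkeeping relating the Dirac current pairing to the symplectic form (fixing the precise signs in \eqref{eq:explicitexpressionsI}--\eqref{eq:explicitexpressionsII}), and verifying that the two symmetry properties exploited — skewness of the $\fso(V)$-action on $V$ and skew-adjointness on $S$ — are exactly what is needed to make the cyclic-sum inversions well-posed. Neither is a genuine obstacle. The harder task, namely determining which $\tau$ actually extend to a cocycle — i.e. feeding the expressions \eqref{eq:explicitexpressionsI}--\eqref{eq:explicitexpressionsII} into the first and third equations of \eqref{eq:cocycleeqsdegree3} and showing that this forces $\tau=0$ — is the content of the subsequent Theorem \ref{thm:H32}.
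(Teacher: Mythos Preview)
Your argument is correct. You carry out explicitly the Koszul-type inversion that the paper alludes to but does not spell out (``alternatively, they are obtained by direct combinatorial arguments, but we will not do it''). The paper's own proof instead proceeds more abstractly: it recognizes the last two equations of \eqref{eq:cocycleeqsdegree3} as saying that $\imath_s\sigma$ and $\imath_v\sigma$ are preimages under the Spencer operator $\slashed\partial$ of $\fso(V)$ acting on $V$ (skew-symmetrization) and on $S$ (symmetrization) respectively, and then invokes the vanishing of the first prolongation of $\fso(V)$ in both cases --- classical for $V$, and via $\fso(V)\subset\fsp(S)$ together with \cite[Thm.~5.1]{Ga} for $S$ --- to conclude uniqueness. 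The explicit formulas \eqref{eq:explicitexpressionsI}--\eqref{eq:explicitexpressionsII} are then simply checked against the cocycle equations. Your approach is more elementary and self-contained (no external reference needed), while the paper's has the virtue of making the structural reason --- triviality of prolongations --- transparent. Your handling of the final sentence on $H^{3,2}(\fm,\fp)$ is the same as the paper's. One minor remark: the appeals to faithfulness of the vector and spin representations are not really what is doing the work; nondegeneracy of $\eta$ and of $\langle-,-\rangle$ already suffices once you have recovered $\eta(f(v)w,u)$ and $\langle g(r)s,t\rangle$ for all arguments.
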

\begin{proof}
The last two equations in \eqref{eq:cocycleeqsdegree3} can be rewritten as
\begin{align*}
\tau(\kappa(s,s),v)&=-\slashed{\partial}(\imath_v\sigma)(s,s)\;,\\
\kappa(s,\tau(v,w))&=-\slashed{\partial}(\imath_s\sigma)(v,w)\;,
\end{align*}
for all $s\in S$, $v,w\in V$. Here $\slashed{\partial}$ is the Spencer operator of the linear Lie algebra $\fso(V)$ acting on the purely odd $S$ (i.e., symmetrization) and on the purely even $V$ (i.e., skew-symmetrization), respectively. It is well-known that the first prolongation of $\fso(V)$ is trivial in both cases (for the first case, note that $\fso(V)\subset \mathfrak{sp}(S)$ and that the first prolongation of  $\mathfrak{sp}(S)$ on the purely odd $S$ is trivial by \cite[Thm. 5.1]{Ga}). Then $\tau$ determines $f=\imath_v\sigma:S\to\fso(V)$ and $g=\imath_s\sigma:V\to\fso(V)$, respectively, and any of the two suffices for our first claim. The expressions \eqref{eq:explicitexpressionsI}-\eqref{eq:explicitexpressionsII} can be verified by checking the last two equations in \eqref{eq:cocycleeqsdegree3}, due to uniqueness of $f$ and $g$ (alternatively, they are obtained by direct combinatorial arguments, but we will not do it). The rest is clear.
\end{proof}
\begin{remark}
{\rm
Although we will not need this fact, it is worth to note that the first cocycle condition in \eqref{eq:cocycleeqsdegree3} is redundant, as it follows directly from the other two:
\begin{align*}
-2\kappa\Big(s,\sigma(w,s)s\Big)=\kappa\Big(s,\tau(\kappa(s,s),w)\Big)&=\sigma(\kappa(s,s),s)w-\sigma(w,s)\kappa(s,s)\\
&=\sigma(\kappa(s,s),s)w-2\kappa\Big(\sigma(w,s)s,s\Big)\;,
\end{align*}
from which $\sigma(\kappa(s,s),s)w=0$ for all $s\in S$ and $w\in V$, i.e., $\sigma(\kappa(s,s),s)=0$ for all $s\in S$.
}
\end{remark}
Fixing an orthonormal basis $\{e_i\}_{i=0,\ldots,10}$ of $V$ and a basis $\{q_\alpha\}_{\alpha=1,\ldots, 32}$ of $S$ as usual, we write
\begin{align*}
\sigma&=\frac{1}{2}e_\ell\wedge e_m\sigma^{\ell m}{}_{k\beta}\otimes e^k\otimes q^\beta\;,\\
\tau&=\frac{1}{2}q_\alpha\tau^\alpha{}_{ij}\otimes e^i\wedge e^j\;,
\end{align*}
and rewrite \eqref{eq:explicitexpressionsI}-\eqref{eq:explicitexpressionsII} choosing $s=q_\alpha$, $v=e_i$, $w=e_j$, $u=e_k$, and, respectively, $r=q_\gamma$, $v=e_p$, $s=t=s^\delta q_\delta$ into
\begin{align}
\label{eq:explicitexpressionsIbis}
2\sigma_{kji\alpha}&=-\tau^\beta{}_{ki}\overline{q_\beta}\Gamma_jq_\alpha+\tau^\beta{}_{jk}\overline{q_\beta}\Gamma_iq_\alpha-\tau^\beta{}_{ij}\overline{q_\beta}\Gamma_kq_\alpha\;,\\
\label{eq:explicitexpressionsIIbis}
-\frac{1}{2}\sigma_{\ell m p\gamma}(\overline s\Gamma^{\ell m}s)&=2(\overline s\Gamma^qq_\gamma)\tau_{\delta qp}s^\delta-(\overline s\Gamma^q s)\tau_{\gamma qp}\;.
\end{align}
Substituting \eqref{eq:explicitexpressionsIbis} into \eqref{eq:explicitexpressionsIIbis} yields
\begin{align*}
-\frac{1}{2}\Big(-\tau^\beta{}_{\ell p}(\overline{q_\beta}\Gamma_mq_\gamma)+\tau^\beta{}_{m\ell}(\overline{q_\beta}\Gamma_p q_\gamma)-\tau^\beta{}_{pm}(\overline{q_\beta}\Gamma_\ell q_\gamma)\Big)(\overline s\Gamma^{\ell m}s)&=4s^\delta\tau_{\delta qp}(\overline s\Gamma^qq_\gamma)-2\tau_{\gamma qp}(\overline s\Gamma^q s)\\
&=-4\tau^\alpha{}_{qp}\overline{q_\alpha} (s\overline s)\Gamma^qq_\gamma-2\tau_{\gamma qp}(\overline s\Gamma^q s)
\end{align*}
which is an equation on $\tau$ only, for all $s\in S$ and indices $p=0,\ldots,10$, $\gamma=1,\ldots,32$. Substituting the Fierz Identity \eqref{eq:FierzIdentity} and abstracting the independent contributions in $\odot^2 S\cong \Lambda^1 V\oplus\Lambda^2 V\oplus\Lambda^5 V$, we arrive at three separate equations (since we won't make any use of the contribution coming from $s\odot s\in\Lambda^1 V\subset\odot^2 S$, we omit it).
\vskip0.2cm\par\noindent
\underline{The contribution coming from $s\odot s\in\Lambda^5 V\subset\odot^2 S$}
\vskip0.2cm\par\noindent
This identity reads as
\begin{align*}
\tau^\alpha{}_{qp}\overline{q_\alpha}\Gamma_{\mu_1\cdots\mu_5}\Gamma^q&=0\;,
\end{align*}
for all indices $p=0,\ldots 10$ and $\mu_1,\ldots,\mu_5=1,\ldots,32$. In particular, we may multiply by $\Gamma^{\mu_1\cdots\mu_5}$ from the right and, using that $\Gamma_{\mu_1\cdots\mu_5}\Gamma^q\Gamma^{\mu_1\cdots\mu_5}=5040\Gamma^q$, arrive at 
$\tau^\alpha{}_{qp}\Gamma^qq_\alpha=0$. This exactly means that $\tau:\Lambda^2 V\to S$ is in the kernel of the Clifford multiplication \eqref{eq:soVequivariantClifford}, thus $\tau\in (\Lambda^2 V\otimes S)_o$.
\vskip0.2cm\par\noindent
\underline{The contribution coming from $s\odot s\in\Lambda^2 V\subset\odot^2 S$}
\vskip0.2cm\par\noindent
After abstracting $q_\gamma$ and dualizing, we get
\begin{align*}
\tau^\beta{}_{\ell p}\Gamma_mq_\beta-\tau^\beta{}_{m\ell}\Gamma_pq_\beta+\tau^\beta{}_{pm}\Gamma_\ell q_\beta
&=-\frac{1}{8}\tau^{\alpha q}{}_{p}\Gamma_q\Gamma_{\ell m}q_\alpha\\
&=-\frac{1}{8}\tau^{\alpha q}{}_{p}\big(\Gamma_{\ell m}\Gamma_q+2\eta_{mq}\Gamma_\ell-2\eta_{\ell q}\Gamma_m\big)q_\alpha\\
&=\frac{1}{4}\tau^{\beta}{}_{pm}\Gamma_\ell q_\beta
+\frac{1}{4}\tau^{\beta}{}_{\ell p}\Gamma_m q_\beta\;,
\end{align*}
where we used that $\tau\in (\Lambda^2 V\otimes S)_o$. In other words 
$
\tfrac34\tau^\beta{}_{\ell p}\Gamma_mq_\beta-\tau^\beta{}_{m\ell}\Gamma_pq_\beta+\tfrac34\tau^\beta{}_{pm}\Gamma_\ell q_\beta
=0
$
and multiplying by $\Gamma^p$ from the left we get
\begin{align*}
0&=\tfrac34\tau^\beta{}_{\ell}{}^{p}\Gamma_p\Gamma_mq_\beta+11\tau^\beta{}_{m\ell}q_\beta+\tfrac34\tau^{\beta p}{}_{m}\Gamma_p\Gamma_\ell q_\beta\\
&=-\tfrac32\tau^\beta{}_{\ell}{}_{m}q_\beta+11\tau^\beta{}_{m\ell}q_\beta-\tfrac32\tau^{\beta}{}_{\ell m}q_\beta\;,
\end{align*}
where we used that $\Gamma_p\Gamma_k=-\Gamma_k\Gamma_p-2\eta_{pk}$ and $\tau\in (\Lambda^2 V\otimes S)_o$. Hence 
\begin{align*}
-14 \tau^\beta{}_{\ell}{}_{m}q_\beta=0&\Longrightarrow\tau=0\\
&\Longrightarrow\sigma=0
\end{align*} 
thanks to Proposition \ref{prop:tausigma}. We thus proved the following:
\begin{theorem}
\label{thm:H32}
The group $H^{3,2}(\fm,\fp)$ is trivial.
\end{theorem}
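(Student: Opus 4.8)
The plan is to build on Proposition \ref{prop:tausigma}: since a Spencer $2$-cocycle $\sigma+\tau$ of degree $3$ is already determined by its component $\tau\colon\Lambda^2 V\to S$ via \eqref{eq:explicitexpressionsIbis}, and since $H^{3,2}(\fm,\fp)=Z^{3,2}(\fm,\fp)$ carries no coboundaries (there are no degree-$3$ Spencer $1$-cochains, cf.\ \eqref{tEABB} and Table \ref{tab:even-cochains-small}), it is enough to show that the only admissible $\tau$ is $\tau=0$. First I would feed the reconstruction \eqref{eq:explicitexpressionsIbis} of $\sigma$ from $\tau$ into the remaining constraint \eqref{eq:explicitexpressionsIIbis}; this eliminates $\sigma$ altogether and leaves a single equation that is linear in $\tau$ and quadratic in an auxiliary spinor $s$, with the rank-one endomorphism $s\overline s$ of $S$ appearing explicitly.

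Next I would expand $s\overline s$ by means of the $D=11$ Fierz identity \eqref{eq:FierzIdentity} and, exploiting that $\odot^2 S\cong\Lambda^1 V\oplus\Lambda^2 V\oplus\Lambda^5 V$ is multiplicity-free, split the result into three independent equations indexed by these summands. The $\Lambda^5 V$-component is the cleanest: contracting it with $\Gamma^{\mu_1\cdots\mu_5}$ on the right and using $\Gamma_{\mu_1\cdots\mu_5}\Gamma^q\Gamma^{\mu_1\cdots\mu_5}=5040\,\Gamma^q$ collapses it to $\tau^\alpha{}_{qp}\Gamma^q q_\alpha=0$, which says precisely that $\tau$ lies in the kernel of the Clifford multiplication \eqref{eq:soVequivariantClifford}, i.e.\ $\tau\in(\Lambda^2 V\otimes S)_o$. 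This alone already cuts $H^{3,2}(\fm,\fp)$ down to a submodule of the irreducible $(\Lambda^2 V\otimes S)_o$, so in particular only one more component of the Fierz decomposition is needed.

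Finally, with the $\gamma$-tracelessness $\tau\in(\Lambda^2 V\otimes S)_o$ in hand, I would return to the $\Lambda^2 V$-component and simplify it using only the Clifford relation $\Gamma_p\Gamma_q=-\Gamma_q\Gamma_p-2\eta_{pq}$ and this tracelessness; the equation reduces to a totally skew-symmetric combination, schematically $\tfrac34\tau^\beta{}_{\ell p}\Gamma_m q_\beta-\tau^\beta{}_{m\ell}\Gamma_p q_\beta+\tfrac34\tau^\beta{}_{pm}\Gamma_\ell q_\beta=0$. Contracting this with $\Gamma^p$ on the left and invoking $\tau\in(\Lambda^2 V\otimes S)_o$ once more isolates a nonzero multiple of $\tau$, namely $-14\,\tau^\beta{}_{\ell m}q_\beta=0$; hence $\tau=0$, whence $\sigma=0$ by Proposition \ref{prop:tausigma}, and therefore $H^{3,2}(\fm,\fp)=0$. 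The main obstacle is the $\Gamma$-matrix bookkeeping: one must orchestrate the successive contractions so that each genuinely isolates $\tau$ with a nonvanishing scalar coefficient rather than producing a tautology, and it is worth double-checking the inputs behind Proposition \ref{prop:tausigma} --- triviality of the first Cartan-Tanaka prolongation of $\fso(V)$ acting on the even $V$ and, via $\fso(V)\subset\mathfrak{sp}(S)$, on the odd $S$ --- since these are exactly what make the reduction to the single map $\tau$ legitimate.
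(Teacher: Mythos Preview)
Your proposal is correct and follows essentially the same route as the paper's own proof: feed \eqref{eq:explicitexpressionsIbis} into \eqref{eq:explicitexpressionsIIbis}, Fierz-expand $s\overline s$, use the $\Lambda^5 V$-component with the $5040$-trace to force $\tau\in(\Lambda^2 V\otimes S)_o$, then simplify the $\Lambda^2 V$-component to the displayed three-term relation and contract with $\Gamma^p$ on the left to obtain $-14\,\tau^\beta{}_{\ell m}q_\beta=0$. One tiny quibble: the three-term relation is not literally ``totally skew-symmetric'' (the coefficients are $\tfrac34,1,\tfrac34$), but this does not affect the argument.
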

\section{Maximally supersymmetric even/odd filtered subdeformations of the $D=11$ Poincaré superalgebra}
\label{sec:deformation-cohomology}

In this section, we study maximally supersymmetric filtered subdeformations of $\mathfrak p$. This involves, at first-order, the cohomology of the Spencer complex of $\fh$, for which the classification of the relevant groups has now been completed: see the combination of Theorem \ref{thm:1} and Proposition \ref{prop:cohomology1h}.

\subsection{Preliminary definitions}
\label{subsec:definitiondeformation}
We shall here seek
for filtered deformations of maximally supersymmetric $\mathbb Z$-graded subalgebras
$\fh=\fh_{-2}\oplus\fh_{-1}\oplus\fh_0$ of the Poincaré superalgebra $\fp=\fp_{-2}\oplus\fp_{-1}\oplus\fp_0$
as of \S\ref{sec:spencer-complex}. By \cite{Cheng-Kac}, these are the Lie superalgebras $F$
with an associated compatible filtration
$F^\bullet=\cdots= F^{-2} \supset F^{-1} \supset F^0\supset
0=\cdots$
such that the corresponding $\mathbb Z$-graded Lie superalgebra agrees
with $\fh$.  Any such
filtration $F^\bullet$ is isomorphic \emph{as a vector space} to the
canonical filtration of $\fh$ given by $F^{i} = \fh$ for all $i<-2$,
$F^{i} = 0$ for all $i>0$ and
\begin{equation*}
  F^{-2} =\fh= \fh_{-2} \oplus \fh_{-1} \oplus \fh_{0}~,\qquad
  F^{-1} = \fh_{-1} \oplus \fh_{0}~,\qquad F^0 = \fh_0~.
\end{equation*}
The Lie superalgebra structure on $F$ satisfies $[F^i,F^j] \subset
F^{i+j}$ and the
components of the Lie brackets of zero filtration degree have to coincide with
the Lie brackets of $\fh$. This is the classical approach to filtered deformations that covers the standard (i.e., even) infinitesimal deformations.
\vskip0.2cm\par

For our purposes, given any real Lie superalgebra $\fg=\fg_{\bar 0}\oplus\fg_{\bar 1}$, we consider the tensor product 
$
\fg^\Lambda:=\fg\otimes_{\mathbb R}\Lambda^\bullet\cong\Lambda^\bullet\otimes_{\mathbb R}\fg
$ of $\fg$ with an auxiliary finite-dimensional exterior algebra $\Lambda^\bullet :=\Lambda^\bullet (W)$, endowed
with its natural structure of Lie superalgebra  given by the decomposition
\begin{multline*}
\fg^\Lambda=\fg^\Lambda_{\bar 0}\oplus\fg^\Lambda_{\bar 1}\;,\quad
\fg^\Lambda_{\bar 0}=(\Lambda^\bullet_{\bar 0}\otimes\fg_{\bar 0})\oplus (\Lambda^\bullet_{\bar 1}\otimes\fg_{\bar 1})\;,\quad
\fg^\Lambda_{\bar 1}=(\Lambda^\bullet_{\bar 1}\otimes\fg_{\bar 0})\oplus (\Lambda^\bullet_{\bar 0}\otimes\fg_{\bar 1})
\end{multline*}
and the Lie bracket 
$[\mathpzc t_1 X,\mathpzc t_2 Y]=(-1)^{|X||\mathpzc t_2|}\mathpzc t_1\mathpzc t_2 [X,Y]$, for all homogeneous $X,Y\in \fg$, and $\mathpzc t_1,\mathpzc t_2\in\Lambda^\bullet$. 
Note that this is a Lie superalgebra over $\Lambda^\bullet$, in the sense that the Lie bracket is $\Lambda^\bullet$-linear, with the usual rule of signs w.r.t. $\mathbb Z_2$-grading at hand: $[\mathpzc t_1 X,\mathpzc t_2 Y]=(-1)^{|X||\mathpzc t_2|}\mathpzc t_1\mathpzc t_2 [X,Y]$ for all homogeneous $X,Y\in \fg^\Lambda$ -- not only belonging to $\fg$ -- and  $\mathpzc t_1,\mathpzc t_2\in\Lambda^\bullet$. Nonetheless, the even and odd components of $\fg^\Lambda$ are {\it not} modules over $\Lambda^\bullet$, but only over $\Lambda^\bullet_{\bar 0}$. For any fixed finite-dimensional vector space $W$ and associated exterior algebra $\Lambda^\bullet :=\Lambda^\bullet (W)$, we then give the following. 
\begin{definition}
\label{def:filtereddeformation}
A filtered deformation of $\fh$ (parametrized by $W$) is the datum of a Lie superalgebra $F$ supported on the vector superspace $\fh^\Lambda=\fh\otimes_{\mathbb R}\Lambda^\bullet$ such that:
\begin{itemize}
    \item[(i)]  the bracket is $\Lambda^\bullet$-linear,
    \item[(ii)] the bracket preserves the grading on $\fh^\Lambda=\fh\otimes_{\mathbb R}\Lambda^\bullet$ inherited from the grading of $\fh$ and the natural non-positive grading of $\Lambda^\bullet$ (i.e., the elements of $W$ have degree $-1$),
    \item[(iii)] the components of the bracket with coefficients in $\mathbb R=\Lambda^0$
coincide with those of $\fh$.
    \end{itemize}
\end{definition}
We can therefore
describe a filtered deformation according to Definition \ref{def:filtereddeformation} by the
following brackets
\begin{equation}
\label{eq:brackets}
  \begin{aligned}
	[\fh_0,\fh_0] &\subset \fh_0\;\\
	  [\fh_{0}, V] & \subset  V\oplus (\Lambda^1\otimes S)\oplus(\Lambda^2\otimes\fh_{0})\;\\
			  [\fh_{0}, S] & \subset  S\oplus (\Lambda^1\otimes\fh_{0})\;\\
    [S, S] & \subset  V\oplus (\Lambda^1\otimes S)\oplus  (\Lambda^2\otimes\fh_{0})\;\\
    [V, S] & \subset (\Lambda^1\otimes V)\oplus (\Lambda^2\otimes S)\oplus (\Lambda^3\otimes \fh_0)\;\\
    [V, V] & \subset (\Lambda^2\otimes V) \oplus (\Lambda^3\otimes S)\oplus(\Lambda^4\otimes \fh_{0})\;
  \end{aligned}
\end{equation}
where the bracket
components with coefficients in $\mathbb R=\Lambda^0$
should
not be modified from the ones in $\fh$. Note that
the brackets are in fact compatible with the filtration 
 $F^\bullet=\cdots \supset F^{-2} \supset F^{-1} \supset F^0\supset
\cdots$ of $\fh^\Lambda$ by $\Lambda^\bullet$-modules given by $F^{i} = \fh^\Lambda$ for all $i<-2$,
$F^{i} = 0$ for all $i>0$,
\begin{equation*}
  F^{-2} =\fh^\Lambda= \fh_{-2}^\Lambda \oplus \fh_{-1}^\Lambda \oplus \fh_{0}^\Lambda~,\qquad
  F^{-1} = \fh_{-1}^\Lambda \oplus \fh_{0}^\Lambda,\qquad F^0 = \fh_0^\Lambda\;,
\end{equation*}	
and that the associated $\mathbb Z$-graded Lie superalgebra over $\Lambda^\bullet$ agrees
with $\fh^\Lambda$. This filtered structure (together with the fact that the powers of the "parameters" in $W$ keep track of the amount by which the filtration degree fails to be preserved) is intrinsic and should be preserved by isomorphisms.
\vskip0.2cm\par

We here list
the components of the Lie brackets of non-zero filtration degree: 
\vskip0.2cm\par\noindent
\begin{itemize}
\item
the degree +$1$ components are elements 
  \begin{equation} 
	\label{eq:comp1}
 \begin{split}
\widehat\varepsilon&\in\Lambda^1\otimes\Hom(\odot^2 S, S),\;\widehat\epsilon\in\Lambda^1\otimes\Hom(S\otimes V,V),\\
\widehat\mu&\in\Lambda^1\otimes\Hom(\fh_0\otimes V,S),\;\widehat\theta\in\Lambda^1\otimes\Hom(\fh_0\otimes S, \fh_0);
  \end{split}
  \end{equation}
\item the degree +$2$ components are elements 
  \begin{equation}
		\label{eq:comp2}
 \begin{split}
      \widehat\alpha&\in\Lambda^2\otimes\Hom(\Lambda^2 V, V),\;\widehat\beta\in\Lambda^2\otimes\Hom(V\otimes S,S),\\
      \widehat\gamma &\in\Lambda^2\otimes\Hom(\odot^2 S,\fh_0),\;\widehat\delta\in\Lambda^2\otimes\Hom(\fh_0\otimes V,\fh_0);
    \end{split}
  \end{equation}
\item the degree +$3$ components are elements 
  \begin{equation} 
		\label{eq:comp3}
 \begin{split}
      \widehat\sigma&\in\Lambda^3\otimes\Hom(V\otimes S, \fh_0),\;\widehat\tau\in\Lambda^3\otimes\Hom(\Lambda^2 V,S);\\
    \end{split}
  \end{equation}
\item the degree +$4$ component is an element
  \begin{equation} 
		\label{eq:comp4}
 \begin{split}
      \widehat\rho&\in\Lambda^4\otimes\Hom(\Lambda^2 V, \fh_0).\\
    \end{split}
  \end{equation}
\end{itemize}

It is convenient to adopt the following Sweedler-like short-cut notation. We write $\widehat\epsilon=\mathpzc t\epsilon$, $\widehat\alpha=\mathpzc t^2\alpha$, $\widehat\sigma=\mathpzc t^3\sigma$, $\widehat\rho=\mathpzc t^4\rho$, and similarly for the other components \eqref{eq:comp1}-\eqref{eq:comp4}. Here $\mathpzc t^k$ simply indicates homogeneous elements of degree $k$ of the exterior algebra, but the components of the Lie bracket are not necessarily decomposable elements of $\Lambda^\bullet\otimes\Hom(\Lambda^2\fh,\fh)$. For instance, $\widehat\sigma=\mathpzc t^3\sigma=\tfrac{1}{3!}\sum \mathpzc t_{ijk}\otimes\sigma^{ijk}$, for basis elements $\mathpzc t_{ijk}$ of $\Lambda^3$, and components $\sigma^{ijk}\in\Hom(V\otimes S,\fh_0)$.
\subsection{The Jacobi identities}
\label{subsec:JI}
The Lie brackets of a filtered deformation $F$ of $\fh$ are
given by equation \eqref{eq:brackets} in terms of 	\eqref{eq:comp1}-\eqref{eq:comp4}. 
The only additional conditions come from demanding that the Lie
brackets \eqref{eq:brackets} do define a Lie superalgebra, i.e., they come from imposing the Jacobi identities for $F$.  There are ten
such identities and to go through them
systematically, we use the notation $[ijk]$, $i,j,k=0,1,2$,
for the identity involving $X \in \fh_{-i}$, $Y \in \fh_{-j}$,
$Z \in \fh_{-k}$.

\subsubsection*{The $[000]$ Jacobi}
This is automatically satisfied because $\fh_0$ is a Lie subalgebra of
$\fso(V)$.
\subsubsection*{The $[001]$ Jacobi}
Using that the action of $\fh_0$ on $S$ is
the restriction to $\fh_0$ of the spinor representation
of $\fso(V)$, we are left with the equation
\begin{align}
\label{eq:001h}
\widehat\theta([A,B],s)
-\widehat\theta(A,Bs)+\widehat\theta(B,As)-[A,\widehat\theta(B,s)]+[B,\widehat\theta(A,s)]&=0
\end{align}
for all $A,B\in\fh_0$ and $s\in S$.
\subsubsection*{The $[002]$ Jacobi}
Using that the action of $\fh_0$ on $V$ is
the restriction to $\fh_0$ of the vector representation
of $\fso(V)$, we are left with the equations
\begin{align}
\label{eq:002S}
\widehat\mu([A,B],v)-\widehat\mu(A,Bv)+\widehat\mu(B,Av)-A(\widehat\mu(B,v))+B(\widehat\mu(A,v))&=0\;,\\
\notag
\widehat\delta([A,B],v)
-\widehat\delta(A,Bv)+\widehat\delta(B,Av)-[A,\widehat\delta(B,v)]+[B,\widehat\delta(A,v)]&=\widehat\theta(A,\widehat\mu(B,v))\\
\label{eq:002h}
&\;\;\;-\widehat\theta(B,\widehat\mu(A,v))\;,
\end{align}
for all $A,B\in\fh_0$ and $v\in V$.  
\subsubsection*{The $[011]$ Jacobi}
Using that the Dirac current $\kappa:\odot^2S\to V$ is $\fso(V)$-equivariant, hence $\fh_0$-equivariant, 
we are left with the equations
\begin{align}
\label{eq:011S}
2\widehat\theta(A,s)s-\widehat\mu(A,\kappa(s,s))
&=(A\cdot\widehat\varepsilon)(s,s)\;,\\
\label{eq:011h}
\widehat\delta(A,\kappa(s,s))
+(A\cdot\widehat\gamma)(s,s)
&=2\widehat\theta(\widehat\theta(A,s),s)-\widehat\theta(A,\widehat\varepsilon(s,s))\;,
\end{align}
for all $A\in\fh_0$, $s\in S$.
\subsubsection*{The $[111]$ Jacobi}
The Jacobi identity says that $[[s,s],s] = 0$ for all $s \in S$, and
it expands to
\begin{align}
\label{eq:111V}
\widehat\epsilon(\kappa(s,s),s)+\kappa(\widehat\varepsilon(s,s),s)&=0\;,
\\
\label{eq:111S}
  \widehat\beta(\kappa(s,s),s)+\widehat\gamma(s,s)s&=-\widehat\varepsilon(\widehat\varepsilon(s,s),s)\;,\\
	\label{eq:111h}
	\widehat\sigma(\kappa(s,s),s)&=-\widehat\theta(\widehat\gamma(s,s),s)-\widehat\gamma(\widehat\varepsilon(s,s),s)\;,
\end{align}
for all $s\in S$.
\subsubsection*{The $[112]$ Jacobi}
After a somewhat lengthy calculation, this Jacobi identity reduces to
\begin{align}
  \label{112V}
\widehat\alpha(\kappa(s,s),v)+2\kappa(s,\widehat\beta(v,s))+\widehat\gamma(s,s)v&=2\widehat\epsilon(s,\widehat\epsilon(s,v))-\widehat\epsilon(\widehat\varepsilon(s,s),v)\;,\\
\notag
\widehat\tau(\kappa(s,s),v)+2\widehat\sigma(v,s)s&=-\widehat\mu(\widehat\gamma(s,s),v)-\widehat\beta(\widehat\varepsilon(s,s),v)\\
\label{112S}
&\;\;\;\;+2\widehat\beta(s,\widehat\epsilon(s,v))-2\widehat\varepsilon(s,\widehat\beta(v,s))\;,\\
\notag
\widehat\rho(\kappa(s,s),v)+\widehat\delta(\widehat\gamma(s,s),v)+2\widehat\gamma(s,\widehat\beta(v,s))&=-\widehat\sigma(\widehat\varepsilon(s,s),v)+2\widehat\sigma(s,\widehat\epsilon(s,v))\\
\label{112h}
&\;\;\;\;-2\widehat\theta(\widehat\sigma(v,s),s)
\end{align}
for all $s \in S$, $v \in V$. 
\subsubsection*{The $[012]$ Jacobi}
In this case, we have
\begin{align}	 
\label{012V}
	(A\cdot \widehat\epsilon)(v,s)
	&=\kappa(\widehat\mu(A,v),s)-\widehat\theta(A,s)v\;,\\
\notag
	(A\cdot\widehat\beta)(v,s)-\widehat\delta(A,v)s
	&=\widehat\varepsilon(\widehat\mu(A,v),s)-\widehat\mu(A,\widehat\epsilon(v,s))\\	
	\label{012S}
	&\;\;\;\;-\widehat\mu(\widehat\theta(A,s),v)\;,\\
	\notag
	(A\cdot\widehat\sigma)(v,s)
	&=\widehat\theta(\widehat\delta(A,v),s)-\widehat\delta(\widehat\theta(A,s),v)
-\widehat\delta(A,\widehat\epsilon(v,s))\\
		\label{012h}
	&\;\;\;\;+\widehat\gamma(\widehat\mu(A,v),s)-\widehat\theta(A,\widehat\beta(v,s))\;,
	\end{align}
for all $A \in \fh_0$, $v \in V$ and $s \in S$.
\subsubsection*{The $[022]$ Jacobi}
Here
\begin{align}
\label{022V}
(A\cdot\widehat\alpha)(v,w)-\widehat\delta(A,v)w+\widehat\delta(A,w)v
&=\widehat\epsilon(\widehat\mu(A,v),w)-\widehat\epsilon(\widehat\mu(A,w),v)\;,\\
\notag
(A\cdot\widehat\tau)(v,w)+\widehat\mu(A,\widehat\alpha(v,w))
&=\widehat\beta(v,\widehat\mu(A,w))-\widehat\beta(w,\widehat\mu(A,v))\\
\label{022S}
&\;\;\;\;+\widehat\mu(\widehat\delta(A,v),w)-\widehat\mu(\widehat\delta(A,w),v)\;,\\
\notag
(A\cdot\widehat\rho)(v,w)+\widehat\delta(A,\widehat\alpha(v,w))+\widehat\theta(A,\widehat\tau(v,w))
&=\widehat\sigma(v,\widehat\mu(A,w))-\widehat\sigma(w,\widehat\mu(A,v))\\
&\;\;\;\;+\widehat\delta(\widehat\delta(A,v),w)
-\widehat\delta(\widehat\delta(A,w),v)\;,
\label{022h}
\end{align}
for all $A \in \fh_0$ and $v,w \in V$.
\subsubsection*{The $[122]$ Jacobi}
This Jacobi identity expands to the following conditions
\begin{align}
\notag
\kappa(s,\widehat\tau(v,w))-\widehat\sigma(w,s)v+\widehat\sigma(v,s)w+\widehat\epsilon(s,\widehat\alpha(v,w))&=\widehat\alpha(\widehat\epsilon(s,v),w)-\widehat\alpha(\widehat\epsilon(s,w),v)\\
\label{122V}
&\;\;\;+\widehat\epsilon(\widehat\beta(w,s),v)-\widehat\epsilon(\widehat\beta(v,s),w)\;,\\
\notag
\widehat\rho(v,w)s+\widehat\beta(w,\widehat\beta(v,s))-\widehat\beta(v,\widehat\beta(w,s))&=-\widehat\beta(\widehat\alpha(v,w),s)+\widehat\varepsilon(s,\widehat\tau(v,w))
\\
\notag
&\;\;\;\;-\widehat\tau(\widehat\epsilon(s,v),w)+\widehat\tau(\widehat\epsilon(s,w),v)\\
\label{122S}
&\;\;\;\;+\widehat\mu(\widehat\sigma(v,s),w)-\widehat\mu(\widehat\sigma(w,s),v)\;,\\
\notag
\widehat\gamma(s,\widehat\tau(v,w))-\widehat\sigma(\widehat\alpha(v,w),s)-\widehat\theta(\widehat\rho(v,w),s)&=\widehat\rho(\widehat\epsilon(s,v),w)-\widehat\rho(\widehat\epsilon(s,w),v)\\
\notag
&
\;\;\;+\widehat\sigma(w,\widehat\beta(v,s))-\widehat\sigma(v,\widehat\beta(w,s))
\\
\label{122h}
&\;\;\;+\widehat\delta(\widehat\sigma(w,s),v)-\widehat\delta(\widehat\sigma(v,s),w)\;,
\end{align}
for all $s \in S$ and $v,w \in V$.
\subsubsection*{The $[222]$ Jacobi}
Finally the last component of the Jacobi identity expands to the following three
Bianchi-like identities:
\begin{align}
  \label{222V}
\mathfrak S\big(\widehat\rho(w,u)v-\widehat\alpha(v,\widehat\alpha(w,u))\big)&=\mathfrak S\big(\widehat\epsilon(v,\widehat\tau(w,u))\big)\;,\\
\label{222S}
\mathfrak S\big(\widehat\mu(\widehat\rho(w,u),v)-\widehat\tau(v,\widehat\alpha(w,u))\big)&=\mathfrak S\big(\widehat\beta(v,\widehat\tau(w,u))\big)\;,\\
\label{222h}
\mathfrak S\big(\widehat\delta(\widehat\rho(w,u),v)-\widehat\rho(v,\widehat\alpha(w,u))\big)&=\mathfrak S\big(\widehat\sigma(v,\widehat\tau(w,u))\big)\;,
\end{align}
for all $u,v,w\in V$. (The symbol $\mathfrak S$ denotes the cyclic sum over $u,v,w$.)
\vskip0.3cm\par\noindent

\subsection{Analysis at filtration degrees $1$ and $2$}
\label{subsec:5.3}

Now, it is well-known that the restriction to $\fm$ of the first non-zero contribution of a filtered
deformation is a cohomology class in positive degree which is $\fh_0$-invariant (see \cite[Prop. 2.2]{Cheng-Kac}).
This is true also in our more general framework, as we now explain. 

Equation \eqref{eq:111V} is separately satisfied for the different components of $\widehat\epsilon=\mathpzc t\epsilon$ and $\widehat\varepsilon=\mathpzc t\varepsilon$, thus yielding the identity 
$$\epsilon(\kappa(s,s),s)+\kappa(\varepsilon(s,s),s)=0\;,$$ for all $s\in S$. This is the condition that $\epsilon+\varepsilon\in C^{1,2}(\fm,\fh)$ is a Spencer cocycle. Similarly, equations \eqref{eq:011S} and \eqref{012V} are satisfied separately for all the components  and say that $A\cdot(\epsilon+\varepsilon)=\partial(\imath_A\theta+\imath_A\mu)$, which is a Spencer coboundary for all $A\in\fh_0$.
In the case of a coboundary, the contribution can be absorbed component by component via a redefinition of the complementary
subspaces in the chain of filtrands (the proof of \cite[Prop. 2.3]{Cheng-Kac} extends verbatim to our case). 
All in all, we are interested in the cohomology class of $\epsilon+\varepsilon\in C^{1,2}(\fm,\fh)$, which follows from Propositions \ref{prop:cohomology1h}- \ref{prop:cohomology1p}:

\begin{corollary}
\label{cor:form-inf-def}
Let $F$ be a filtered deformation of $\fh$. Then its infinitesimal deformation $\varepsilon+\epsilon=\varepsilon^\phi+\epsilon^\phi+\partial X_S=\partial(Z\otimes\phi)+\partial X_S$, where $Z$ is the grading element of $\fh$, $\phi\in S^*$, and $X_S:S\to\fso(V)$ is defined up to elements in the subalgebra $\fh_0$ of $\fso(V)$. Explicitly
\begin{align*}
\varepsilon(s,s)&=-2\phi(s)s+2X_s(s)\;,\\
\epsilon(s,v)&=-2\phi(s)v+X_s(v)\;,
\end{align*}
for all $s\in S$, $v\in V$.
\end{corollary}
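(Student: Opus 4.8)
The plan is to obtain the statement by combining the general absorption mechanism for the first non-zero contribution of a filtered deformation with the cohomological input of Propositions~\ref{prop:cohomology1h} and~\ref{prop:cohomology1p} and Theorem~\ref{thm:persoilconto}. First I would recall from \S\ref{subsec:5.3} that, for any filtered deformation $F$ of $\fh$, the degree $+1$ components $\widehat\varepsilon=\mathpzc t\varepsilon$ and $\widehat\epsilon=\mathpzc t\epsilon$ of the bracket satisfy, through equation~\eqref{eq:111V}, the relation $\epsilon(\kappa(s,s),s)+\kappa(\varepsilon(s,s),s)=0$ separately in each $\Lambda^\bullet$-component; this is exactly the condition that $\varepsilon+\epsilon\in C^{1,2}(\fm,\fh)$ be a Spencer cocycle. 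Equations~\eqref{eq:011S} and~\eqref{012V} then say that $A\cdot(\varepsilon+\epsilon)=\partial(\imath_A\theta+\imath_A\mu)$ is an $\fh_0$-coboundary, and such a coboundary contribution can be reabsorbed componentwise by redefining the complementary filtrands of $F$ (the argument of \cite[Prop.~2.3]{Cheng-Kac} carries over verbatim to the $\Lambda^\bullet$-linear setting). Hence, after this normalisation, $\varepsilon+\epsilon$ may be taken to be a fixed representative of its class in $H^{1,2}(\fm,\fh)$, where the coboundaries available for the normalisation are exactly $\partial C^{1,1}(\fm,\fh)=\partial\{a\colon V\to S\}\oplus\partial\{b\colon S\to\fh_0\}$.

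Next I would invoke Proposition~\ref{prop:cohomology1h} together with Theorem~\ref{thm:persoilconto} (equivalently, part~1 of Theorem~\ref{thm:1}) to write
\begin{equation*}
H^{1,2}(\fm,\fh)\;\cong\;H^{1,2}(\fm,\fp)\;\oplus\;\frac{\partial\{X_S\colon S\to\fso(V)\}}{\partial\{X_S\colon S\to\fh_0\}}\;\cong\;S\;\oplus\;\frac{\partial\{X_S\colon S\to\fso(V)\}}{\partial\{X_S\colon S\to\fh_0\}}\;.
\end{equation*}
By Proposition~\ref{prop:cohomology1p} and Remark~\ref{rem:conformalextension} the first summand is represented by the cocycles $\varepsilon^\phi+\epsilon^\phi=\partial(Z\otimes\phi)$ with $\phi\in S^*$ and $Z$ the grading element of $\fh$, while the second summand is represented by $\partial X_S$ with $X_S\colon S\to\fso(V)$ taken modulo the absorbable $\fh_0$-valued cochains. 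Combining this with the first paragraph, the normalised infinitesimal deformation is $\varepsilon+\epsilon=\partial(Z\otimes\phi)+\partial X_S$ for some $\phi\in S^*$ and some $X_S\colon S\to\fso(V)$ defined up to $\fh_0$, which is the first assertion of the corollary; uniqueness of $\phi$ and of $X_S\bmod\fh_0$ can be added, if desired, using the injectivity of $\partial$ on $C^{1,1}(\fm,\fp\oplus\mathbb R Z)$ established in the proof of Proposition~\ref{prop:cohomology1p}.

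Finally I would read off the explicit formulas by substituting into the coboundary identities
\begin{align*}
\partial c(s,s)&=2b(s)s-a(\kappa(s,s))\;,\\
\partial c(v,s)&=\kappa(s,a(v))-b(s)v\;,
\end{align*}
from the proof of Proposition~\ref{prop:7}, taking $c=Z\otimes\phi$ (so $a=0$ and $b\colon s\mapsto\phi(s)Z$) and then $c=X_S$ (so $a=0$, $b=X_S$), and using that $Z$ acts on $S=\fp_{-1}$ as $-\mathrm{id}$ and on $V=\fp_{-2}$ as $-2\,\mathrm{id}$. This produces $\varepsilon(s,s)=-2\phi(s)s+2X_s(s)$ and $\epsilon(s,v)=-2\phi(s)v+X_s(v)$ with $X_s:=\imath_s X_S$, after the harmless relabellings $\phi\mapsto-\phi$, $X_S\mapsto-X_S$ that reconcile the sign conventions for the $S\otimes V\to V$ slot used in the corollary with those in the coboundary formula (both $\phi$ and $X_S$ being free parameters). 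The only point that really needs care throughout is the bookkeeping of \emph{which} coboundaries may be absorbed: only those valued in the (a priori proper) subalgebra $\fh_0\subset\fso(V)$ can, so the additional $X_S$-freedom of Proposition~\ref{prop:cohomology1h} genuinely persists in the normal form, and this is precisely the term that distinguishes the $\fh$-case from the $\fp$-case.
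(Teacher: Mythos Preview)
Your proposal is correct and follows the same route as the paper: the text immediately preceding the corollary extracts the cocycle condition from~\eqref{eq:111V}, notes $\fh_0$-invariance of the class from~\eqref{eq:011S} and~\eqref{012V}, invokes the Cheng--Kac absorption, and then reads off the class from Propositions~\ref{prop:cohomology1h}--\ref{prop:cohomology1p}; you rightly make explicit the appeal to Theorem~\ref{thm:persoilconto}, which is needed since Proposition~\ref{prop:cohomology1p} alone only gives $H^{1,2}(\fm,\fp)\supset S$.

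One small correction in your last paragraph: the sign discrepancy you detect in the $S\otimes V\to V$ slot is \emph{not} fixed by relabelling $\phi\mapsto-\phi$ (that would simultaneously flip the sign of $\varepsilon(s,s)$, which already matches). The correct explanation is the graded antisymmetry of $2$-cochains: the coboundary formula in the proof of Proposition~\ref{prop:7} is written with argument order $(v,s)$, and since $v\in\fp_{-2}$ is even and $s\in\fp_{-1}$ is odd one has $\partial c(s,v)=-\partial c(v,s)$, which gives $\epsilon(s,v)=-2\phi(s)v+X_s(v)$ directly with no relabelling.
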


We collect some further important consequences of these results.
\begin{proposition}
\label{prop:stabilizingphi}
Let $F$ be a filtered deformation of $\fh$. Then:
\begin{itemize}
	\item[(i)] $\fh_0\subset\stab_{\fso(V)}(\phi)$, 
	\item[(ii)] the component $\mu$ of filtration degree $1$ vanishes,
	\item[(iii)] the component $\theta$ of filtration degree $1$ satisfies $\imath_A\theta=A\cdot X_S$ for all $A\in\fh_0$.
\end{itemize}
\end{proposition}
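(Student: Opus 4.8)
The plan is to deduce all three statements simultaneously from an effective version of the fact that the infinitesimal class is $\fh_0$-invariant, using three ingredients already at hand. \emph{(a)} Expanding the $[011]$ and $[012]$ Jacobi identities \eqref{eq:011S} and \eqref{012V} componentwise in the exterior generator $\mathpzc t$ — exactly as observed just before Corollary~\ref{cor:form-inf-def} — shows that for every $A\in\fh_0$ one has $A\cdot(\epsilon+\varepsilon)=\partial(\imath_A\theta+\imath_A\mu)$, where $\imath_A\theta\in\Hom(S,\fh_0)$ and $\imath_A\mu\in\Hom(V,S)$. \emph{(b)} By Corollary~\ref{cor:form-inf-def}, $\epsilon+\varepsilon=\partial(Z\otimes\phi)+\partial X_S$ with $Z$ the grading element of $\fh$ and $X_S\colon S\to\fso(V)$ a representative, well defined modulo $\Hom(S,\fh_0)$. \emph{(c)} The Spencer operator extended by the dilation $Z$, namely $\partial\colon C^{1,1}(\fm,\fp\oplus\mathbb R Z)\to C^{1,2}(\fm,\fp\oplus\mathbb R Z)$, is injective; this is recorded in the proof of Proposition~\ref{prop:cohomology1p}, citing \cite{MR3255456}.

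First I would apply $A\cdot$ to the identity in (b). Because $\partial$ is $\fso(V)$-equivariant and the grading element $Z$ commutes with $\fp_0=\fso(V)$ (so $\mathbb R Z$ is the trivial $\fso(V)$-module), this gives
\begin{equation*}
A\cdot(\epsilon+\varepsilon)=\partial\big(Z\otimes(A\cdot\phi)+A\cdot X_S\big),
\end{equation*}
with $Z\otimes(A\cdot\phi)\in\Hom(S,\mathbb R Z)$ and $A\cdot X_S\in\Hom(S,\fso(V))$. Comparing with (a) and invoking the injectivity in (c) forces the equality of $1$-cochains
\begin{equation*}
\imath_A\theta+\imath_A\mu \;=\; Z\otimes(A\cdot\phi)+A\cdot X_S
\end{equation*}
inside $C^{1,1}(\fm,\fp\oplus\mathbb R Z)=\Hom(S,\fso(V))\oplus\Hom(S,\mathbb R Z)\oplus\Hom(V,S)$.

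It then remains to read off the three summands. The $\Hom(V,S)$-component gives $\imath_A\mu=0$ for all $A\in\fh_0$, hence $\mu=0$, which is (ii). The $\Hom(S,\mathbb R Z)$-component gives $A\cdot\phi=0$ for all $A\in\fh_0$, i.e.\ $\fh_0\subseteq\stab_{\fso(V)}(\phi)$, which is (i). The $\Hom(S,\fso(V))$-component gives $\imath_A\theta=A\cdot X_S$, which is (iii); as a by-product $A\cdot X_S$ then takes values in $\fh_0$, so the $[\partial X_S]$-part of the infinitesimal class is itself $\fh_0$-invariant — consistently with the fact that $X_S$ is only defined modulo $\Hom(S,\fh_0)$, on which $A\cdot$ acts internally.

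I do not anticipate a genuine obstacle: the argument is essentially formal. The only points needing care are the $\Lambda^\bullet$-degree bookkeeping that legitimises stripping the single generator $\mathpzc t$ in (a), and a short degree count (using $\deg V=-2$, $\deg S=-1$) confirming that for $\fp\oplus\mathbb R Z$ the spaces $C^{1,1}$ and $C^{1,2}$ acquire no new $\fh_0$- or $\mathbb R Z$-valued cochains beyond those already present for $\fp$, so that the three target directions in the displayed equation are genuinely independent and the comparison is unambiguous.
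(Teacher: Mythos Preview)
Your proposal is correct and follows essentially the same argument as the paper: apply $A\cdot$ to the expression $\epsilon+\varepsilon=\partial(Z\otimes\phi)+\partial X_S$ from Corollary~\ref{cor:form-inf-def}, use $\fso(V)$-equivariance of $\partial$, compare with $A\cdot(\epsilon+\varepsilon)=\partial(\imath_A\theta+\imath_A\mu)$ obtained from the Jacobi identities, and invoke injectivity of $\partial$ on $C^{1,1}(\fm,\fp\oplus\mathbb R Z)$ to read off the three conclusions from the independent summands. Your additional remarks on degree bookkeeping and on the consistency of $\imath_A\theta=A\cdot X_S$ with the $\Hom(S,\fh_0)$-ambiguity in $X_S$ are accurate and slightly more explicit than the paper's presentation, but the route is the same.
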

\begin{proof}
As already observed at the beginning of \S\ref{subsec:5.3}, the infinitesimal deformation is $\fh_0$-invariant as a cohomology class: $A\cdot(\epsilon+\varepsilon)=\partial(\imath_A\theta)+\partial(\imath_A\mu)$ for all $A\in\fh_0$. Using Corollary \ref{cor:form-inf-def} and the fact that the operator $\partial:C^{1,1}(\fm,\fp\oplus\mathbb R Z)\to C^{1,2}(\fm,\fp\oplus\mathbb R Z)$ is $\fso(V)$-equivariant and injective (as we already explained in \S\ref{sec:spencer-complex}, injectivity follows from the results of \cite{MR3255456}), we then infer
\begin{align*}
\partial(\imath_A\theta)+\partial(\imath_A\mu)&=A\cdot(\epsilon+\varepsilon)=A\cdot\big(\partial(Z\otimes\phi)+\partial X_S\big)\\
&=\partial\big(Z\otimes(A\cdot\phi)\big)+\partial \big(A\cdot X_S\big)\\
&\Rightarrow
\begin{cases}
\imath_A\theta&=A\cdot X_S,\\
\imath_A\mu&=0,\\
Z\otimes (A\cdot\phi)&=0,
\end{cases}
\end{align*}
for all $A\in\fh_0$. This proves the three claims.
\end{proof}
Our infinitesimal analysis on the equations \eqref{eq:011S}, \eqref{eq:111V}, \eqref{012V} has the following consequences: the Jacobi identities \eqref{eq:001h} and \eqref{eq:002S} are automatically satisfied, while \eqref{eq:002h} and \eqref{022V} simplify to
\begin{equation}
\label{eq:uffpioggia}
\begin{aligned}
\delta([A,B],v)
-\delta(A,Bv)+\delta(B,Av)&-[A,\delta(B,v)]+[B,\delta(A,v)]=0\;,\\
(A\cdot\alpha)(v,w)-\delta(A,v)w+\delta(A,w)v
&=0\;.
\end{aligned}
\end{equation}
\begin{proposition}
Let $F$ be a filtered deformation of $\fh$. Then there is a linear map $X_V:V\rightarrow\fso(V)$ such that
\begin{align}
\label{uffpioggia2}
\alpha(v,w)&=X_vw-X_w v\;,\\
\label{uffpioggia3}
\delta(A,v)&=[A,X_v]-X_{Av}\;,
\end{align}
for all $v,w\in V$ and $A\in\fh_0$.
\end{proposition}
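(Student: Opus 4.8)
The plan is to reduce the statement to the classical fact, already used in the proof of Proposition~\ref{prop:tausigma}, that the linear Lie algebra $\fso(V)\subset\mathfrak{gl}(V)$ has vanishing first prolongation $\fso(V)^{(1)}=0$. Recall that $\fso(V)^{(1)}=\{X\in\Hom(V,\fso(V))\mid X_vw=X_wv\ \text{for all}\ v,w\in V\}$ is precisely the kernel of the Spencer map
\[
\slashed{\partial}\colon \Hom(V,\fso(V))\longrightarrow \Hom(\Lambda^2 V, V),\qquad (\slashed{\partial}X)(v,w)=X_vw-X_wv,
\]
where we abbreviate $X_v:=X(v)$ and $X_vw$ denotes the action of $X_v\in\fso(V)$ on $w\in V$. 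Thus $\slashed{\partial}$ is injective, and since its source and target share the common dimension $11\binom{11}{2}$, it is in fact an isomorphism.

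First I would apply surjectivity of $\slashed{\partial}$ to the element $\alpha\in\Hom(\Lambda^2 V,V)$: this produces a unique linear map $X_V\colon V\to\fso(V)$ with $\alpha(v,w)=X_vw-X_wv$, which is exactly \eqref{uffpioggia2} (note that this step uses no property of the deformation $F$). Next I would substitute this expression for $\alpha$ into the second identity of \eqref{eq:uffpioggia}. Expanding the natural $\fh_0$-action $(A\cdot\alpha)(v,w)=A\bigl(\alpha(v,w)\bigr)-\alpha(Av,w)-\alpha(v,Aw)$ and collecting the six resulting terms via $AX_v-X_vA=[A,X_v]$ inside $\fso(V)$ gives
\[
(A\cdot\alpha)(v,w)=\bigl([A,X_v]-X_{Av}\bigr)w-\bigl([A,X_w]-X_{Aw}\bigr)v,
\]
so \eqref{eq:uffpioggia} becomes $\bigl(\delta(A,v)-[A,X_v]+X_{Av}\bigr)w=\bigl(\delta(A,w)-[A,X_w]+X_{Aw}\bigr)v$ for all $v,w\in V$. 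Setting $P_A(v):=\delta(A,v)-[A,X_v]+X_{Av}$, which lies in $\fso(V)$ since $\delta(A,v)\in\fh_0\subset\fso(V)$ and $[A,X_v],X_{Av}\in\fso(V)$, the last display says exactly that $P_A\in\Hom(V,\fso(V))$ satisfies $P_A(v)w=P_A(w)v$, i.e.\ $P_A\in\fso(V)^{(1)}=0$. Hence $\delta(A,v)=[A,X_v]-X_{Av}$, which is \eqref{uffpioggia3}.

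I do not expect a real obstacle here: the argument is linear algebra once the vanishing of $\fso(V)^{(1)}$ is in hand, the only mild care being the reorganisation of $A\cdot\alpha$ so that the commutators $[A,X_v]$ emerge. Two remarks are worth recording. First, the first identity in \eqref{eq:uffpioggia} is not needed: once \eqref{uffpioggia2}--\eqref{uffpioggia3} hold it is automatically satisfied, the purely $X_V$-dependent terms cancelling by linearity of $X_V$ and the commutator terms collapsing via the Jacobi identity $[[A,B],X_v]=[A,[B,X_v]]-[B,[A,X_v]]$ in $\fso(V)$. Second, \eqref{uffpioggia2}--\eqref{uffpioggia3} may be read as the statement that the degree-$2$ part $(\alpha,\delta)$ of the deformed bracket is ``exact'' with respect to $X_V$, namely $\alpha=\slashed{\partial}X_V$ and $\delta(A,-)=A\cdot X_V$ for the natural $\fh_0$-action on $\Hom(V,\fso(V))$ — which also makes transparent the implicit constraint $[A,X_v]-X_{Av}\in\fh_0$ for all $A\in\fh_0$, $v\in V$, forced by $\delta$ being $\fh_0$-valued.
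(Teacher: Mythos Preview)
Your proof is correct and follows essentially the same approach as the paper's: both invoke the vanishing of the first prolongation $\fso(V)^{(1)}$ to invert $\slashed{\partial}$ on $\alpha$ and then use the second equation of \eqref{eq:uffpioggia} together with injectivity of $\slashed{\partial}$ to pin down $\delta$. The only difference is presentational---the paper phrases the second step abstractly as ``$A\cdot\slashed{\partial}X_V=\slashed{\partial}(\imath_A\delta)$, hence $A\cdot X_V=\imath_A\delta$ by equivariance and injectivity'', whereas you expand $(A\cdot\alpha)(v,w)$ explicitly---and your remark on the redundancy of the first identity in \eqref{eq:uffpioggia} matches the paper's closing observation.
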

\begin{proof}
It is well-known that any $\alpha:\Lambda^2 V\to V$ can be expressed as $\alpha=\slashed{\partial} X_V$ for a unique $X_V:V\to\fso(V)$, where here $\slashed{\partial} $ is a component of the classical Spencer operator in pseudo-Riemannian geometry (from a more geometric perspective, the map $X_V$ encodes
a linear choice of Killing vector fields acting simply transitively at a fixed point, see e.g. \cite[\S 3.2]{Figueroa-OFarrill:2016khp}). This is \eqref{uffpioggia2}.

The second equation in \eqref{eq:uffpioggia} then reads $A\cdot\slashed{\partial}  X_V=\slashed{\partial} (\imath_A\delta)$, and by $\fso(V)$-equivariancy and injectivity of $\slashed{\partial} $, we arrive at $A\cdot X_V=\imath_A\delta$. This is \eqref{uffpioggia3}. It is now a simply check to verify that the first equation in \eqref{eq:uffpioggia} is automatically satisfied.
\end{proof}

It is now a good place to list all the remaining Jacobi identities from \S\ref{subsec:JI}:
\begin{align}
\label{eq:32}
\widehat\delta(A,\kappa(s,s))
+(A\cdot\widehat\gamma)(s,s)
&=2\widehat\theta(\widehat\theta(A,s),s)-\widehat\theta(A,\widehat\varepsilon(s,s))\;,\\
\label{eq:33}
\widehat\beta(\kappa(s,s),s)+\widehat\gamma(s,s)s&=-\widehat\varepsilon(\widehat\varepsilon(s,s),s)\;,\\
\label{eq:34}
	\widehat\sigma(\kappa(s,s),s)&=-\widehat\theta(\widehat\gamma(s,s),s)-\widehat\gamma(\widehat\varepsilon(s,s),s)\;,\\
	\label{eq:35}
\widehat\alpha(\kappa(s,s),v)+2\kappa(s,\widehat\beta(v,s))+\widehat\gamma(s,s)v&=2\widehat\epsilon(s,\widehat\epsilon(s,v))-\widehat\epsilon(\widehat\varepsilon(s,s),v)\;,\\
\notag
\widehat\tau(\kappa(s,s),v)+2\widehat\sigma(v,s)s&=-\widehat\beta(\widehat\varepsilon(s,s),v)+2\widehat\beta(s,\widehat\epsilon(s,v))\\
\label{eq:36}
&\;\;\;\;-2\widehat\varepsilon(s,\widehat\beta(v,s))\;,\\
\notag
\widehat\rho(\kappa(s,s),v)+\widehat\delta(\widehat\gamma(s,s),v)+2\widehat\gamma(s,\widehat\beta(v,s))&=-\widehat\sigma(\widehat\varepsilon(s,s),v)+2\widehat\sigma(s,\widehat\epsilon(s,v))\\
\label{eq:37}
&\;\;\;\;-2\widehat\theta(\widehat\sigma(v,s),s)\;,\\
\label{eq:38}
	(A\cdot\widehat\beta)(v,s)-\widehat\delta(A,v)s
	&=0\;,\\
\notag
	(A\cdot\widehat\sigma)(v,s)
	&=\widehat\theta(\widehat\delta(A,v),s)-\widehat\delta(\widehat\theta(A,s),v)
\\
		\label{eq:39}
	&\;\;\;\;-\widehat\delta(A,\widehat\epsilon(v,s))-\widehat\theta(A,\widehat\beta(v,s))\;,\\
	\label{eq:40}
(A\cdot\widehat\tau)(v,w)
&=0\;,\\
	\label{eq:41}
(A\cdot\widehat\rho)(v,w)+\widehat\delta(A,\widehat\alpha(v,w))+\widehat\theta(A,\widehat\tau(v,w))
&=\widehat\delta(\widehat\delta(A,v),w)
-\widehat\delta(\widehat\delta(A,w),v)\;,\\
\notag
\kappa(s,\widehat\tau(v,w))-\widehat\sigma(w,s)v+\widehat\sigma(v,s)w&=\widehat\alpha(\widehat\epsilon(s,v),w)-\widehat\alpha(\widehat\epsilon(s,w),v)\\
\notag
&\;\;\;\;+\widehat\epsilon(\widehat\beta(w,s),v)-\widehat\epsilon(\widehat\beta(v,s),w)\\
	\label{eq:42}
&\;\;\;\;+\widehat\epsilon(s,\widehat\alpha(v,w))
\;,
\\
\notag
\widehat\rho(v,w)s+\widehat\beta(w,\widehat\beta(v,s))-\widehat\beta(v,\widehat\beta(w,s))&=-\widehat\beta(\widehat\alpha(v,w),s)+\widehat\varepsilon(s,\widehat\tau(v,w))
\\
	\label{eq:43}
&\;\;\;\;-\widehat\tau(\widehat\epsilon(s,v),w)+\widehat\tau(\widehat\epsilon(s,w),v)\;,\\
\notag
\widehat\gamma(s,\widehat\tau(v,w))-\widehat\sigma(\widehat\alpha(v,w),s)-\widehat\theta(\widehat\rho(v,w),s)&=\widehat\rho(\widehat\epsilon(s,v),w)-\widehat\rho(\widehat\epsilon(s,w),v)\\
\notag
&
\;\;\;\;+\widehat\sigma(w,\widehat\beta(v,s))-\widehat\sigma(v,\widehat\beta(w,s))
\\
&\;\;\;\;+\widehat\delta(\widehat\sigma(w,s),v)-\widehat\delta(\widehat\sigma(v,s),w)\;,\\
	\label{eq:45}
\mathfrak S\big(\widehat\rho(w,u)v-\widehat\alpha(v,\widehat\alpha(w,u))\big)&=\mathfrak S\big(\widehat\epsilon(v,\widehat\tau(w,u))\big)\;,\\
	\label{eq:46}
\mathfrak S\big(\widehat\tau(v,\widehat\alpha(w,u))\big)&=-\mathfrak S\big(\widehat\beta(v,\widehat\tau(w,u))\big)\;,\\
	\label{eq:47}
\mathfrak S\big(\widehat\delta(\widehat\rho(w,u),v)-\widehat\rho(v,\widehat\alpha(w,u))\big)&=\mathfrak S\big(\widehat\sigma(v,\widehat\tau(w,u))\big)\;,
\end{align}
for all $A,B\in\fh_0$, $s\in S$ and $v,w\in V$. 
\begin{definition}
\label{def:6}
The first-order infinitesimal direction of a filtered deformation $F$ of $\fh$ is called {\it nilpotent} if the right hand sides of the equations \eqref{eq:32}, \eqref{eq:33} and \eqref{eq:35} vanish.
\end{definition}
\begin{remark}
\label{rem:10}
\hfill
{\rm
\begin{itemize}
	\item[(i)] This definition is of cohomological nature. It says that the bracket \'a la Nijenhuis--Richardson of the first-order deformation components $\epsilon,\varepsilon,\mu,\theta$, which is always cohomologous to zero in $H^{2,3}(\fh,\fh)$ (see, e.g., \cite{Fia}), it is actually required to vanish. Whereas this is not a generic assumption, it is still satisfied in many natural cases, as we now explain.
	\item[(ii)] It is not difficult to see that the right hand sides of the equations \eqref{eq:32}, \eqref{eq:33} and \eqref{eq:35} 
depend quadratically on $\phi$ and $X_S$, but with the contribution quadratic in $\phi$ that is absent. In particular, if $X_S$ vanish, the first-order infinitesimal deformation is automatically nilpotent. It is so also when the deformation is decomposable, i.e., the Lie brackets are decomposable elements of $\Lambda^\bullet\otimes\Hom(\Lambda^2\fh,\fh)$. This follows from the expression of the components $\epsilon,\varepsilon,\mu,\theta$ of degree $1$ in terms of $\phi$ and $X_S$ given in Corollary \ref{cor:form-inf-def}  and Proposition \ref{prop:stabilizingphi}: one has to choose $\widehat\phi=\mathpzc t\phi$ and $\widehat{X_S}=\mathpzc tX_S$ for the same $\mathpzc t\in\Lambda^1$.
\end{itemize}

}
\end{remark}

\begin{proposition}
\label{prop:eleven}
Let $F$ be a filtered deformation of $\fh$. If the first-order infinitesimal deformation is nilpotent, then:
\begin{itemize}
	\item[(i)] $\alpha+\beta+\gamma=\beta^\varphi + \gamma^\varphi
+ \partial X_V$ 
where $\beta^\varphi$ and $\gamma^\varphi$ are as explained in \S\ref{sec:poinc-super};
\item[(ii)]	$\fh_0\subset\stab_{\fso(V)}(\varphi)$.
\end{itemize}
By absorbing exact contributions in the chain of filtrands, we may assume w.l.o.g. that $X_V=0$, hence $\alpha=\delta=0$ as well, and $\beta=\beta^\varphi$, $\gamma=\gamma^\varphi$.
\end{proposition}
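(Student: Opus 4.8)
The plan is to read the nilpotency hypothesis as the statement that the filtration-degree-$2$ component of the bracket is a Spencer $2$-cocycle of $\mathbb Z$-grading $2$, then invoke the classification $H^{2,2}(\fm,\fp)\cong\Lambda^4V$ recalled in \S\ref{sec:poinc-super} (equivalently Proposition \ref{prop:cohomology1h}), and finally remove the resulting exact part by a filtered change of filtrands à la Cheng--Kac. Throughout I work with the $\Lambda^\bullet$-valued (``hatted'') components in the Sweedler notation of \S\ref{subsec:definitiondeformation}.

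First I would note that, under the nilpotency assumption of Definition \ref{def:6}, the right-hand sides of \eqref{eq:33} and \eqref{eq:35} vanish, so these become $\widehat\beta(\kappa(s,s),s)+\widehat\gamma(s,s)s=0$ and $\widehat\alpha(\kappa(s,s),v)+2\kappa(s,\widehat\beta(v,s))+\widehat\gamma(s,s)v=0$. Together with \eqref{uffpioggia2}, which expresses $\widehat\alpha$ as $\slashed{\partial}\widehat{X_V}$ and holds for \emph{any} filtered deformation of $\fh$, these are precisely the two components (on $\odot^3S\to S$ and on $\odot^2S\otimes V\to V$) of the Spencer cocycle condition $\partial(\widehat\alpha+\widehat\beta+\widehat\gamma)=0$ for $\widehat\alpha+\widehat\beta+\widehat\gamma\in C^{2,2}(\fm,\fh)\otimes\Lambda^\bullet$. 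Since $\Lambda^\bullet$ is a finite-dimensional vector space, tensoring the $\mathbb Z$-grading-$2$ part of the Spencer complex by it leaves the cohomology unchanged up to $\otimes\Lambda^\bullet$, so $\widehat\alpha+\widehat\beta+\widehat\gamma$ defines a class in $H^{2,2}(\fm,\fh)\otimes\Lambda^2$. By Proposition \ref{prop:cohomology1h} (and \cite[Prop.~8]{Figueroa-OFarrill:2015rfh}) this class is represented by $\widehat{\beta^\varphi}+\widehat{\gamma^\varphi}$ for a unique $\widehat\varphi\in\Lambda^2\otimes\Lambda^4V$ with $\gamma^\varphi(s,s)\in\fh_0$, and the difference is a Spencer coboundary $\partial\widehat{X_V}$ with $X_V:V\to\fh_0$; matching the $\Lambda^2V\to V$ components shows this $X_V$ can be taken to be the one of \eqref{uffpioggia2}--\eqref{uffpioggia3}. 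This is part (i).

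For part (ii) I would use the Jacobi identities \eqref{eq:32} (with vanishing right-hand side, again by nilpotency) and \eqref{eq:38}, together with the already-simplified $[022]$ identity in \eqref{eq:uffpioggia}. Read as equalities of cochains, these three say exactly that $A\cdot(\widehat\alpha+\widehat\beta+\widehat\gamma)=\partial(\imath_A\widehat\delta)$ is exact for every $A\in\fh_0$. Combining this with part (i) and the $\fso(V)$-equivariance of the Spencer differential and of the isomorphism $\Lambda^4V\cong H^{2,2}(\fm,\fp)$, $\varphi\mapsto[\beta^\varphi+\gamma^\varphi]$, we get $[\beta^{A\cdot\varphi}+\gamma^{A\cdot\varphi}]=0$, hence $A\cdot\varphi=0$ for all $A\in\fh_0$, i.e. $\fh_0\subseteq\stab_{\fso(V)}(\varphi)$.

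Finally, the normalization: exactly as in the classical case — the proof of \cite[Prop.~2.3]{Cheng-Kac} carries over verbatim to $\Lambda^\bullet$-linear filtered deformations — one absorbs the coboundary $\partial\widehat{X_V}$ at filtration degree $2$ by the $\Lambda^\bullet$-linear, grading-compatible automorphism $\exp(\widehat{X_V})$ of $\fh^\Lambda$ built from $\widehat{X_V}\in\Lambda^2\otimes\Hom(V,\fh_0)$, which is the identity on the associated graded and whose only degree-$2$ effect is to subtract $\partial\widehat{X_V}$. After this change we may take $X_V=0$, whence $\alpha=\slashed{\partial}X_V=0$ by \eqref{uffpioggia2}, $\delta(A,v)=[A,X_v]-X_{Av}=0$ by \eqref{uffpioggia3}, and $\beta=\beta^\varphi$, $\gamma=\gamma^\varphi$. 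The main point requiring care is precisely this last step: verifying that the normalizing automorphism exists in the $\Lambda^\bullet$-linear framework and that, although it will in general alter the higher filtration-degree components $\sigma,\tau,\rho$, it neither disturbs the already-normalized degree-$1$ data (since $\widehat{X_V}$ has degree $2$) nor changes $\varphi$, which is a cohomological invariant. A secondary, purely bookkeeping, difficulty is reconciling the sign and normalization conventions for the Spencer differential between the cochain description of \S\ref{sec:poinc-super} and the Jacobi-identity form of \S\ref{subsec:JI}.
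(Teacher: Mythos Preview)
Your proof is correct and follows essentially the same approach as the paper: you identify the nilpotency hypothesis as reducing \eqref{eq:33}--\eqref{eq:35} to the Spencer cocycle condition on $\alpha+\beta+\gamma$, invoke the classification of $H^{2,2}(\fm,\fh)$ via Proposition~\ref{prop:cohomology1h}, read \eqref{eq:32}, \eqref{eq:38} and \eqref{eq:uffpioggia} cohomologically as $A\cdot(\alpha+\beta+\gamma)\in B^{2,2}$ to conclude $A\cdot\varphi=0$, and then absorb $\partial X_V$ by a Cheng--Kac change of filtrands. The paper argues part~(ii) slightly more directly, using $\imath_A\delta=A\cdot X_V$ from \eqref{uffpioggia3} to rewrite \eqref{eq:32} and \eqref{eq:38} literally as $A\cdot\gamma^\varphi=0$ and $A\cdot\beta^\varphi=0$, but this is the same computation unpacked.
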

\begin{proof}
Equations \eqref{eq:33} and \eqref{eq:35} reduce to the cocycle conditions on $\alpha+\beta+\gamma$ studied in \cite{Figueroa-OFarrill:2015rfh}, so $\alpha+\beta+\gamma=\beta^\varphi + \gamma^\varphi
+ \partial X_V$ as first claimed. Using then that 
$A\cdot X_V=\imath_A\delta$ for all $A\in\fh_0$,
the equations \eqref{eq:32} and \eqref{eq:38} read $A\cdot\gamma^\varphi=A\cdot\beta^\varphi=0$. In other words $\gamma^{A\cdot \varphi}=\beta^{A\cdot\varphi}=0$ for all $A\in\fh_0$ and, using the explicit expression of $\beta^\varphi$ given in \cite{Figueroa-OFarrill:2015rfh}, we finally get $A\cdot\varphi=0$.

Finally, we note that $X_V:V\to\fh_0$ thanks to Proposition \ref{prop:cohomology1h}, so the contribution $\partial X_V$ can be absorbed via a redefinition of the complementary
subspaces in the chain of filtrands.
\end{proof}

This subsection concludes the analysis of all the Jacobi identities of filtration degree $1$ and $2$. In a nutshell, for deformations with nilpotent first-order infinitesimal direction, the identities are equivalent to the cocycle and $\fh_0$-invariance conditions for elements of the Spencer cohomology groups $H^{d,2}(\fm,\fh)$ with $d=1,2$.
 It remains to study equations \eqref{eq:34}, \eqref{eq:36}, \eqref{eq:37}, \eqref{eq:39}-\eqref{eq:47} at filtration degrees $\geq 3$.
\subsection{A no-go theorem}
To move forward, we will restrict to the case where the Lie subalgebra $\fh_0\subset \fso(V)$ is compact. 
We refer to such filtered subdeformations as with ``compact stabilizer''. In this case, we have the $\fh_0$-stable orthogonal  decomposition $V=U\oplus\mathbb R\xi$ and any $\varphi\in\Lambda^4 V$ decomposes into $\varphi=\xi\wedge\Phi^3+\Phi^4$, where $\Phi^3\in\Lambda^3 U$ and $\Phi^4\in\Lambda^4 U$. Furthermore $\fso(V)=\fh_0\oplus\fh_0^\perp$ in a canonical way as $\fh_0$-modules.

We note that, in the context of the present paper, this is an assumption of {\it genericity type}: if the infinitesimal first-order deformation $\phi\in H^{1,2}(\fm,\fh)\cong S$ is in the timelike orbit of the projectivized action of $G=\Spin^\circ(V)$ on $\mathbb P(S)$ (namely, the unique open orbit), then 
$\fh_0$ is compact: it stabilizes $\phi$ by Proposition \ref{prop:stabilizingphi} and the timelike Dirac current $\xi=\kappa(\phi,\phi)$.
\begin{proposition}
\label{prop:siamoallafine}
Let $F$ be a filtered deformation of $\fh$ that is with compact stabilizer and with nilpotent  first-order infinitesimal direction. Then $\varphi=0$, and $\beta=\gamma=0$.
\end{proposition}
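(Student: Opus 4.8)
The plan is to feed the reductions of Proposition~\ref{prop:eleven} into the Jacobi identities of filtration degree~$3$. By that Proposition, the nilpotency hypothesis already puts us in the situation $\alpha=\delta=0$, $\widehat\beta=\mathpzc t^2\beta^\varphi$, $\widehat\gamma=\mathpzc t^2\gamma^\varphi$ with $\gamma^\varphi(s,s)\in\fh_0$ for all $s\in S$, $X_V=0$, and $\fh_0\subset\stab_{\fso(V)}(\varphi)$; it therefore suffices to prove $\varphi=0$, since $\beta^\varphi=\gamma^\varphi=0$ then follows from the linearity of $\varphi\mapsto\beta^\varphi,\gamma^\varphi$ (in fact $\gamma^\varphi$ alone determines $\varphi$, as it records all double contractions $\imath_v\imath_w\varphi$ and $\imath_v\imath_w\star\varphi$). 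The remaining constraints sit at filtration degree $\geq 3$; the ones relevant here are \eqref{eq:34}, \eqref{eq:36}, \eqref{eq:39}, \eqref{eq:40} and \eqref{eq:42}, i.e.\ the degree-$3$ identities not involving the degree-$4$ component $\widehat\rho$. Substituting $\alpha=\delta=0$ and the explicit $\beta^\varphi,\gamma^\varphi$, equations \eqref{eq:34}, \eqref{eq:36}, \eqref{eq:42} say precisely that $\widehat\sigma\in\Lambda^3\otimes\Hom(V\otimes S,\fh_0)$ and $\widehat\tau\in\Lambda^3\otimes\Hom(\Lambda^2V,S)$ form a $\partial$-primitive of a $3$-cochain $Q$ that depends bilinearly on the data --- linearly in $\varphi$ through $\beta^\varphi,\gamma^\varphi$, and linearly in the first-order datum $(\phi,X_S)$ through $\varepsilon,\epsilon,\theta$ as given in Corollary~\ref{cor:form-inf-def} and Proposition~\ref{prop:stabilizingphi}. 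This $Q$ is automatically a cocycle by the Jacobi identities of filtration degree $\leq 2$, and since $H^{3,2}(\fm,\fh)=0$ (Theorem~\ref{thm:H32}, Proposition~\ref{prop:cohomology1h}, together with the absence of degree-$3$ one-cochains, cf.\ Table~\ref{tab:even-cochains-small}) the primitive is unique whenever it exists; hence the whole content of these identities is the integrability constraint that $Q$ be exact, a constraint on $\varphi$ paired against the fixed non-trivial infinitesimal odd deformation. Equations \eqref{eq:39}, \eqref{eq:40} add the $\fh_0$-equivariance of $\widehat\sigma,\widehat\tau$.

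To extract $\varphi=0$ I would make this constraint explicit using genericity. The timelike/generic hypothesis places $\phi$ in the open orbit of $\Spin^\circ(V)$ on $\mathbb{P}(S)$, so $\xi=\kappa(\phi,\phi)$ is timelike and $\stab_{\fso(V)}(\phi)\cong\fsu(5)$ is compact; as $\fh_0\subset\stab_{\fso(V)}(\phi)$ we get $\fh_0\subset\fsu(5)$, and accordingly $V=U\oplus\mathbb{R}\xi$ and $\fso(V)=\fh_0\oplus\fh_0^\perp$ split $\fh_0$-equivariantly with $\varphi=\xi\wedge\Phi^3+\Phi^4$, $\Phi^3\in\Lambda^3U$, $\Phi^4\in\Lambda^4U$. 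Evaluating \eqref{eq:34} and \eqref{eq:36} at the spinor dual to $\phi$ --- where $\langle\phi,\phi\rangle=0$ kills the $\varepsilon^\phi$-contribution and only the $\Spin^\circ(V)$-invariant $2$- and $5$-form bilinears of $\phi$ survive inside $\gamma^\varphi(\phi,\phi)$ and $\beta^\varphi(-,\phi)$ --- together with their polarizations and with the equivariance constraints \eqref{eq:39}, \eqref{eq:40}, reduces the integrability requirement to a closed linear system of Clifford-algebra identities for $\Phi^3$ and $\Phi^4$. Expanding the Gamma-matrix products and abstracting the independent $\fso(V)$-isotypic pieces via the Fierz identity \eqref{eq:FierzIdentity} --- exactly the bookkeeping used in the proofs of Theorems~\ref{thm:persoilconto} and \ref{thm:H32} --- then forces $\Phi^3=\Phi^4=0$, hence $\varphi=0$ and $\beta=\gamma=0$.

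Two points need care, the second being the real obstacle. First, the whole argument must survive for an arbitrary finite-dimensional $W$: all the above equations live in $\Lambda^\bullet W\otimes(\cdots)$ and are $\Lambda^\bullet W$-linear, so the integrability constraint is really a constraint on the $\Lambda^2W$-valued form that enters $\widehat\beta=\mathpzc t^2\beta^\varphi$, $\widehat\gamma=\mathpzc t^2\gamma^\varphi$; one uses that the infinitesimal odd deformation is non-trivial --- so its $\Lambda^1W$-part $\widehat\phi=\mathpzc t\phi$ is non-zero --- to strip the $\Lambda^\bullet W$-coefficient off the bilinear identity and conclude that the $\Lambda^4V$-datum itself vanishes, just as in the decomposable situation of Remark~\ref{rem:10}; this is how the clause ``for any choice of the finite-dimensional auxiliary vector space $W$'' in Theorem~\ref{thm:1.3} is discharged. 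Second, and genuinely the heart of the proof, is to verify that the Clifford system produced above admits only $\Phi^3=\Phi^4=0$: unlike the bare invariance condition $\fh_0\cdot\varphi=0$, which is vacuous when $\fh_0=0$, this step must exploit the coupling of $\varphi$ to the \emph{fixed} generic spinor $\phi$, and I expect it to require either a careful $\fsu(5)$-decomposition of $\Lambda^3U$ and $\Lambda^4U$ paired against the $SU(5)$-invariant spinor bilinears of $\phi$, or --- following the practice of this paper --- an explicit check in a concrete real realisation of $\Cl(V)$, recorded in the accompanying Mathematica notebook.
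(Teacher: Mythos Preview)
Your proposal takes a much harder route than the paper and does not close the argument. You try to extract $\varphi=0$ from the degree-$3$ Jacobi identities \eqref{eq:34}, \eqref{eq:36}, \eqref{eq:42} by treating them as an integrability constraint on the pair $(\varphi,\phi)$, and you concede at the end that the resulting Clifford system still has to be solved, either by an $\fsu(5)$-decomposition or by machine computation. That is not a proof; it is an outline of a program whose decisive step is left open.

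The paper's argument bypasses all of this. It uses only facts already available from Propositions~\ref{prop:cohomology1h} and~\ref{prop:eleven}: the constraint $\gamma^\varphi(s,s)\in\fh_0$ for every $s\in S$, together with the explicit formula
\[
\eta(\gamma^\varphi(\omega)v,w)=\tfrac{1}{3}\eta(\imath_v\imath_w\varphi,\omega^{(2)})+\tfrac{1}{6}\eta(\imath_v\imath_w\star\varphi,\omega^{(5)})
\]
from \cite{Santi:2019kpx}. Compactness of $\fh_0$ means $\fh_0$ stabilises the timelike vector $\xi$, so $\gamma^\varphi(\omega)\xi=0$ for every $\omega$ in the Dirac kernel $\mathfrak D=\Lambda^2V\oplus\Lambda^5V\subset\odot^2S$. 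Setting $w=\xi$ in the displayed formula and writing $\varphi=\xi\wedge\Phi^3+\Phi^4$ immediately gives $\eta(\imath_v\Phi^3,\omega^{(2)})=0$ and $\eta(\imath_v\bigstar\Phi^4,\omega^{(5)})=0$ for all $v\in U$ and all $\omega^{(2)}\in\Lambda^2U$, $\omega^{(5)}\in\Lambda^5U$, hence $\Phi^3=\Phi^4=0$. No degree-$3$ Jacobi identity, no coupling to the specific spinor $\phi$, no $\Lambda^\bullet W$-bookkeeping, and no Clifford computation are needed: the whole content is that $\gamma^\varphi$ lands in a subalgebra that annihilates a timelike vector, and the explicit formula for $\gamma^\varphi$ then reads off $\varphi=0$ directly. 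In the paper's logical flow, the degree-$3$ Jacobi identities are used only \emph{afterwards}, in Theorem~\ref{thm:6}, once $\beta=\gamma=0$ has reduced them to the bare cocycle conditions on $\sigma+\tau$.
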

\begin{proof}
We have that $\gamma^\varphi(s,s)\in\fh_0\subset\stab_{\fso(V)}(\varphi)\cap\stab_{\fso(V)}(\phi)$ for all $s\in S$ by Proposition \ref{prop:eleven} and that
$
\eta(\gamma^\varphi(\omega)v,w)
=\tfrac{1}{3}\eta(\imath_v\imath_w\varphi,\omega^{(2)})
+\tfrac{1}{6}\eta(\imath_v\imath_w\star\varphi,\omega^{(5)})
$
for all $\omega\in\odot^2 S\cong \Lambda^1 V\oplus\Lambda^2V\oplus\Lambda^5 V$ and $v,w\in V$ by \cite[Eq. (9)]{Santi:2019kpx} (see also the proof of Proposition \ref{prop:cohomology1h}). Consider the so-called ``Dirac kernel'' $\mathfrak D:=\Lambda^2 V\oplus\Lambda^5 V\subset\odot^2 S$ and note that $\gamma^\varphi(\mathfrak D)$ is a Lie subalgebra of $\fh_0$ \cite[Lemma 18]{Figueroa-OFarrill:2016khp}. Hence
\begin{align*}
0=\eta(\gamma^\varphi(\omega)v,\xi)
&=\tfrac{1}{3}\eta(\imath_v\imath_\xi\varphi,\omega^{(2)})
+\tfrac{1}{6}\eta(\imath_v\imath_\xi\star\varphi,\omega^{(5)})
\\&
=\eta(\xi,\xi)\Big(\tfrac{1}{3}\eta(\imath_v\Phi^3,\omega^{(2)})
+\tfrac{1}{6}\eta(\imath_v\bigstar\Phi^4,\omega^{(5)})\Big)
\end{align*}
for all $\omega\in\mathfrak D$ and $v\in U$. Here $\bigstar$ is the Hodge star operator on the Euclidean vector space $U$. Choosing $\omega\in \Lambda^2 U\subset\mathfrak D$ and $\omega\in\Lambda^5 U\subset\mathfrak D$, we arrive at $\Phi^3=0$ and $\Phi^4=0$, respectively. 
\end{proof}
It remains to deal with the components $\sigma,\tau$ of degree $3$ and $\rho$ of degree $4$, and the
analysis at filtration degrees $\geq 3$.

\begin{theorem}
\label{thm:6}
Let $F$ be a filtered deformation of $\fh$ (parametrized by $W$). If $F$ is with compact stabilizer and with nilpotent first-order infinitesimal direction, then there exist $\varphi^i\in S^*$
and $\fh_0$-equivariant $X^i_S:S\to\fh_0^\perp$, $i=1,\ldots,\dim W$,
such that $F$ is isomorphic to a filtered deformation whose Lie brackets are of the form
\begin{equation}
\label{eq:bracketsfinal}
  \begin{aligned}
	[A,B] &=AB-BA\\
	  [A,v] &= Av\\
			  [A,s] &= As\\
    [s, s] & =  \kappa(s,s)-2\sum \mathpzc t_{i}\otimes\phi^i(s)s+2\sum \mathpzc t_{i}\otimes X^i_s(s)
\\
    [s, v] & =-2\sum \mathpzc t_{i}\otimes\phi^i(s)v+\sum \mathpzc t_{i}\otimes X^i_s(v)\\
    [v, w] & =0
  \end{aligned}
\end{equation}
for all $A,B\in \mathfrak h_0$, $v,w\in V$, $s\in S$. In particular, $F$ is isomorphic to an odd deformation of the first-order.
\end{theorem}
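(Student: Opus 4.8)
The plan is to show that, under the two standing assumptions (compact stabilizer and nilpotent first-order infinitesimal direction), the higher-degree components $\sigma,\tau,\rho$ can all be gauged away, so that the deformation collapses to the purely first-order odd brackets displayed in \eqref{eq:bracketsfinal}. First I would collect what has already been achieved in \S\ref{subsec:5.3}: by Proposition \ref{prop:stabilizingphi} we have $\mu=0$ and $\imath_A\theta=A\cdot X_S$, and by Proposition \ref{prop:siamoallafine} (combined with Proposition \ref{prop:eleven}) we may assume $\varphi=0$, hence $\alpha=\beta=\gamma=\delta=0$ and $X_V=0$. So the only brackets that could still be deformed above the first order are $\widehat\sigma\in\Lambda^3\otimes\Hom(V\otimes S,\fh_0)$, $\widehat\tau\in\Lambda^3\otimes\Hom(\Lambda^2 V,S)$ and $\widehat\rho\in\Lambda^4\otimes\Hom(\Lambda^2 V,\fh_0)$, and I must show these vanish up to a filtered isomorphism.

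The key step is to feed $\alpha=\beta=\gamma=\delta=0$ into the remaining Jacobi identities \eqref{eq:34},\eqref{eq:36},\eqref{eq:37},\eqref{eq:39}--\eqref{eq:47}. With those components killed, \eqref{eq:34} becomes $\widehat\sigma(\kappa(s,s),s)=0$, \eqref{eq:36} becomes $\widehat\tau(\kappa(s,s),v)+2\widehat\sigma(v,s)s=0$, and \eqref{eq:42} (the $[122]$ vector component) becomes $\kappa(s,\widehat\tau(v,w))=\widehat\sigma(w,s)v-\widehat\sigma(v,s)w$, since the $\widehat\alpha,\widehat\beta,\widehat\epsilon\cdot\widehat\alpha$ terms all carry a vanishing factor. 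But these are exactly the three cocycle equations \eqref{eq:cocycleeqsdegree3} for a degree-$3$ Spencer $2$-cochain $\sigma+\tau\in C^{3,2}(\fm,\fp)$, now tensored with $\Lambda^3$. By Theorem \ref{thm:H32}, $H^{3,2}(\fm,\fp)=0$; and since $H^{3,2}(\fm,\fh)=Z^{3,2}(\fm,\fh)\subset Z^{3,2}(\fm,\fp)$ with no coboundaries (Proposition \ref{prop:cohomology1h}), we conclude $\widehat\sigma=\widehat\tau=0$ componentwise in $\Lambda^3$. Here one must be careful that the $\fh$-valued cochain is a genuine cocycle and not merely an $\fp$-valued one: since $\widehat\sigma$ takes values in $\fh_0$ by construction \eqref{eq:comp3}, this is automatic. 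Once $\widehat\sigma=\widehat\tau=0$, equation \eqref{eq:41} reads $(A\cdot\widehat\rho)(v,w)=0$ and \eqref{eq:47} reads $\mathfrak S\big(\widehat\delta(\widehat\rho(w,u),v)-\widehat\rho(v,\widehat\alpha(w,u))\big)=0$, which is vacuous; but more usefully \eqref{eq:39} with $\widehat\sigma=0$, $\widehat\delta=0$, $\widehat\beta=0$ gives no constraint, so $\widehat\rho$ must be pinned down differently. I would argue that $\widehat\rho\in\Lambda^4\otimes\Hom(\Lambda^2V,\fh_0)$ with $\Lambda^4$-coefficients is a Spencer $2$-cochain of degree $4$ whose only surviving Jacobi constraint (from \eqref{eq:37}, now $\widehat\rho(\kappa(s,s),v)=0$) forces $\widehat\rho$ into $Z^{4,2}(\fm,\fh)$; but $H^{4,2}(\fm,\fh)=0$ by Proposition \ref{prop:cohomology1h}, and again there being no coboundaries in that degree for maximally supersymmetric $\fh$, we get $\widehat\rho=0$.

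Having killed $\sigma,\tau,\rho$, what remains is exactly the first-order odd bracket: $[s,s]=\kappa(s,s)+\widehat\varepsilon(s,s)$ and $[s,v]=\widehat\epsilon(s,v)$ with all other brackets undeformed, and by Corollary \ref{cor:form-inf-def} (with $\phi$ replaced by a $\Lambda^1$-valued $\widehat\phi=\sum\mathpzc t_i\otimes\phi^i$ and $X_S$ by $\widehat{X_S}=\sum\mathpzc t_i\otimes X_S^i$, as in Remark \ref{rem:10}(ii)) these are precisely the brackets \eqref{eq:bracketsfinal}. One subtlety to address: absorbing the exact contributions $\partial X_S$ and $\partial X_V$ into the chain of filtrands, which was invoked at the infinitesimal level in Propositions \ref{prop:stabilizingphi}--\ref{prop:eleven}, must be shown not to reintroduce higher-degree components; this follows because the redefinition of complementary subspaces acts degree by degree (the proof of \cite[Prop. 2.3]{Cheng-Kac} applies verbatim in the $\Lambda^\bullet$-linear setting, as already noted in \S\ref{subsec:5.3}). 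The main obstacle I expect is precisely the bookkeeping at degree $3$: one has to be sure that \emph{all} the cubic-in-$\phi,X_S$ right-hand sides that would obstruct treating $\sigma+\tau$ as a genuine Spencer cocycle do indeed vanish under the nilpotency hypothesis of Definition \ref{def:6}, and that the $\Lambda^3$-coefficients do not secretly couple different graded pieces — but since the bracket is $\Lambda^\bullet$-linear and the nilpotency assumption is exactly designed to kill the relevant quadratic right-hand sides \eqref{eq:32},\eqref{eq:33},\eqref{eq:35}, the degree-$3$ equations \eqref{eq:34},\eqref{eq:36},\eqref{eq:42} reduce cleanly to \eqref{eq:cocycleeqsdegree3} tensored with $\Lambda^3$, and Theorem \ref{thm:H32} finishes the job.
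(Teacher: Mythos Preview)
Your approach for the degree-$3$ components $\sigma,\tau$ is essentially the paper's: once $\alpha=\beta=\gamma=\delta=0$ (from Propositions \ref{prop:eleven} and \ref{prop:siamoallafine}), the Jacobi identities \eqref{eq:34}, \eqref{eq:36}, \eqref{eq:42} reduce cleanly to the Spencer cocycle conditions \eqref{eq:cocycleeqsdegree3}, and $Z^{3,2}(\fm,\fh)=H^{3,2}(\fm,\fh)=0$ kills them.

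Your handling of $\widehat\rho$ is more circuitous than necessary. The paper simply observes that \eqref{eq:37} with everything else zero gives $\widehat\rho(\kappa(s,s),v)=0$, and surjectivity of the Dirac current $\kappa:\odot^2 S\to V$ (after polarizing in $s$) forces $\widehat\rho=0$ outright; alternatively, \eqref{eq:43} reads $\widehat\rho(v,w)s=0$ for all $s$, whence $\widehat\rho(v,w)=0$ since the spinor representation of $\fh_0\subset\fso(V)$ is faithful. Your route via $Z^{4,2}(\fm,\fh)=0$ reaches the same conclusion, but your claim that \eqref{eq:37} alone ``forces $\widehat\rho$ into $Z^{4,2}(\fm,\fh)$'' is not quite right as phrased: the condition $\widehat\rho(\kappa(s,s),v)=0$ is only one of the three components of $\partial\widehat\rho=0$, and it happens to already force $\widehat\rho=0$ by surjectivity of $\kappa$, which is what makes the other components moot.

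There is, however, a genuine gap at the very end. You assert that ``all other brackets are undeformed'' and that the result matches \eqref{eq:bracketsfinal}, but the theorem requires $[A,s]=As$ with \emph{no} $\widehat\theta$-term, and that each $X_S^i$ be $\fh_0$-equivariant with values in $\fh_0^\perp$. You have not established any of these. From Proposition \ref{prop:stabilizingphi}(iii) you only know $\imath_A\widehat\theta=A\cdot\widehat{X_S}$, which does not by itself say $\widehat\theta=0$. The paper closes this as follows: since $\fh_0$ is compact, the $\fh_0$-module $\fso(V)$ splits as $\fh_0\oplus\fh_0^\perp$; because $X_S$ is only defined modulo maps $S\to\fh_0$ (Corollary \ref{cor:form-inf-def}), one may choose the representative $X_S:S\to\fh_0^\perp$. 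Then $(A\cdot X_S)(s)=[A,X_S(s)]-X_S(As)\in\fh_0^\perp$ for every $A\in\fh_0$, while $\theta(A,s)\in\fh_0$ by construction \eqref{eq:comp1}. The identity $\theta(A,s)=(A\cdot X_S)(s)$ therefore lives in $\fh_0\cap\fh_0^\perp=0$, so \emph{both} sides vanish: $\widehat\theta=0$ and $A\cdot X_S=0$, the latter being exactly the $\fh_0$-equivariance of $X_S$. Without this step the brackets \eqref{eq:bracketsfinal} are not yet justified. (Relatedly, your remark about ``absorbing the exact contributions $\partial X_S$'' is off: $\partial X_S$ is \emph{not} a coboundary in $C^{1,2}(\fm,\fh)$ unless $X_S$ lands in $\fh_0$, which is precisely why its $\fh_0^\perp$-part survives as a genuine deformation.)
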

\begin{proof}
Since $\alpha=\beta=\gamma=0$ up to isomorphisms thanks to Propositions \ref{prop:eleven} and \ref{prop:siamoallafine}, the equations \eqref{eq:34}, \eqref{eq:36}, \eqref{eq:42} of filtration degree $3$ reduce to the cocyle conditions on the maps $\tau$ and $\sigma$ studied in \S\ref{subsubH32}. We have $Z^{3,2}(\fm,\fh)=H^{3,2}(\fm,\fh)=0$ due to Proposition \ref{prop:cohomology1h} and Theorem \ref{thm:H32}, thus $\tau=\sigma=0$. Finally $\rho=0$ from either	\eqref{eq:37} or \eqref{eq:43}. This shows that, up to isomorphism, the higher degree components \eqref{eq:comp1}-\eqref{eq:comp4} of the Lie brackets of $F$ all vanish, except for the degree $1$ components $\widehat\epsilon$, $\widehat\varepsilon$, $\widehat\theta$, which are as in Corollary \ref{cor:form-inf-def} and Proposition \ref{prop:stabilizingphi}
for $(\phi,X_S)\in H^{1,2}(\fm,\fh)$. Since $\fh_0$ is compact, we may assume $X_S:S\to\fh_0^\perp$ and the identity $\imath_A\theta=A\cdot X_S$ for all $A\in\fh_0$ decouples, yielding $\theta=0$ and the $\fh_0$-equivariancy of $X_S$.
\end{proof}

\section{Conclusions}\label{Conclusions}

In Theorem \ref{thm:1} we provided the classification of the Spencer cohomology groups $H^{d,2}(\mathfrak{m},\mathfrak{p})$ in all positive $\mathbb Z$-gradings, where $\mathfrak p$ is the $D=11$ Poincaré superalgebra and $\mathfrak m$ its supertranslation ideal. We then studied  the collapsed Hilbert-Poincaré $U$-series of $\mathfrak p$ in detail: they are computed by means of the Molien-Weyl formula and the main result is summarized in Theorem \ref{thm:2}. Theorem \ref{thm:3} is an independent mathematical result regarding the applications of the Molien-Weyl formula for general graded Lie superalgebras. We then studied the integrability of maximally supersymmetric filtered subdeformations of $\fp$, establishing the no-go Theorem \ref{thm:6}
for those subdeformations whose infinitesimal odd direction is generic and nilpotent. Our result can also be regarded as an extension of the fact that all the $D=11$ {\it bosonic} supergravity backgrounds with $32$ Killing spinors feature a non-compact stabiliser (see, e.g., \cite[\S 4.3]{Figueroa-OFarrill:2015rfh}) and underlines the importance of generalizing the analysis performed in \S\ref{sec:deformation-cohomology} to the case where the infinitesimal odd direction is not generic, i.e., it is in the lightlike orbit on $\mathbb P(S)$. Another interesting research line would be to investigate the existence of such kind of odd deformations in lower dimensions, which, to our knowledge, has not been considered in the literature. For example, we plan to perform our analysis for $\mathcal N=1$ $D=4$ supergravity. In that case, the analysis of the Hilbert-Poincaré series would be much easier and the integrability of the corresponding odd cocycles more tractable too.
Table \ref{SummarizingHPMW} provides some information also on cocycles at form number $q>2$. One could expect Spencer cocycles of this kind, which necessarily exist, to deform the Poincaré superalgebra into strongly homotopy Lie algebras. The relation between these and the supergravity backgrounds is yet to be studied.

Another open question, which requires a better understanding, is the possible relation between the Spencer cocycles in $H^{1,2}(\mathfrak m,\mathfrak p)$ and the spinorial 1-forms in the spectrum of $D=11$ supergravity, appearing when strictifying the L$_3$-algebra to an ordinary Lie superalgebra (named D'Auria-Frè algebra, see \cite{DFd11} for the original derivation and \cite{FDAnew1,FDAnew2,Ravera:2021sly,FDAnew3} for the group-theoretical role played by the extra 1-form spinors). 
In particular, it would be interesting to understand the relationship with the FDAs approach and compare the normalization conditions on Cartan superconnections recently prescribed from Spencer cohomology in \cite{KST-TG} with those traditionally obtained in the rheonomic approach via Lagrangian principles.
Similar additional spinorial contributions also appeared in
\cite{Howe, CanLech}, where it was argued that they can be eliminated in a conformal framework. 
This claim has a very neat interpretation in terms of Spencer cohomology (see our Remark \ref{rem:conformalextension}), but we shall stress that those contributions are not trivial in a non-conformal framework and thus further investigations are desirable.

At last, on a different note, the techniques developed in \cite{CGNR} in the supergravity context and refined here in \S\ref{sec2} could be useful within the gauged supergravity context and, more specifically, in the duality covariant approach \cite{Samtleben:2008pe,Trigiante:2016mnt}. The latter indeed requires the introduction of higher forms and an associated \emph{tensor hierarchy} \cite{deWit:2005hv,deWit:2005ub,deWit:2007kvg,deWit:2008ta}, naturally encoded in a FDA, which could more naturally be understood in superspace through the Molien-Weyl integration formula.

\section*{Acknowledgments}
{We are grateful to the Galileo Galilei Institute for Theoretical Physics and its director Stefania De Curtis for the very kind hospitality where part of this work has been completed.} Andrea Santi would like to thank José Figueroa-O'Farrill for his very kind help with Mathematica in \S \ref{subsubsec:final12}. The work of C.A.C. is supported by the Excellence Cluster Origins of the DFG under Germany’s Excellence Strategy EXC-2094 390783311. Pietro Antonio Grassi is partially supported by research funds of Universit\`a del Piemonte Orientale. Andrea Santi is supported by a tenure-track RTDB position and acknowledges the MIUR Excellence Department Project MatMod@TOV
awarded to the Department of Mathematics, University of Rome Tor Vergata, CUP E83C23000330006.  Ruggero Noris acknowledges
GAČR grant EXPRO 20-25775X for financial support. Lucrezia Ravera is supported by an RTDA position at Politecnico di Torino. This article/publication was also supported by the ``National Group for
Algebraic and Geometric Structures, and their Applications'' GNSAGA-INdAM (Italy) and it is
based upon work from COST Action CaLISTA CA21109 supported by COST
(European Cooperation in Science and Technology), 
{\scriptsize\url{https://www.cost.eu}}.

\appendix

\section{Characters and Plethystic Exponentials of \S \ref{sec6}}\label{appchar}

Here we collect the key ingredients to implement the Molien-Weyl formula for the case of the $D=11$ Poincaré superalgebra. The relevant characters are
\begin{align}
\label{11dAa}
\chi_{V}(z) &=1 + \frac{z_5^2}{z_4}+z_1+\frac{z_2}{z_1}+\frac{z_3}{z_2}+\frac{z_4}{z_3}+\frac{1}{z_1}+\frac{z_1}{z_2}+\frac{z_2}{z_3}+\frac{z_3}{z_4}+\frac{z_4}{z_5^2} \,, \\
\chi_{S}(z)&= \frac{z_5 z_1}{z_2}+\frac{z_5 z_1}{z_3}+\frac{z_3 z_5 z_1}{z_2 z_4}+\frac{z_5 z_1}{z_4}+\frac{z_3 z_1}{z_2 z_5}+\frac{z_4 z_1}{z_2 z_5}+\frac{z_4 z_1}{z_3 z_5}+\frac{z_1}{z_5} 
\nonumber \\ &+\frac{z_5}{z_2}+\frac{z_2 z_5}{z_3}+\frac{z_5}{z_3}+\frac{z_2 z_5}{z_4}+\frac{z_3 z_5}{z_2 z_4}+\frac{z_3 z_5}{z_4}+\frac{z_5}{z_4}+z_5+\frac{z_2}{z_5}+\frac{z_3}{z_2 z_5}
\nonumber \\ &+
\frac{z_3}{z_5}+\frac{z_4}{z_2 z_5}+\frac{z_2 z_4}{z_3 z_5}+\frac{z_4}{z_3 z_5}+\frac{z_4}{z_5}
+\frac{1}{z_5}+\frac{z_2 z_5}{z_3 z_1}+\frac{z_2 z_5}{z_4 z_1}+\frac{z_3 z_5}{z_4 z_1}
\nonumber \\ &+
\frac{z_5}{z_1}+\frac{z_2}{z_5 z_1}+\frac{z_3}{z_5 z_1}+\frac{z_2 z_4}{z_3 z_5 z_1}+\frac{z_4}{z_5 z_1} \,, \label{11dAb}\\
\chi_{\mathfrak{so}(V)}(z)&= \frac{z_1^2}{z_2}+\frac{z_5^2 z_1}{z_2 z_4}+\frac{z_5^2 z_1}{z_4}+\frac{z_3 z_1}{z_2}+\frac{z_3 z_1}{z_2^2}+\frac{z_4 z_1}{z_2 z_3}+\frac{z_4 z_1}{z_3}+\frac{z_1}{z_2}+\frac{z_2 z_1}{z_3}+\frac{z_1}{z_3}+\frac{z_3 z_1}{z_2 z_4}+\frac{z_3 z_1}{z_4}\nonumber \\
&+\frac{z_4 z_1}{z_2 z_5^2}+\frac{z_4 z_1}{z_5^2}+z_1+\frac{z_5^2}{z_3}+\frac{z_3 z_5^2}{z_2 z_4}+\frac{z_2 z_5^2}{z_3 z_4}+\frac{z_5^2}{z_4}+\frac{z_3 z_5^2}{z_4^2}+z_2+\frac{z_3}{z_2}+\frac{z_4}{z_2}+\frac{z_4}{z_3}+\frac{z_2 z_4}{z_3^2} \nonumber \\
&+\frac{1}{z_2}+\frac{z_2}{z_3}+\frac{z_3^2}{z_2 z_4}+\frac{z_2}{z_4}+\frac{z_3}{z_4}+\frac{z_4^2}{z_3 z_5^2}+\frac{z_3}{z_5^2}+\frac{z_3 z_4}{z_2 z_5^2}+\frac{z_2 z_4}{z_3 z_5^2}+\frac{z_4}{z_5^2}+5+\frac{z_2 z_5^2}{z_4 z_1}+\frac{z_5^2}{z_4 z_1}+\frac{z_2}{z_1}
\nonumber \\
&+\frac{z_3}{z_2 z_1}+
\frac{z_3}{z_1}+\frac{z_2 z_4}{z_3 z_1}+\frac{z_4}{z_3 z_1}+\frac{z_2^2}{z_3 z_1}+\frac{z_2}{z_3 z_1}+\frac{z_2 z_3}{z_4 z_1}+\frac{z_3}{z_4 z_1}+\frac{z_2 z_4}{z_5^2 z_1}+\frac{z_4}{z_5^2 z_1}+\frac{1}{z_1}+\frac{z_2}{z_1^2}\,.\label{11dAc}
\end{align}
Using \eqref{MW_G}-\eqref{MW_J} and \eqref{11dAa}-\eqref{11dAb}, one can compute the associated plethystic exponentials
\begin{eqnarray*}
\label{11dB}
PE[\chi_{V}(z) u] &=& \prod_{\lambda \in \Delta_V} (1 - u z^{\lambda}) =
\left(1-u\right) \left(1-\frac{u}{z_1}\right) \left(1-u z_1\right) \left(1-\frac{u z_1}{z_2}\right) \left(1-\frac{u z_2}{z_1}\right) \nonumber \\
&&
 \left(1-\frac{u z_2}{z_3}\right) \left(1-\frac{u z_3}{z_2}\right) \left(1-\frac{u z_3}{z_4}\right) \left(1-\frac{u z_4}{z_3}\right) 
 \left(1-\frac{u z_4}{z_5^2}\right) \left(1-\frac{u z_5^2}{z_4}\right) \nonumber 
\,,\\ 
\end{eqnarray*}
\begin{eqnarray*}
\label{11dAa2}
PE[\chi_{S}(z) t] &=& \frac{1}{\prod_{\lambda \in \Delta_S} (1 - t z^\lambda)} = 
\frac{z_1^8 z_2^8 z_3^8 z_4^8 z_5^{16}}{{\rm Den}} \nonumber \\
{\rm Den} &=& 
\left(t-z_5\right) \left(t z_1-z_5\right) \left(t z_2-z_5\right) \left(t z_3-z_5\right) \left(t z_4-z_5\right) \left(t z_5-1\right) \left(t z_5-z_1\right) 
\nonumber \\
&&
\left(t z_5-z_2\right) \left(t z_5-z_3\right) \left(t z_5-z_4\right) \left(t z_2-z_1 z_5\right) \left(t z_3-z_1 z_5\right) \left(t z_4-z_1 z_5\right) 
\nonumber \\
&&
\left(t z_1 z_5-z_2\right) \left(t z_1 z_5-z_3\right) \left(t z_1 z_5-z_4\right) \left(t z_3-z_2 z_5\right) \left(t z_1 z_3-z_2 z_5\right) 
\nonumber \\
&&
\left(t z_4-z_2 z_5\right) \left(t z_1 z_4-z_2 z_5\right) \left(t z_2 z_5-z_3\right) \left(t z_2 z_5-z_1 z_3\right) \left(t z_2 z_5-z_4\right) 
\nonumber \\
&&
\left(t z_2 z_5-z_1 z_4\right) \left(t z_4-z_3 z_5\right) \left(t z_1 z_4-z_3 z_5\right) \left(t z_2 z_4-z_3 z_5\right) \left(t z_3 z_5-z_4\right) 
\nonumber \\
&&
\left(t z_3 z_5-z_1 z_4\right) \left(t z_3 z_5-z_2 z_4\right) \left(t z_2 z_4-z_1 z_3 z_5\right) \left(t z_1 z_3 z_5-z_2 z_4\right)
\,.~~~~~~
\end{eqnarray*}
Another ingredient needed for the implementation of the Molien-Weyl formula is the Haar measure $\diff\mu|_T=\phi\diff\nu$, associated with 
the maximal compact subgroup $K$ of $G$ as in \eqref{MW_D}-\eqref{MW_E}:
\begin{align*}
\label{11dC}
\diff \mu|_{T}& = \frac{1}{z_1^8 z_2^8 z_3^8 z_4^8 z_5^9}
\left(1-z_1\right) \left(1-z_2\right) \left(z_1-z_2\right) \left(z_1^2-z_2\right) \left(z_1-z_3\right) \left(z_2-z_3\right) \left(z_1 z_2-z_3\right) \left(z_2-z_1 z_3\right)  \nonumber \\
&\times 
\left(z_2^2-z_1 z_3\right) \left(z_2-z_4\right) \left(z_3-z_4\right) \left(z_1 z_3-z_4\right) \left(z_3-z_1 z_4\right) \left(z_2 z_3-z_1 z_4\right) \left(z_1 z_3-z_2 z_4\right) 
 \nonumber \\
&\times 
\left(z_3^2-z_2 z_4\right) \left(z_3-z_5^2\right) \left(z_4-z_5^2\right) \left(z_1 z_4-z_5^2\right) \left(z_4-z_1 z_5^2\right) \left(z_2 z_4-z_1 z_5^2\right) \left(z_1 z_4-z_2 z_5^2\right) 
 \nonumber \\
&\times 
\left(z_3 z_4-z_2 z_5^2\right) \left(z_2 z_4-z_3 z_5^2\right) \left(z_4^2-z_3 z_5^2\right) {\frac{1}{(2\pi i)^5} \diff z_1 \diff z_2 \diff z_3 \diff z_4 \diff z_5} \,,
\end{align*}
where the factor $\frac{1}{z_1^8 z_2^8 z_3^8 z_4^8 z_5^9}$ is an overall  contribution arising from the Weyl weight function $\phi$ and the normalized Haar measure $\displaystyle\diff\nu$ on $T$.


\begin{thebibliography}{99}

\bibitem{deMedeiros:2016srz}
P.~de Medeiros, J.~Figueroa-O'Farrill and A.~Santi,
{\it Killing superalgebras for Lorentzian four-manifolds},
JHEP \textbf{06}, 106 (2016).



\bibitem{deMedeiros:2018ooy}
P.~de Medeiros, J.~Figueroa-O'Farrill and A.~Santi,
{\it Killing superalgebras for Lorentzian six-manifolds},
J. Geom. Phys. \textbf{132}, 13-44 (2018).


\bibitem{Figueroa-OFarrill:2015rfh}
J.~Figueroa-O'Farrill and A.~Santi,
{\it Spencer cohomology and 11-dimensional supergravity},
Commun. Math. Phys. \textbf{349}, 627--660 (2017).

\bibitem{Figueroa-OFarrill:2015tan}
J.~Figueroa-O'Farrill and A.~Santi,
{\it Eleven-dimensional supergravity from filtered subdeformations of the Poincar\'e superalgebra},
J. Phys. A \textbf{49}, 295204 (2016).

\bibitem{Figueroa-OFarrill:2016khp}
J.~Figueroa-O'Farrill and A.~Santi,
{\it On the algebraic structure of Killing superalgebras},
Adv. Theor. Math. Phys. \textbf{21}, 1115-1160 (2017).


\bibitem{Santi:2019kpx}
A.~Santi,
{\it Remarks on highly supersymmetric backgrounds of 11-dimensional supergravity},
 Proceedings of the Abel
 Symposium 2019 ``Geometry, Lie Theory and Applications'', Springer series ``Abel Symposia'', 253--277 (2022).



\bibitem{Santi:2010kb}
A.~Santi and A.~Spiro,
{\it Super-Poincare algebras, space-times and supergravities (I)},
Adv. Theor. Math. Phys. \textbf{16}, 1411-1441 (2012).

\bibitem{Santi:2011mc}
A.~Santi and A.~Spiro,
{\it Super-Poincare algebras, space-times and supergravities (II)},
J. Math. Phys. \textbf{53}, 032505 (2012).



\bibitem{MR4316462}
A. Beckett, J. Figueroa-O'Farrill, {\it Killing superalgebras for lorentzian five-manifolds},
JHEP {\bf 7} (2021).

\bibitem{MR4506436}
J. Figueroa-O'Farrill, G. Franchetti, 
{\it Kaluza-Klein reductions of maximally supersymmetric five-dimensional Lorentzian spacetimes},
Classical Quantum Gravity {\bf 39}, 40 pp (2022).

\bibitem{Cheng-Kac}
S.~J.~Cheng and V.~G.~Kac,
{\it Generalized Spencer cohomology and filtered deformations of $\mathbb Z$-graded Lie superalgebras},
Adv. Theor. Math. Phys. \textbf{2}, 1141-1182 (1998).

\bibitem{CE}
C. Chevalley and S. Eilenberg,
{\it Cohomology Theory of Lie Groups and Lie Algebras}, 
Trans. Amer. Math. Soc. \textbf{63}, 85-124 (1948).

\bibitem{Nahm}
W. Nahm, {\it Supersymmetries and their representations}, Nucl. Phys.
{\bf B135}, 149--166 (1978).

\bibitem{Cremmer:1978km}
E.~Cremmer, B.~Julia and J.~Scherk,
{\it Supergravity Theory in Eleven-Dimensions},
Phys. Lett. B \textbf{76}, 409-412 (1978).

\bibitem{KST-TG}
B.\ Kruglikov, A.\ Santi, D.\ The
{\it Symmetries of supergeometries related to nonholonomic superdistributions},
 Transform. Groups {\bf 29} (2024), 179--229.


\bibitem{PVN-Z2}
P.\ van\ Nieuwenhuizen, {\it Unitarity of supergravity and $\mathbb Z_2$ or $\mathbb Z_2\times \mathbb Z_2$ or $\mathbb Z_2\times \mathbb Z_2\times\mathbb Z_2$ gradings of gauge and ghost fields}, J. Math. Phys. {\bf 21}, 2562--2566 (1980).


\bibitem{DerksenKemper}
H.~Derksen and G.~Kemper 
{\it Computational Invariant Theory}, New York: Springer (2002).

\bibitem{procesi}
C. Procesi, 
{\it Lie groups: An approach through invariants and representations},
Universitext. New York, NY: Springer. (2007). 

\bibitem{CGNR}
C.\ A.\ Cremonini, P.\ A.\ Grassi, R.\ Noris, L.\ Ravera, {\it Supergravities and Branes from Hilbert-Poincaré Series},  JHEP {\bf 12},  088 (2023).

\bibitem{sullivan}
D. Sullivan, {\it Infinitesimal computations in topology}, 
Publications Math\'ematiques de l'IHES {\bf 47}, 269-331 (1977).

\bibitem{CDF}
L.~Castellani, R.~D'Auria and P.~Frè,
{\it Supergravity and superstrings: A Geometric perspective}, Vol. 1-3, Published in Singapore: World Scientific, 1991.

\bibitem{FDAdual1}
L.~Castellani and A.~Perotto,
{\it Free differential algebras: Their use in field theory and dual formulation},
Lett.\ Math.\ Phys.\  {\bf 38}, 321 (1996).








\bibitem{gm3}
L.~Castellani, P.~Frè, F.~Giani, K.~Pilch and P.~van Nieuwenhuizen,
{\it Beyond $d=11$ Supergravity and Cartan Integrable Systems},
Phys.\ Rev.\ D {\bf 26}, 1481 (1982).

\bibitem{gm13}
R.~D'Auria, P.~Frè and T.~Regge,
{\it Graded Lie Algebra Cohomology and Supergravity},
Riv.\ Nuovo Cim.\  {\bf 3N12}, 1 (1980).

\bibitem{DFd11}
R.~D'Auria and P.~Frè,
{\it Geometric Supergravity in d = 11 and Its Hidden Supergroup},
Nucl. Phys. B \textbf{201} (1982), 101-140
[erratum: Nucl. Phys. B \textbf{206} (1982), 496]

\bibitem{Fia}
A.\ Fialowski, {\it An example of formal deformations of Lie algebras}, in Deformation Theory of Algebras and Structures and Applications, Kluwer 1988, 375--401.

\bibitem{Pouliot:1998yv}
P.~Pouliot,
{\it Molien function for duality},
JHEP \textbf{01}, 021 (1999).

\bibitem{Weyl} H.~Weyl, 
{\it Classical Groups: their Invariants and Representations}, 
Princeton University Press (1997).

\bibitem{Feng:2007ur}
B.~Feng, A.~Hanany and Y.~H.~He,
{\it Counting gauge invariants: The Plethystic program},
JHEP \textbf{03}, 090 (2007).

\bibitem{Hanany:2008sb}
A.\ Hanany, N.\ Mekareeya and G.\ Torri,
{\it The Hilbert Series of Adjoint SQCD},
Nucl. Phys. B \textbf{825}, 52-97 (2010).

\bibitem{Bin}
B.\ Binegar, {\it Cohomology and deformations of Lie superalgebras}, Lett. Math. Phys. {\bf 12} (1986), no. 4, 301--308.

\bibitem{Lei}
D.\ A.\ Leites, {\it Cohomology of Lie superalgebras}, Funkcional. Anal. i Prilozen. {\bf 9} (1975), no. 4, 75--76.

\bibitem{MR3255456}
A.~Altomani and A.~Santi, {\it Classification of maximal transitive prolongations of super-{P}oincaré algebras}, Adv. Math. {\bf 265}, 60--96 (2014). 

\bibitem{Ga}
A.\ S.\ Galaev, {\it Irreducible complex skew-Berger algebras}, Differential Geom. Appl. {\bf 27}, 743--754, 2009.




\bibitem{FDAnew1} 
L.~Andrianopoli, R.~D'Auria and L.~Ravera,
{\it Hidden Gauge Structure of Supersymmetric Free Differential Algebras},
JHEP {\bf 1608}, 095 (2016).

\bibitem{FDAnew2}
L.~Andrianopoli, R.~D'Auria and L.~Ravera,
{\it More on the Hidden Symmetries of 11D Supergravity},
Phys.\ Lett.\ B {\bf 772}, 578 (2017).



\bibitem{Ravera:2021sly}
L.~Ravera,
{\it On the~Hidden Symmetries of~ \(D=11\) Supergravity},
Springer Proc. Math. Stat. \textbf{396} (2022), 211-222.

\bibitem{FDAnew3}
L.~Ravera,
{\it Hidden role of Maxwell superalgebras in the free differential algebras of D = 4 and D = 11 supergravity},
Eur. Phys. J. C \textbf{78}, 211 (2018). 

\bibitem{CanLech}
A.\ Candiello, K.\ Lechner, {\it Duality in supergravity theories},
Nuclear Physics B {\bf 412}, 479--501 (1994).

\bibitem{Howe}
P.\ S.\ Howe, {\it Weyl superspace},
Phys. Lett. B {\bf 415}, 149--155 (1997).






\bibitem{Samtleben:2008pe}
H.~Samtleben,
{\it Lectures on Gauged Supergravity and Flux Compactifications},
Class. Quant. Grav. \textbf{25} (2008), 214002.

\bibitem{Trigiante:2016mnt}
M.~Trigiante,
{\it Gauged Supergravities},
Phys. Rept. \textbf{680} (2017), 1-175.




\bibitem{deWit:2005hv}
B.~de Wit and H.~Samtleben,
{\it Gauged maximal supergravities and hierarchies of nonAbelian vector-tensor systems},
Fortsch. Phys. \textbf{53} (2005), 442-449.

\bibitem{deWit:2005ub}
B.~de Wit, H.~Samtleben and M.~Trigiante,
{\it Magnetic charges in local field theory},
JHEP \textbf{09} (2005), 016.

\bibitem{deWit:2007kvg}
B.~de Wit, H.~Samtleben and M.~Trigiante,
{\it The Maximal D=4 supergravities},
JHEP \textbf{06} (2007), 049.


\bibitem{deWit:2008ta}
B.~de Wit, H.~Nicolai and H.~Samtleben,
{\it Gauged Supergravities, Tensor Hierarchies, and M-Theory},
JHEP \textbf{02} (2008), 044.


\end{thebibliography}
\end{document}